\title[Nearly Optimal Committee Selection For Bias Minimization]{Nearly Optimal Committee Selection For Bias Minimization}
\author{Yang Cai}
\email{yang.cai@yale.edu}
\affiliation{%
    \institution{Yale University}
    \streetaddress{51 Prospect Street}
    \city{New Haven}
    \state{Connecticut}
    \country{USA}
    \postcode{06511-8937}
}
\author{Eric Xue}
\email{ex3782@princeton.edu}
\affiliation{%
    \institution{Princeton University}
    \streetaddress{35 Olden Street}
    \city{Princeton}
    \state{New Jersey}
    \country{USA}
    \postcode{08540-5233}
}
\begin{abstract}
We study the model of metric voting initially proposed by \citet{Feldman_Mansour_Nisan_Oren_Tennenholtz_2020}.
In this model, experts and candidates are located in a metric space, and each candidate possesses a quality that is independent of her location.
An expert evaluates each candidate as  the candidate's quality less the distance between the candidate and the expert in the metric space. The expert votes for her favorite candidate. 
Naturally, the expert prefers candidates that are ``similar'' to herself, i.e., close to her location in the metric space, thus creating bias in the vote. 
The goal is to select a voting rule and a committee of experts to mitigate the bias. 
More specifically, given $m$ candidates, what is the minimum number of experts needed to ensure that the voting rule selects a candidate whose quality is at most $\varepsilon$ worse than the best one?
    
 Our first main result is a new way to select the committee using exponentially less experts compared to the method proposed in~\citet{Feldman_Mansour_Nisan_Oren_Tennenholtz_2020}.
Our second main result is a novel construction that  substantially improves the lower bound on the committee size. 
Indeed, our upper and lower bounds match in terms of $m$, the number of candidates, and $\varepsilon$, the desired accuracy, for general convex normed spaces, and differ by a multiplicative factor that only depends on the dimension of the underlying normed space but is independent of other parameters of the problem. 
We further extend the nearly matching upper and lower bounds to the setting in which each expert returns a ranking of her top $k$ candidates and we wish to choose $\ell$ candidates with cumulative quality at most $\varepsilon$ worse than that of the best set of $\ell$ candidates, settling an open problem of \citet{Feldman_Mansour_Nisan_Oren_Tennenholtz_2020}. %and another in which there are multiple rounds of voting. In the former setting, we extend the nearly matching matching upper and lower bounds on the number of experts in terms of $m$, $k$, and $\frac{1}{\varepsilon}$ for general convex normed spaces. Our upper and lower bounds are off by a multiplicative factor that only depends on the dimension of the underlying normed space but is independent of other parameters of the problem, i.e., $m$, $k$, and $\frac{1}{\varepsilon}$. An important corollary of our result are nearly tight bounds for the setting studied by \citet{Feldman_Mansour_Nisan_Oren_Tennenholtz_2020} providing an exponential improvement for both their upper and lower bounds.
Finally, we consider the setting where there are multiple rounds of voting. We show that by introducing another round of voting, the number of experts needed to guarantee the selection of an $\varepsilon$-optimal candidate becomes independent of the number of candidates. 
\end{abstract}
\begin{document}

% Title page for title and abstract only.
\maketitle

% Paper body
\section{Introduction}

Preference for those similar to oneself is a natural source of bias in decision-making.
However, it may be problematic when the decision is of public consequence.
For example, consider the setting of academic hiring.
A dean is looking to hire a new faculty member for the computer science department.
Unfortunately, as an expert in another field, the dean cannot properly evaluate the qualifications of the candidates who applied for this position.
Thus, she decides to form a hiring committee composed of experts in the department to advise her decision-making.

However, these experts may be biased in their evaluations.
For example, an expert in one sub-field of computer science, say, operating systems, may implicitly prefer candidates who work in the same sub-field or closely related sub-fields, such as networks or compilers, over candidates who work in less related sub-fields, such as algorithms or human-computer interaction.
This preference may distort the expert's perception of a candidate's quality, despite the fact that the sub-field of computer science in which one works has little to do with whether one is qualified as a computer scientist on the whole.
Thus, the dean cannot simply choose the members of the hiring committee arbitrarily: a committee that has too many experts in one sub-field may be biased against candidates from another.
Rather, to mitigate these biases and hire the most qualified candidate, the dean could choose a committee that is diverse in the sub-fields of its members.

Despite this intuition, the question remains:  how exactly should the dean choose a committee of experts and aggregate their preferences in order to choose a candidate that is (nearly) socially optimal?
To study this question more concretely, we consider the model of metric voting initially proposed by \citet{Feldman_Mansour_Nisan_Oren_Tennenholtz_2020}.

In this model, experts and candidates are located in a metric space.
On an intuitive level, the location of an expert or a candidate in this metric space represents her features, e.g., how theoretical her research is, and the distance between an expert and a candidate measures how similar they are to each other.
In addition to a location, each candidate also possesses a quality that is independent of her location in the metric space and that only the experts can perceive.
In particular, the planner cannot observe candidates' qualities.
However, while an expert can accurately perceive the quality of a candidate, she (perhaps unintentionally) distorts and biases her evaluation of the candidate based on their degree of similarity.
More specifically, an expert evaluates a candidate as the distance between them detracted from the candidate's quality.
Thus, the greater the distance between an expert and a candidate, the more the expert underestimates the candidate's quality.
Based on her own evaluations of the candidates, each expert will vote for who she believes is the best candidate.
Importantly, this candidate may not be the true best candidate because each expert's evaluation of a candidate is distorted by her location in the metric space.
With knowledge of these votes, the locations of the experts, and the locations of the candidates, the planner chooses a candidate.
We study how to select a committee of experts and aggregate their preferences so that regardless of the candidates that arrive, the chosen candidate is no more than $\varepsilon > 0$ worse than the best candidate. 
This problem is termed the universal committee problem by \citet{Feldman_Mansour_Nisan_Oren_Tennenholtz_2020}. We also follow their convention and refer to the difference between the quality of the chosen candidate and the best candidate as the \emph{regret}.

For the metric space $([0,1]^d, \norm{\cdot}_p)$,\footnote{While \citet{Feldman_Mansour_Nisan_Oren_Tennenholtz_2020} do not explicit state the metrics for which their upper and lower bounds for the unit hypercube hold, the results stated here can be readily obtained from the ideas presented in their paper.} \citet{Feldman_Mansour_Nisan_Oren_Tennenholtz_2020} demonstrate a voting rule and a universal committee of size $O(m^d/\varepsilon^d)$\footnote{For simplicity, here, and for the remainder of this section, we suppress terms that depend solely on the dimension $d$.} such that for any set of $m$ candidates in $[0,1]^d$, the candidate selected by the voting rule is no more than $\varepsilon$ worse than the best candidate.
Moreover, for the same metric space, they show that any universal committee that guarantees at most $\varepsilon$ regret for any set of $m$ candidates is of size at least $\Omega(\max\{m, 1/\varepsilon^d\})$.
However, these bounds paint a picture of the optimal committee that is far from complete.
In particular, while the upper bound depends simultaneously on the number of candidates $m$ and the desired regret bound $\varepsilon$, the lower bound only depends on these parameters separately.
Moreover, there is an exponential gap in the upper and lower bounds: while the upper bound grows linearly with $m^d$, the lower bound either grows linearly with $m$ or does not grow with $m$ at all.

\subsection{Our Results}
In this paper, we obtain results that tell a nearly complete story of the optimal committee for general convex normed spaces.
More specifically, we show that for general convex normed spaces, there exists a universal committee that guarantees a regret of at most $\varepsilon > 0$ for any $m$ candidates whose size depends linearly on the number of candidates $m$ and inversely on $\varepsilon^d$ and that these dependencies are tight in a strong sense: any universal committee with this regret guarantee is smaller by at most a multiplicative factor that only depends on the dimension of the normed space and is independent of all other parameters of the problem.
To contrast our results with the results of \citet{Feldman_Mansour_Nisan_Oren_Tennenholtz_2020}, we obtain as a corollary to our main results that for $([0,1]^d, \norm{\cdot}_p)$, the smallest universal committee that guarantees at most $\varepsilon$ regret for any $m$ candidates is of size $\Theta(m/\varepsilon^d)$.\footnote{Recall that we are suppressing terms that depend solely on the dimension $d$ of the underlying space.}

The key idea behind the exponential improvement in our upper bound is that a universal committee need not cover the metric space as finely as \citet{Feldman_Mansour_Nisan_Oren_Tennenholtz_2020} suggest. More specifically, \citet{Feldman_Mansour_Nisan_Oren_Tennenholtz_2020} chooses the committee to be an $O(\varepsilon/m)$-cover. We show that 
%Rather than an $O(\varepsilon/m)$-cover,\footnote{\citet{Feldman_Mansour_Nisan_Oren_Tennenholtz_2020} places an $O(\varepsilon/m)$-cover over the set.} 
it suffices to first place an $O(\varepsilon)$-cover on the metric space, then construct a spanning tree over the elements in the cover and choose experts to form an $O(\varepsilon/m)$-cover of the spanning tree.
That is, for $d$-dimensional spaces, our committee covers a \textit{one-dimensional} subspace, while the committee of \citet{Feldman_Mansour_Nisan_Oren_Tennenholtz_2020} covers the entire $d$-dimensional space.
The $O(\varepsilon)$-cover of the metric space prevents any candidate who did not receive a vote from being much better than a candidate who did, while the $O(\varepsilon/m)$-cover of the spanning tree allows us to compare the qualities of faraway candidates without incurring too much error.
Moreover, by placing experts this way, while the size of the committee depends inversely on $\varepsilon^d$, the dependence on the number of candidates $m$ is only linear.
Thus, our committee is significantly sparser than that of \citet{Feldman_Mansour_Nisan_Oren_Tennenholtz_2020}.
See~\Cref{theorem:general-upper-bound} for the formal statement.

To demonstrate a nearly matching lower bound, we show that if a universal committee guarantees $\varepsilon$ regret for any $m$ candidates, then each ball of an $\Omega(\varepsilon)$-packing must contain $\Omega(m)$ experts.
Otherwise, we can place candidates within the ball of radius $\Omega(\varepsilon)$ with too few experts in such a way so that the candidates with high quality under one quality vector are of low quality under another, yet both quality vectors induce identical votes from the experts.
We achieve this phenomenon by placing a number of candidates that is independent of $m$ and $\varepsilon$ on the boundary of the ball per expert in the interior of the ball.
These candidates hide the goings-on within the ball from the view of experts outside of the ball. See \Cref{theorem:general-lower-bound} for the formal statement.

While we present our results for a general dimension $d$, we see the problem as already interesting for small $d$, such as $d=2$ or $d=3$, where $1/\varepsilon^d$ is not too large and assembling a committee of such size may be feasible.
Even for such $d$, the universal committee that leads to our upper bound is already substantially sparser than the universal committee put forth by \citet{Feldman_Mansour_Nisan_Oren_Tennenholtz_2020}: for $d=2$, we obtain a significant quadratic improvement and for $d=3$, we obtain a significant cubic improvement.
Because of their sparseness, we believe that our committees remain potentially feasible for small $d$.
This realization would not be possible without a complete understanding of the optimal dependence on the relevant parameters of the problem.

% We compare our upper and lower bounds in \Cref{theorem:combination} and \Cref{theorem:lp_bounds}. 
We highlight that since the covering and packing numbers are a multiplicative constant to the dimension apart, the gap between our upper and lower bounds is no larger. We extend our nearly matching upper and lower bounds to the setting in which each expert returns a ranking of her top $k$ candidates and we wish to choose $\ell$ candidates with cumulative quality at most $\varepsilon$ worse than that of the best set of $\ell$ candidates, settling an open problem of \citet{Feldman_Mansour_Nisan_Oren_Tennenholtz_2020}.
We show the size of the optimal committee depends inversely on $k$ and linearly on $\ell^d$.
Our main results---Theorems \ref{theorem:general-upper-bound} and \ref{theorem:general-lower-bound}---are stated for general $k$ and $\ell = 1$. 
It is straightforward to generalize the $\ell = 1$ case to general $\ell$, and we give a formal statement and a proof sketch for general $k$ and general $\ell$ in Theorem \ref{theorem:multi-winner} and its subsequent discussion.

In a separate direction, we consider the setting in which there are multiple rounds of voting.
Interestingly, with only two rounds of voting, it is possible to eliminate the dependence of the committee size on the number of candidates $m$.
The key idea here is to first screen the candidates so that the number of candidates to choose from is bounded by a number that is solely dependent on the desired regret bound $\varepsilon$ and the dimension of the underlying space $d$.
Then, we can apply our upper bound for general convex normed spaces to bound the size of the committee that will select from the remaining candidates.
By choosing the screening committee to be sufficiently large, we can ensure that no candidate who does not pass the first round of voting is significantly better than a candidate who does pass. See \Cref{theorem:multi-round-voting} for the formal statement.

% \section{Prior Work}

% \begin{theorem}\label{theorem:feldman-bounds}
% \citep{Feldman_Mansour_Nisan_Oren_Tennenholtz_2020}
% For $[0,1]^d$, there exists a universal committee of size at most $(4m/\varepsilon)^d$ that guarantees a regret of at most $\varepsilon > 0$ for any set of $m$ candidates.
% Moreover, any universal committee for this metric space that guarantees at most $\varepsilon$ regret for any set of $m$ candidates is of size at least $\Omega(\max\{m, (2\varepsilon\sqrt{d})^{-d}\})$.
% \end{theorem}

% \citet{Feldman_Mansour_Nisan_Oren_Tennenholtz_2020} do not explicitly state the metrics under which their results hold.
% Their upper bound holds for $([0,1]^d, \norm{\cdot}_\infty)$ (although their explicit bound might be off by a constant factor) but not for $([0,1]^d, \norm{\cdot}_p)$ for $p < \infty$ (an additional $d^{d/p}$ factor is unavoidable).
% They make explicit use of the Euclidean norm for their lower bound proof, so it's safe to assume that they had this metric in mind for their lower bound result.
% Thus, a more formal statement of their result might be as follows.

% \begin{theorem}
% \citep{Feldman_Mansour_Nisan_Oren_Tennenholtz_2020}
% For $([0,1]^d, \norm{\cdot}_\infty)$, there exists a universal committee of size at most $O((4m/\varepsilon)^d)$ that guarantees a regret of at most $\varepsilon > 0$ for any set of $m$ candidates.
% Any universal committee for $([0,1]^d, \norm{\cdot}_2)$ that guarantees at most $\varepsilon$ regret for any set of $m$ candidates is of size at least $\Omega(\max\{m, (2\varepsilon\sqrt{d})^{-d}\})$.
% \end{theorem}

\section{Related Work}

Similar to \citet{distortion-cardinal-preferences}, \citet{optimal-social-choice-utilitarian}, \citet{metric-voting-1}, and \citet{subset-selection-implicit-utilitarian}, our work adopts a utilitarian view of voting in which utility functions that are unobservable by the planner underlie the preferences of voters.
However, unlike them, the planner's goal is not to maximize the social welfare but to choose a candidate with an approximately optimal quality, and the purpose of the voters is to provide information on the qualities of candidates rather than information on who is most favored.

%\eric{I added some works from the metric voting literature and removed the discussion of \citet{Kempe_2020} (only the first sentence was commented out).}

Among works that adopt a utilitarian view of voting, those that study metric voting \citep{metric-voting-1, metric-voting-2, Skowron_Elkind_2017, munagala-wang, Kempe_2020, halpern} are particularly relevant to our work.
As in our work, voters and candidates are located in a metric space whose existence is unknown to them.
The preferences of voters are induced by the underlying metric: each voter prefers the candidates closer to them in the metric space over those that are further away.
The goal in these works is to choose a candidate that approximately maximizes social welfare.
Unlike this line of work, while voters in our work tend to prefer closer candidates over further ones, the quality of a candidate also factors into each voter's preference.
Moreover, on an interpretive level, the distance between a voter and a candidate represents the voter's implicit bias against the candidate rather than the candidate's social cost.
Finally, our planner is aware of the underlying metric space and wants to locate voters in the metric space in a way that guarantees that a nearly optimal candidate is chosen.

% Other than \citet{Feldman_Mansour_Nisan_Oren_Tennenholtz_2020}, \citet{Kempe_2020} is closest to our work in terms of technique. In particular, for a particular candidate, \citet{Kempe_2020} expresses how bad the candidate can be (with respect to the objective of maximizing social welfare) subject to the preference of voters as a linear program.
% He then bounds how bad the candidate can be via LP duality.
% While we take a similar approach, we are primarily interested in how to select a committee of voters that guarantees a nearly optimal candidate under a given voting rule rather than the voting rule itself.

\section{Preliminaries}\label{section:preliminaries}

% \eric{I'm not sure where to move these definitions.}

\subsection{Model}

% \eric{Some notation changes and some additional terminology.}

We study a generalized version of the model of voting under metric preferences proposed in \citet{Feldman_Mansour_Nisan_Oren_Tennenholtz_2020}.
In this model, candidates of varying quality and experts are located in a metric space. 
Experts evaluate each candidate as a function of the candidate's quality and the distance between them according to the associated metric.
Let $(\Theta ,d(\cdot, \cdot))$ denote the metric space, and let $[m] \coloneqq \{1, \dots, m\}$ denote the set of candidates.
Associate with each candidate $j \in [m]$ a location $c_j \in \Theta$ and a quality $q_j \in \RR$.
We often let $C \coloneqq \{c_1, \dots, c_m\}$ denote the set of candidate locations and $q = (q_1, \dots, q_m) \in \RR^m$ denote the vector of candidate qualities.

To evaluate these candidates, we select a set of experts, denoted $[n] \coloneqq\{1, \dots, n\}$.
For each expert $i \in [n]$, let $e_i$ denote her location in the metric space.
We often let $E \coloneqq \{e_1, \dots, e_n\}$ denote the locations of the experts.
In the model proposed by \citet{Feldman_Mansour_Nisan_Oren_Tennenholtz_2020}, expert $i$ votes for candidate $j$ if and only if $j$ maximizes the perceived quality $q_j - d(e_i, c_j)$.
To generalize this model, we consider what happens when each expert returns a ranking of her top $k$ candidates.
For each expert $i$, let $\succ_i^k$ denote her top $k$ ranking.
For $h \in [k]$, let ${\succ_i^k}(h) \in \arg\max_{j \in [m] \setminus \{{\succ_i^k}(\ell)\}_{\ell = 1}^{h-1}} q_j - d(e_i, c_j)$ denote expert $i$'s $h$-th most preferred candidate. Note that we allow arbitrary tie-breaking rules.
We say that expert $i$ prefers candidate $j$ over candidate $j'$ and write $j \succ_i^k j'$ if $q_j - d(e_i, c_j) \geq q_{j'} - d(e_i, c_{j'})$ and $j = {\succ_i^k}(h)$ for some $h \in [k]$.

We often use $\succ^k \coloneqq \{\succ_i^k\}_{i \in [n]}$ to refer to the set of reported rankings and $V(\succ^k) \coloneqq \{{\succ_i^k}(1)\}_{i \in [n]}$ to refer to the set of candidates who rank first for some expert.
We say that $\hat{q} \in \RR^m$ is \emph{consistent} with $\succ^k$ given $C \coloneqq \{c_j\}_{j \in [m]}$ and $E \coloneqq \{e_i\}_{i \in [n]}$ if for all $i, j, j'$, $j \succ_i^k j' \implies \hat{q}_j - d(e_i, c_j) \geq \hat{q}_{j'} - d(e_i, c_{j'})$.
Conversely, we say that $\succ^k$ is \emph{inducible} given $C$ and $E$ if there exists $\hat{q} \in \RR^m$ consistent with $\succ^k$.

We aggregate the experts' top $k$ rankings and select a candidate according to a voting rule $f$ that maps candidate locations, expert locations, and top $k$ rankings to a candidate in $[m]$.\footnote{We also consider voting rules that select $\ell \leq k$ candidates in Section . Note that $\ell \leq k$ is the right interval to consider: if we hired $\ell > k$ candidates, then we would have no guarantees on their qualities. Consider two worlds with $\ell+1$ candidates. In both worlds, the first $\ell - 1 \geq k$ candidates possess qualities that are much higher than the $\ell$-th and the $(\ell+1)$-st candidates, so each expert will return the same $k$ candidates, the two worlds will be indistinguishable, and the same $\ell$ candidates will be chosen. However, in one world, have the $\ell$-th candidate be much better than the $(\ell+1)$-st candidate, and in the other world, have the opposite be true. Since we are hiring $\ell$ candidates, one of these two candidates will be chosen, so in one of the worlds, the worst candidate by far will be chosen.}
We would like to design a voting rule that always selects the most qualified candidate, but unfortunately, no such rule exists.
Thus, \citet{Feldman_Mansour_Nisan_Oren_Tennenholtz_2020} design a voting rule that minimizes regret over all quality vectors that are consistent with the experts' votes.
Here, the regret of choosing candidate $j^*$ under quality vector $q$ is defined as $\max_j q_j - q_{j^*}$.

\subsection{$k$-Approval Ballots}

Another natural generalization of the model proposed by \citet{Feldman_Mansour_Nisan_Oren_Tennenholtz_2020} is to have each expert return an (unranked) set of her top $k$ candidates (i.e., $k$-approval ballots) instead of a ranking of them.
More formally, each expert $i$ returns $\{{\succ_i}(h)\}_{h \in [k]}$ instead of $\succ_i$.
We focus on the setting in which each expert returns a ranking because if each expert only returns a set of $k$ candidates, then there exists a set of candidates such that every committee incurs constant regret.
That is, increasing the size of the committee will not decrease the incurred regret.

Suppose each expert returns $k=2$ approval ballots, and consider the following set of candidates on the unit interval equipped with the Euclidean metric. 
Locate candidates at 0, 1/2, and 1. In the first world, the qualities of the candidates at 0 and 1 are 1/2, while the quality of the candidate at 1/2 is 0. 
In the second, the qualities are flipped: the quality of the candidates at the extremes are 0, while the quality of the candidate in the middle is 1/2. 
Suppose each expert breaks ties by choosing the candidate closest to her (and if there are still ties, then she chooses a candidate arbitrarily). 
In both worlds, any expert in $[0,1/2]$ will return the candidates at 0 and 1/2, while any expert in $[1/2,1]$ will return the candidates at 1/2 and 1, so a voting rule cannot distinguish between the two worlds and chooses the same candidate in each, incurring a regret of 1/2 in one of the worlds regardless of how the committee is designed.\footnote{If each expert informs the planner of her favorite and second favorite candidates, then the planner can distinguish between the two worlds, e.g., in the first world, each expert in $[0,1/2]$ prefers the candidate at 0 over the candidate at 1/2; in the second world, their preferences are flipped.} 
It is not hard to extend this example to $k \geq 2$ approval ballots: duplicate the candidate at 1/2 $k-2$ times; in both worlds, each expert will vote for the $k-1$ candidates at 1/2 and the candidate at the extreme closest to her. 
Moreover, if our goal were to hire $2 \leq \ell \leq k$ candidates instead of just one, then a similar example shows that we incur a cumulative regret of 1/2: locate $\ell - 2$ candidates of extremely high quality anywhere in the unit interval and duplicate the candidate at 1/2 $k-\ell$ times; in both worlds, each expert will vote for the $\ell - 2$ candidates of high quality, the $k-\ell+1$ candidates located at 1/2, and the candidate at the extreme closest to her.
A voting rule must pick the $\ell - 2$ candidates of high quality to have low regret.
The remaining two candidates will be among the candidates at 0, 1/2, and 1, and at least one of two will cause the voting rule to incur a regret of 1/2 in at least one of the worlds.

\subsection{Generalized Minimal Regret Voting Rule}

We generalize the minimal regret voting rule proposed by \citet{Feldman_Mansour_Nisan_Oren_Tennenholtz_2020}.
Note that for any given candidate locations $C \coloneqq \{c_j\}_{j \in [m]}$, expert locations $E \coloneqq \{e_i\}_{i \in [n]}$, and rankings $\succ^k = \{\succ^k_i\}_{i \in [n]}$, there may be infinite quality vectors consistent with $\succ^k$.
More specifically, any $\hat{q} \in \RR^m$ such that $\hat{q}_j - d(e_i, c_j) \geq \hat{q}_{j'} - d(e_i, c_{j'})$ for all $i \in [n]$ and $j, j' \in [m]$ such that $j \succ_i^k j'$ is consistent with $\succ^k$.
Since we cannot distinguish between these quality vectors when we only have access to $\succ^k$, if we wish to minimize our regret, we must select the candidate that has the smallest regret when the underlying qualities are the worst possible for selecting her.
Thus, the minimal regret voting rule chooses $g \in [m]$ that minimizes the maximum of
\begin{equation*}\label{primal_lp}
\begin{array}{ll@{}ll}
\max_{\hat{q}}  & & \hat{q}_h - \hat{q}_g &  \\
\text{s.t.} & & \hat{q}_j - d(e_i, c_j) \geq \hat{q}_{j'} - d(e_i, c_{j'}) & \forall\: i \in [n], j, j' \in [m] \text{ s.t. } j \succ_i^k j'
\end{array}\tag{LP 1}
\end{equation*}
over all $h \not= g$.
We will refer to such a candidate as the minimal regret candidate.
Taking the dual of \ref{primal_lp} reveals that the optimal primal objective is the length of the shortest $g \to h$ path in the following weighted graph:

\begin{definition}\label{def:minimal regret voting graph}
Given the locations of $m$ candidates $C \coloneqq \{c_j\}_{j \in [m]}$, the locations of $n$ experts $E \coloneqq \{e_i\}_{i \in [n]}$, and their reported rankings $\succ^k \coloneqq \{\succ_i^k\}_{i \in [n]}$, define $G(C, E, \succ^k)$ to be the weighted directed graph with candidates as vertices and an edge from one candidate $j$ to another candidate $j'$ with weight $\min_{i : j \succ_i^k j'} d(e_i, c_{j'}) - d(e_i, c_{j})$ if and only if there exists an expert who prefers $j$ over $j'$.
Formally,
\[
    G(C, E, \succ^k) \coloneqq \left([m], \{(j, j') \in [m]^2 : \exists\: i \text{ s.t. } j \succ_i^k j'\}, w \right) \text{ where } w(j, j') \coloneqq \min_{i : j \succ_i^k j'} d(e_i, c_{j'}) - d(e_i, c_{j})
\]
\end{definition}

\begin{lemma}\label{lemma:reduction}
Let $C:= \{c_1,\ldots, c_m\}$ denote the set of locations for the $m$ candidates, $E:= \{e_1,\ldots,e_n\}$ denote the set of locations for the $n$ experts, and $\succ^k \coloneqq \{\succ_1^k, \dots, \succ_n^k\}$ denote an inducible set of rankings.
For any candidates $h\neq g$, the optimal objective value of the corresponding \ref{primal_lp} is $d_{G}(g,h)$ where $d_G(j, j')$ denotes the length of the shortest $j \to j'$ path in $G(C, E, \succ^k)$. Note that if there is no path from $g$ to $h$, then $d_{G}(g,h)=+\infty$. 
% \yang{I think it may be more useful to directly prove the claim above.}
\end{lemma}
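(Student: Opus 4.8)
The plan is to show that the linear program~\ref{primal_lp} is equivalent, via LP duality, to a shortest-path computation in the graph $G(C, E, \succ^k)$. First I would observe that the objective of~\ref{primal_lp} is the difference $\hat{q}_h - \hat{q}_g$, which is translation-invariant in $\hat{q}$: adding a constant to every $\hat{q}_j$ changes neither the objective nor feasibility. This lets me normalize, say, by fixing $\hat{q}_g = 0$, so that the LP becomes the problem of finding feasible ``potentials'' $\hat{q}_j$ maximizing $\hat{q}_h$ subject to the difference constraints.

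\medskip
Next I would rewrite each consistency constraint $\hat{q}_j - d(e_i, c_j) \geq \hat{q}_{j'} - d(e_i, c_{j'})$ in the standard form for difference constraints, namely $\hat{q}_{j'} - \hat{q}_j \leq d(e_i, c_{j'}) - d(e_i, c_j)$. For a fixed ordered pair $(j, j')$ with $j \succ_i^k j'$ for at least one expert $i$, the tightest such constraint over all witnessing experts is $\hat{q}_{j'} - \hat{q}_j \leq \min_{i : j \succ_i^k j'}\bigl(d(e_i, c_{j'}) - d(e_i, c_j)\bigr) = w(j, j')$, which is exactly the edge weight in \Cref{def:minimal regret voting graph}. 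Thus the feasible region is precisely the set of potentials respecting the edge-weight constraints of $G(C, E, \succ^k)$, and maximizing $\hat{q}_h$ (with $\hat{q}_g = 0$) over this region is the canonical LP whose dual is a shortest-path problem.

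\medskip
I would then invoke the standard correspondence between systems of difference constraints and shortest paths: a feasible potential assignment exists and the maximum value of $\hat{q}_h - \hat{q}_g$ equals the length $d_G(g, h)$ of the shortest $g \to h$ path. Concretely, setting $\hat{q}_j \coloneqq d_G(g, j)$ yields a feasible solution achieving $\hat{q}_h = d_G(g, h)$ (feasibility follows because shortest-path distances satisfy the triangle inequality $d_G(g, j') \leq d_G(g, j) + w(j, j')$ along every edge), while the dual direction shows no feasible assignment can exceed this value. When no $g \to h$ path exists, $\hat{q}_h$ is unbounded above along the LP's feasible region, matching the convention $d_G(g, h) = +\infty$.

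\medskip
The main subtlety to handle carefully is the case where $G$ contains a negative-weight cycle, which would make shortest paths—and hence the LP—unbounded or infeasible. Here the hypothesis that $\succ^k$ is \emph{inducible} is essential: inducibility guarantees the existence of at least one consistent $\hat{q}$, so the difference-constraint system is feasible, which is equivalent to $G$ having no negative cycle. I would make this equivalence explicit, since it is what licenses the clean shortest-path characterization and rules out degenerate unboundedness of the objective for pairs that \emph{are} connected. This is the step I expect to require the most care; the rest is a routine application of LP duality for difference constraints.
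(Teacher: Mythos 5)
Your route is the same as the paper's: rewrite consistency as the difference constraints $\hat{q}_{j'} - \hat{q}_j \leq w(j,j')$, use shortest-path potentials for the attainability direction, sum constraints along a shortest path (equivalently, exhibit a dual feasible solution supported on the path) for the upper bound, and use inducibility to rule out negative cycles (the paper's Lemma \ref{lemma:no-negative-cycles}). However, there is a concrete gap in your primal construction. Since $G(C,E,\succ^k)$ only has edges where some expert expresses a top-$k$ preference, in general not every candidate is reachable from $g$, and your assignment $\hat{q}_j \coloneqq d_G(g,j)$ equals $+\infty$ on unreachable vertices, so it is not a valid solution to \ref{primal_lp}. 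Nor can you patch this by giving all unreachable candidates one large common value: edge weights may be negative, so the constraints among unreachable candidates can force nontrivial gradations. The paper resolves this in Lemma \ref{lemma:primal-feasible-solution} via a structural observation your sketch is missing: if a $g \to h$ path exists, then some expert $i$ ranks $g$ among her top $k$, hence every candidate $j$ unreachable from $g$ satisfies $j \succ_i^k g$ (otherwise $g \succ_i^k j$ would make $j$ reachable), so $d_G(j,g) < +\infty$; one may then set $\hat{q}_j = \alpha - d_G(j,g)$ on the unreachable set for sufficiently large $\alpha$ and verify feasibility by casework over constraints within the reachable set $C_g$, within its complement, and from the complement into $C_g$ (the fourth case, an edge leaving $C_g$, cannot occur).

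The same omission affects your unbounded case: you assert that $\hat{q}_h$ is unbounded above when no $g \to h$ path exists, but this requires exhibiting feasible solutions of arbitrarily large objective. The paper does this by taking a consistent $\hat{q}$ (which exists by inducibility) and subtracting $\alpha$ from the coordinates in $C_g$ only; since $g \in C_g$ and $h \notin C_g$, the objective grows by $\alpha$, and feasibility of the shift rests on exactly the fact that no constraint edge leaves $C_g$. You correctly identified negative cycles as a subtlety and handled it as the paper does, but the reachability bookkeeping---which is where most of the paper's actual proof effort lies---is the step your proposal does not cover, and as written the claimed feasible optimal solution fails whenever some candidate is unreachable from $g$.
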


By weak duality, we obtain the following corollary of Lemma \ref{lemma:reduction}.

\begin{corollary}\label{corollary:reduction}
Let $C= \{c_1,\ldots, c_m\}$ denote the set of locations for the $m$ candidates, $q = (q_j)_{j \in [m]}$ denote the true but unknown qualities of the candidates, $E= \{e_1,\ldots,e_n\}$ denote the set of locations for the $n$ experts, and $\succ^k = \{\succ_1^k, \dots, \succ_n^k\}$ denote the reported rankings of the experts.
The regret of choosing candidate $j^* \in [m]$ is
$\max_{j \in [m]} q_j - q_{j^*} \leq \max_{j \in [m]} d_{G}(j^*, j)$
\end{corollary}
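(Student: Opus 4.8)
The plan is to exploit the fact that the true (but unknown) quality vector $q$ is itself a feasible point of \ref{primal_lp}, and then to invoke \Cref{lemma:reduction} to translate the resulting inequality into a statement about shortest paths in $G$. First I would verify feasibility of $q$: by the definition of the preference relation, $j \succ_i^k j'$ holds only when $q_j - d(e_i, c_j) \ge q_{j'} - d(e_i, c_{j'})$. These are precisely the constraints of \ref{primal_lp}, so $q$ is consistent with $\succ^k$ and hence feasible for every instance of the program. As a byproduct, this witnesses that $\succ^k$ is inducible, which is exactly the hypothesis required to apply \Cref{lemma:reduction}.

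Next I would fix an arbitrary candidate $j \in [m]$ and instantiate \ref{primal_lp} with $g = j^*$ and $h = j$. Because $q$ is feasible and the program maximizes $\hat q_h - \hat q_g = \hat q_j - \hat q_{j^*}$, the value $q_j - q_{j^*}$ attained by $q$ is a lower bound on the optimum; phrased dually, weak duality ensures that this feasible primal value is at most the objective of any feasible dual solution, in particular the optimal one. By \Cref{lemma:reduction}, that optimum equals $d_G(j^*, j)$, so $q_j - q_{j^*} \le d_G(j^*, j)$. Taking the maximum over $j \in [m]$ on both sides (noting $q_{j^*}$ is constant in $j$, so $\max_j q_j - q_{j^*} = \max_j (q_j - q_{j^*})$) yields the claimed bound $\max_j q_j - q_{j^*} \le \max_j d_G(j^*, j)$.

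I expect no genuine obstacle here; the only point demanding care is bookkeeping. One must align the orientation of the shortest path (from the chosen candidate $j^*$ to a competitor $j$) with the objective direction of \ref{primal_lp} (maximizing $\hat q_j - \hat q_{j^*}$, i.e.\ $g = j^*$ and $h = j$), and confirm that $q$ satisfies every constraint indexed by the induced preferences. The degenerate cases are harmless: for $j = j^*$ both sides contribute $0$, and for any $j$ unreachable from $j^*$ one has $d_G(j^*, j) = +\infty$, so the inequality holds trivially. Once feasibility of $q$ is established, the corollary follows immediately from \Cref{lemma:reduction} together with the elementary fact that a feasible objective value never exceeds the optimum.
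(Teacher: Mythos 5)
Your proof is correct and follows essentially the same route as the paper, which simply notes that the corollary follows from Lemma \ref{lemma:reduction} by weak duality: the true quality vector $q$ is feasible for \ref{primal_lp} with $g = j^*$ and $h = j$, so $q_j - q_{j^*}$ is bounded by the optimum, which equals $d_G(j^*, j)$. Your additional bookkeeping (verifying feasibility of $q$, inducibility of $\succ^k$, and the degenerate cases $j = j^*$ and $d_G(j^*,j) = +\infty$) just makes explicit what the paper leaves implicit.
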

%\yang{With the new Lemma~\ref{lemma:reduction}, the above equation should have an equal sign. }
%\eric{Since $q$ is only feasible and not necessarily optimal, I think it should still be an inequality?}

By Lemma \ref{lemma:reduction}, the minimal regret candidate is the graph center of $G(C, E, \succ^k)$.
Thus, to compute the minimal regret candidate, we can simply compute all-pairs shortest paths in $G(C, E, \succ^k)$ (rather than solve \ref{primal_lp} for all $g, h \in [m]$) and compute $\arg\min_{j \in [m]} \max_{j' \not= j} d_G(j, j')$.
We give the generalized minimal regret voting rule in Algorithm \ref{algorithm:minimal-regret}.
Since it takes $O(nmk)$ time to construct $G(C, E, \succ^k)$ and $O(m^3)$ time to compute all-pairs shortest paths, the running time of Algorithm \ref{algorithm:minimal-regret} is $O(nmk + m^3)$.

\begin{algorithm}[t]
    \SetAlgoNoLine
    \DontPrintSemicolon
    \KwIn{$C \coloneqq \{c_1, \dots, c_m\}$ is a set of $m$ candidate locations,
        $E \coloneqq \{e_1, \dots, e_n\}$ is a set of $n$ expert locations,
        $\succ^k \coloneqq \{\succ^k_1, \dots, \succ^k_n\}$ is a inducible set of rankings}
    \KwOut{a minimal regret candidate} 
    
    $G \leftarrow ([m], \{(j, j') \in [m]^2: j \not= j'\}, w(\cdot))$ where $w(j, j') \coloneqq +\infty$ for all $j \not= j'$\;
    \For{$i = 1, \dots, n$}{
        \For{$r = 1, \dots, k$}{
            \For{$r' = r+1, \dots, k$}{
                \If{$d(e_i, c_{{\succ_i^k}(r')}) - d(e_i, c_{{\succ_i^k}(r)}) < w({\succ_i^k}(r), {\succ_i^k}(r'))$}{
                    $w({\succ_i^k}(r), {\succ_i^k}(r')) \leftarrow d(e_i, c_{{\succ_i^k}(r')}) - d(e_i, c_{{\succ_i^k}(r)})$\;
                }
            }
            \For{$j' \in [m] \setminus \{{\succ_i^k}(h)\}_{h=1}^k$}{
                \If{$d(e_i, c_{j'}) - d(e_i, c_{{\succ_i^k}(r)}) < w({\succ_i^k}(r), j')$}{
                    $w({\succ_i^k}(r), j') \leftarrow d(e_i, c_{j'}) - d(e_i, c_{{\succ_i^k}(r)})$\;
                }
            }
        }
    }
    
    Solve all-pairs shortest paths in $G$\;
    \KwRet{$\arg\min_{j} \max_{j' \not= j} d_G(j, j')$}
    
    \caption{Generalized Minimal Regret Voting Rule}
    \label{algorithm:minimal-regret}
\end{algorithm}

The goal of the planner is now to construct a committee that guarantees low regret, say, a regret of at most $\varepsilon$, under the generalized minimal regret voting rule regardless of the $m$ candidates that may appear.
By Corollary \ref{corollary:reduction}, this amounts to constructing a committee $E$ so that for all candidate locations $C$ and inducible rankings $\succ^k$, there always exists a candidate whose distance to the other candidates in $G(C, E, \succ^k)$ is at most $\varepsilon$.
If such a candidate exists, then the regret of selecting the minimal regret candidate, i.e., the graph center of $G(C, E, \succ^k)$, is at most $\varepsilon$.

\subsection{Covers and Packings}

Our upper and lower bounds for general convex normed spaces are achieved using coverings and packing and are expressed via covering and packing numbers.

\begin{definition}
Let $(V, \norm{\cdot})$ be a normed vector space and $\Theta \subseteq V$.
$X \subseteq \Theta$ is an (internal) $\varepsilon$-cover if for all $\theta \in \Theta$, there exists $x \in X$ such that $\norm{\theta - x} \leq \varepsilon$.
Let $N(\Theta, \norm{\cdot}, \varepsilon)$ denote the minimum number of $\varepsilon$-balls needed to cover $\Theta$.
\end{definition}

\begin{definition}
Let $(V, \norm{\cdot})$ be a normed vector space and $\Theta \subseteq V$.
$X \subseteq \Theta$ is an $\varepsilon$-packing if for all $x \not= x' \in X$, $\norm{x - x'} \geq \varepsilon$.
Let $M(\Theta, \norm{\cdot}, \varepsilon)$ denote the size of a $\varepsilon$-packing of $\Theta$ of greatest cardinality
\end{definition}

\section{General Upper Bound}\label{section:general-upper-bound}

\begin{theorem}\label{theorem:general-upper-bound}
Let $(V, \norm{\cdot})$ be a normed vector space and $\Theta \subseteq V$ be convex.
If each expert returns a ranking of her top $k$ candidates, then there exists a universal committee of size $(8(m-1)/k + 1) (N\left(\Theta, \norm{\cdot}, \varepsilon/2\right)-1)$ such that for any set of $m$ candidates in $\Theta$, the regret of selecting the minimal regret candidate is at most $\varepsilon > 0$.
\end{theorem}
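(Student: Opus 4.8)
The plan is to invoke Corollary~\ref{corollary:reduction}: it suffices to exhibit a committee $E$ of the stated size so that for every candidate configuration $C$ and every inducible $\succ^k$, some candidate $g$ satisfies $d_G(g,h)\le\varepsilon$ for all $h$; the minimal regret candidate (the center of $G(C,E,\succ^k)$) is then at least as good, so its regret is at most $\varepsilon$. I would build $E$ at two scales. First fix an $\varepsilon/2$-cover $X$ of $\Theta$ with $|X|=N(\Theta,\norm{\cdot},\varepsilon/2)$. Since $\Theta$ is convex, hence connected, the graph on $X$ whose edges join cover points at distance at most $\varepsilon$ is connected: otherwise the closed $\varepsilon/2$-balls around the two sides would disjointly cover $\Theta$, partitioning a connected set into two nonempty closed pieces. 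Take a spanning tree $T$ of this graph; it has $N-1$ edges, each a segment of length at most $\varepsilon$ lying in $\Theta$ by convexity. On each edge I would place $8(m-1)/k+1$ equally spaced experts (endpoints included), so that $E$ is an $\eta$-net of $T$ with spacing $\eta\le \varepsilon k/\bigl(8(m-1)\bigr)$, for a total of $(8(m-1)/k+1)(N-1)$ experts.

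Next I would record the facts about $G$ that drive the bound. Writing an expert's perceived quality as $q_j-d(e_i,c_j)$, an edge $j\to j'$ present because $j\succ_i^k j'$ has weight at most $\bigl(q_{j'}-q_j\bigr)+\bigl[(q_j-d(e_i,c_j))-(q_{j'}-d(e_i,c_{j'}))\bigr]$, in which the bracketed ``slack'' is nonnegative; summing along any path gives $d_G(g,h)\ge q_h-q_g$, so only matching upper bounds on $d_G$ are needed. The key observation is that if an expert $e$ sits within $\eta$ of a point $p$ on $T$ at which two candidates $a,b$ are tied for top, i.e.\ $q_a-d(p,c_a)=q_b-d(p,c_b)$ and both attain the pointwise maximum there, then the edge between them has weight at most $(q_b-q_a)+2\eta$: the tie kills the slack at $p$, and moving $e$ off $p$ by $\eta$ changes each distance by at most $\eta$. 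I then form the routing skeleton $H$: as one slides a single expert continuously along $T$, her top choice partitions $T$ into regions, and $H$ is the graph on the ever-top candidates whose edges join candidates whose regions meet. Because $T$ is connected, $H$ is connected; it has at most $m$ vertices; and every edge of $H$ is witnessed by an on-tree tie point, hence realized in $G$ with slack at most $2\eta$ via the fine net.

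To choose $g$, let $g^\ast$ be a genuinely best candidate and let $g$ be the top choice of the expert at the cover point nearest $c_{g^\ast}$; that cover point lies within $\varepsilon/2$ of $c_{g^\ast}$, so $q_{g^\ast}-q_g\le\varepsilon/2$, and $g\in H$. For an arbitrary target $h$, let $h'$ be the top choice at the cover point $p_h$ nearest $c_h$ (so $h'\in H$), route from $g$ to $h'$ along a \emph{simple} path $g=b_0,\dots,b_p=h'$ in $H$, and append the edge $h'\to h$. Telescoping $w(b_{l-1},b_l)\le (q_{b_l}-q_{b_{l-1}})+2\eta$ collapses the intermediate qualities, and combining with $q_{g^\ast}-q_g\le\varepsilon/2$ and $d(p_h,c_h)\le\varepsilon/2$ on the last hop yields $d_G(g,h)\le\varepsilon+2\eta p$ with $p\le m-1$, so the routing slack $2\eta p$ is $O(\varepsilon)$. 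The constants in the cover radius and in the net are then chosen so that this slack together with the two $\varepsilon/2$ cover terms totals at most $\varepsilon$, and a top-$k$ ranking lets one expert certify the comparison of $k$ candidates at once, cutting the number of transitions---and hence the required net density---by a factor of $k$.

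The step I expect to be the main obstacle, and the very source of the exponential savings, is the routing skeleton. Naively following the top choice as it slides along $T$ can incur $\Omega(N)$ transitions---it may oscillate $a,b,a,b,\dots$ across a single bisector as $T$ meanders---which would blow the slack $2\eta\cdot(\text{transitions})$ far past $\varepsilon$. The fix is to route instead along a simple path through the distinct ever-top candidates in $H$, of which there are at most $m$, while arguing that each such adjacency is still realized by a tie point lying \emph{on} $T$, so that the fine net applies with slack $2\eta$ rather than the coarse $\varepsilon/2$. Making this precise---showing $H$ is connected, that consecutive regions share an on-tree point at which the two candidates are simultaneously top, and that top-$k$ ballots compress the traversal by a factor of $k$---together with a careful accounting so that the two cover terms and the $O(\eta m/k)$ routing term sum to at most $\varepsilon$, is the technical heart of the argument.
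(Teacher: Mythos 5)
Your committee is exactly the paper's construction (an $(\varepsilon/2)$-cover, a spanning tree whose edges have length at most $\varepsilon$, and $8(m-1)/k+1$ equally spaced experts per tree edge), and your telescoping ``slack'' analysis is the same idea as the paper's Lemma~\ref{lemma:main-lemma}. But the step you defer---``top-$k$ ballots compress the traversal by a factor of $k$''---is not a technicality to be filled in later; it is the entire content of the theorem, and your proposal contains no mechanism for it. Your route from $g$ to $h'$ is a simple path in the skeleton $H$, which can have up to $m-1$ edges, each costing slack $2\eta$. With the spacing forced by the claimed committee size, $\eta = k\varepsilon/(8(m-1))$, the routing slack can be as large as $k\varepsilon/4$, which is useless for large $k$; to make your accounting close you would need $\eta = O(\varepsilon/m)$, i.e., $\Theta(m)$ experts per tree edge, which proves only the $k=1$ bound of roughly $m\cdot(N(\Theta,\lVert\cdot\rVert,\varepsilon/2)-1)$ experts, not the claimed $m/k$ dependence. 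The paper's factor of $k$ comes from a different combinatorial argument: Lemma~\ref{lemma:remains-connected} shows that after deleting any set $J$ of at most $k-1$ candidates, any two vote-getters outside $J$ remain connected in the disagreement graph $G(C,E^*,\succ^k)[\delta]-J$ (each expert still has a favorite remaining candidate among her top $k$, and as this favorite changes along the tree walk, consecutive experts within $\delta$ of each other disagree, giving edges that avoid $J$). Hence the vertex connectivity between any two vote-getters is at least $k$, and Menger's theorem (Lemma~\ref{lemma:bound-on-diameter}) yields $k$ vertex-disjoint paths, the shortest of which has at most $(m-2)/k+1\le 2(m-1)/k$ edges. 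Nothing in your proposal substitutes for this step.

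There is also a soundness problem in how you realize edges of $H$ inside $G$. An edge of $H$ is witnessed by a tie point $p$ on $T$ where $a$ and $b$ are both top, but no actual expert sits at $p$; the nearest expert $e$ (within $\eta$) may have a third candidate strictly on top, and for small $k$ neither $a$ nor $b$ need appear in her top $k$ at all, in which case $e$ contributes no edge between $a$ and $b$ to $G$, with controlled weight or otherwise. The paper's walk avoids this by using only the discrete experts: an edge is created exactly when two consecutive experts (within $\delta$) have different favorites-remaining, so the disagreement---and hence the bound $w(j,j')\le \hat{q}_{j'}-\hat{q}_j+2\delta$---is genuine. Finally, your choice of $g$ (the favorite of the expert nearest the true best candidate) costs an extra additive $\varepsilon/2$ relative to the paper's choice (a vote-getter maximizing a consistent proxy quality), so even your $k=1$ version needs a cover radius smaller than $\varepsilon/2$, which changes the committee size stated in the theorem.
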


At a high level, the universal committee underlying Theorem \ref{theorem:general-upper-bound} is constructed as follows.
First, place an $(\varepsilon/2)$-cover on $\Theta$.
There will exist a spanning tree of the elements in the cover in which each edge is of length at most $\varepsilon$.
On each edge of this spanning tree, uniformly place $O(m/k)$ experts.
Because $\Theta$ is convex, these experts will lie inside $\Theta$.
Moreover, by placing experts in this way, we have constructed an $O(k\varepsilon/m)$-cover of the spanning tree.
Since there are $N(\Theta, \norm{\cdot}, \varepsilon/2) - 1$ edges in the spanning tree, this committee will be of size $O((m/k) N(\Theta, \norm{\cdot}, \varepsilon/2))$.

Let $V$ be the set of candidates who are some expert's favorite. The $(\varepsilon/2)$-cover on $\Theta$ will prevent any candidate that is not in $V$ %not in some expert's top $k$ 
from being more that $\varepsilon/2$ better than a candidate who is, while the $O(k\varepsilon/m)$-cover of the spanning tree will allow us to distinguish the quality of two candidates in $V$ up to an error of $O(\varepsilon)$. More specifically, we will see that because each expert returns a ranking of her top $k$ candidates, the graph $G(C,E,\succ^k)$ constructed in the generalized minimal regret voting rule will admit a path with $O(m/k)$ edges between any two candidates in $V$.  These paths will allow us to distinguish the quality of any two candidates who ranked first for some expert up to an error of $O(\varepsilon)$.
Thus, we can choose an candidate whose regret is at most $O(\varepsilon)$.

In contrast, \citet{Feldman_Mansour_Nisan_Oren_Tennenholtz_2020} construct a universal committee by placing an $(\varepsilon/(4m))$-cover on $\Theta$.
While their upper bound of the committee size has an exponentially worse dependence on the number of candidates,\footnote{The number of experts they need scales linearly in $m^d$ while our universal committee size only scale linearly in $m$ regardless of the dimension.} their construction works for any $\Theta$ that is connected (in particular, $\Theta$ need not be convex). %\yang{What exactly doees it mean for $\Theta$ to be connected?}
%\eric{From Wolfram, ``it is a set which cannot be partitioned into two nonempty subsets such that each subset has no points in common with the set closure of the other.'' Connectedness guarantees a path between two points within the set.}

While some of the techniques behind our upper bound result, such as using the graph associated with the dual LP to upper bound a particular committee's regret, are present in \citet{Feldman_Mansour_Nisan_Oren_Tennenholtz_2020}, our realization that an $O(\varepsilon/m)$-cover of a spanning tree whose vertices form an $O(\varepsilon)$-cover is precisely the construction that yields a tight upper bound and our observation that more informative preferences correspond to some notion of graph diameter are non-trivial contributions.
Indeed, \citet{Feldman_Mansour_Nisan_Oren_Tennenholtz_2020} list understanding what happens when each expert returns more than just her favorite candidate as an open problem.

\subsection{Proof of Theorem \ref{theorem:general-upper-bound}}

% \eric{The following version of the proof is an attempt to highlight the key ideas in Lemma \ref{lemma:main-lemma} first before relating them to our specific construction. The hope is that when the key ideas are being discussed, the role of $k$ is small. 

% Here, $V = \{{\succ_i^k}(1) : i \in E\}$. I'll add this notation into the preliminaries if we decide to go with this proof. 

% The next section is just the proof for $k=1$ with the requested edits.}

In our proof of Theorem \ref{theorem:general-upper-bound}, we will often refer to the following unweighted subgraph of $G(C, E, \succ^k)$ and the following (generalized) notion of graph diameter.

\begin{definition}\label{def:delta minimal regret voting graph}
Given the locations of $m$ candidates $C = \{c_j\}_{j \in [m]}$, the locations of $n$ experts $E = \{e_i\}_{i \in [n]}$, and their reported rankings $\succ^k = \{\succ_i^k\}_{i \in [n]}$, define $G(C, E, \succ^k)[\delta]$ to be the unweighted directed subgraph of $G(C, E, \succ^k)$ with candidates as vertices and an edge from one candidate $j$ to another candidate $j'$ if and only if there exists two experts $i$ and $i'$ who are within $\delta$ of each other in the underlying metric space $(\Theta, \metric{\cdot}{\cdot})$ yet differ in their preference between $j$ and $j'$.
Formally,
\[
    G(C, E, \succ^k)[\delta] \coloneqq ([m], \{(j, j') \in [m]^2 : \exists\:i, i' \in [n] \text{ s.t. } j \succ_i^k j', j \prec_{i'}^k j', \metric{e_i}{e_{i'}} \leq \delta\})\footnotemark{}
\]
\end{definition}
\footnotetext{Note that whenever $(j, j')$ is an edge in $G(C, E, \succ^k)[\delta]$, $(j', j)$ is also an edge in the graph. Nonetheless, we chose to define $G(C, E, \succ^k)[\delta]$ as a directed graph to emphasize its relation to $G(C, E, \succ^k)$.}

\begin{definition}
Let $G = (V, E)$ be a directed graph, $d_G(u, v)$ be the length of the shortest $u \to v$ path in $G$, and $S \subseteq V$.
Define $\diam{G, S} \coloneqq \max_{u, v \in S} d_G(u, v)$ to be the length of the longest shortest path between any two vertices in $S$.
Note that while we only consider paths with endpoints in $S$, we allow these paths to visit vertices not in $S$.
\end{definition}

\begin{lemma}\label{lemma:main-lemma}
Let $C= \{c_1,\ldots, c_m\}$ denote the set of locations for the $m$ candidates, 
%$\{q_j\}_{j \in [m]}$ denote the true but unknown qualities of the candidates, 
$E= \{e_1,\ldots,e_n\}$ denote the set of locations for the $n$ experts, and $\succ^k = \{\succ_1^k, \dots, \succ_n^k\}$ denote the reported rankings of the experts.
%\eric{We don't actually refer to the true qualities anywhere else in the statement and proof, so we don't have to denote them here. I included them originally because each candidate consists of a location and a quality. If we instead have $C$ denote solely the locations, then we can leave out the true qualities. It'll probably be more important to denote the rankings.}
Let $\hat{q} \in \RR^m$ be \textbf{any} quality vector consistent with $\succ^k$.
If $C \subseteq \cup_{i \in E} B(e_i, r)$ for some $r > 0$ 
 and $\diam{G(C, E, \succ^k)[\delta], V(\succ^k)} < +\infty$ for some $\delta > 0$, then, for all $j^* \in \arg\max_{j \in V(\succ^k)} \hat{q}_j$,
\[
    \max_{j \in [m]} d_{G}(j^*, j) \leq r + 2 \delta \cdot \diam{G(C, E, \succ^k)[\delta], V(\succ^k)}
\]
where $G(C, E, \succ^k)$ is the graph constructed in the generalized minimal regret voting rule (Definition~\ref{def:minimal regret voting graph}), and $d_{G}(j, j')$ is the distance of the shortest $j \to j'$ path in $G(C, E, \succ^k)$.
\end{lemma}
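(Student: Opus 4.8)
The plan is to bound $d_G(j^*, j)$ for every candidate $j$ by explicitly exhibiting a short $j^* \to j$ path in the weighted graph $G \coloneqq G(C, E, \succ^k)$ and estimating its total weight. Write $G[\delta] \coloneqq G(C, E, \succ^k)[\delta]$ and $D \coloneqq \diam{G[\delta], V(\succ^k)}$. I will split into two cases according to whether $j \in V(\succ^k)$, and in both cases the finiteness of $D$ will supply a path of few edges between first-place candidates.

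The crux is a per-edge estimate for the edges of $G[\delta]$, which is the only place the hypotheses genuinely interact. Suppose $(a, b)$ is an edge of $G[\delta]$, witnessed by experts $i, i'$ with $\metric{e_i}{e_{i'}} \leq \delta$, $a \succ_i^k b$, and $b \succ_{i'}^k a$. Since $a \succ_i^k b$, the pair $(a,b)$ is also an edge of $G$, and by definition of the weight $w(a,b) \leq d(e_i, c_b) - d(e_i, c_a)$. I then insert the second expert $i'$ and split this difference into $[d(e_i, c_b) - d(e_{i'}, c_b)] + [d(e_{i'}, c_b) - d(e_{i'}, c_a)] + [d(e_{i'}, c_a) - d(e_i, c_a)]$. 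The triangle inequality bounds the first and third bracket by $\delta$ each, and consistency of $\hat q$ with $b \succ_{i'}^k a$ bounds the middle bracket by $\hat q_b - \hat q_a$. Hence every edge of $G[\delta]$ satisfies $w(a,b) \leq 2\delta + (\hat q_b - \hat q_a)$, an overhead of $2\delta$ plus a quality term that will telescope.

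For the case $j \in V(\succ^k)$, finiteness of $D$ yields a $j^* \to j$ path in $G[\delta]$ using at most $D$ edges, which is also a path in $G$ since every edge of $G[\delta]$ is an edge of $G$. Summing the per-edge bound along this path telescopes the quality terms at the (possibly non-$V$) intermediate vertices, leaving $d_G(j^*, j) \leq 2\delta D + (\hat q_j - \hat q_{j^*})$; because $j^*$ maximizes $\hat q$ over $V(\succ^k)$ and $j \in V(\succ^k)$, the last difference is nonpositive and $d_G(j^*, j) \leq 2\delta D$. For the case $j \notin V(\succ^k)$, the covering hypothesis supplies an expert $i$ with $c_j \in B(e_i, r)$; let $j' \coloneqq {\succ_i^k}(1) \in V(\succ^k)$ be $i$'s favorite, so that $j' \succ_i^k j$ and $(j', j)$ is an edge of $G$ with $w(j', j) \leq d(e_i, c_j) - d(e_i, c_{j'}) \leq d(e_i, c_j) \leq r$. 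Prepending the first-case bound for $j'$ gives $d_G(j^*, j) \leq 2\delta D + r$, which matches the claimed inequality and dominates the first case, completing the bound on $\max_{j \in [m]} d_G(j^*, j)$.

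The main obstacle is precisely the per-edge estimate of the second paragraph: converting the $\delta$-closeness of two \emph{disagreeing} experts into a usable weight bound. The subtlety is keeping the two experts' roles straight—the preferring expert $i$ is needed both to certify that $(a,b)$ is an edge of $G$ and to upper-bound $w(a,b)$, whereas the disagreeing expert $i'$ is needed to introduce $\hat q$ through consistency—so that the resulting bound has the telescoping form $2\delta + (\hat q_b - \hat q_a)$. Once this estimate is in hand, the two cases and the final combination are routine bookkeeping.
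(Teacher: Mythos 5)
Your proposal is correct and follows essentially the same argument as the paper's proof: the per-edge estimate $w(a,b) \leq 2\delta + (\hat q_b - \hat q_a)$ (the paper derives it in one chain via the triangle inequality with the $2\metric{e_i}{e_{i'}}$ term, whereas you split it into three brackets, but it is the same inequality), followed by the same telescoping over a $\leq D$-edge path in $G[\delta]$ for candidates in $V(\succ^k)$, and the same covering argument through the favorite candidate of a nearby expert for candidates outside $V(\succ^k)$.
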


An implication of Corollary \ref{corollary:reduction} and Lemma \ref{lemma:main-lemma} is that given candidate locations $C$, expert locations $E$, and their rankings $\succ^k$, the regret of choosing any candidate $j^* \in \arg\max_{j \in V(\succ^k)} \hat{q}_j$ for some proxy quality vector $\hat{q} \in \RR^m$ consistent with $\succ^k$ is at most $r + 2 \delta \diam{G(C, E, \succ^k)[\delta], V(\succ^k)}$.
If the committee $E$ is constructed so that $r = O(\varepsilon)$ and $\diam{G(C, E, \succ^k)[\delta], V(\succ^k)} = O(\frac{\varepsilon}{\delta})$ for all $C$ and inducible $\succ^k$, then the regret of choosing a candidate such as $j^*$ is at most $O(\varepsilon)$.
Thus, to compute a low regret candidate, one need not appeal to $G(C, E, \succ^k)$ at all.
Rather, one can simply solve a primal feasibility problem and choose the candidate with the greatest proxy quality among those who rank first for some expert.
Somewhat surprisingly, consistency with experts' rankings is sufficient to guarantee that the true quality of the candidate with the greatest proxy quality among those who rank first for some expert is within $O(\varepsilon)$ of the greatest true quality among all candidates.

\begin{proof}[Proof of Lemma \ref{lemma:main-lemma}]
Recall that $G(C, E, \succ^k)$ as defined in the generalized minimal regret voting rule (Definition~\ref{def:minimal regret voting graph}).
\[
    G(C, E, \succ^k) \coloneqq \left([m], \left\{(j, j') \in [m]^2 : \exists\: i \in [n] \text{ s.t. } j \succ_{i}^k j' \right\}, w\right)
\]
where $w(j, j') = \min_{i \in E : j \succ_{i}^k j'} \metric{e_{i}}{c_{j'}} - \metric{e_{i}}{c_{j}}$.

Consider any pair of candidates $j, j' \in V(\succ^k)$.
Since $\diam{G(C, E, \succ^k)[\delta], V(\succ^k)} < +\infty$, there exists a $j \to j'$ path in $G(C, E, \succ^k)[\delta]$ consisting of at most $\diam{G(C, E, \succ^k)[\delta], V(\succ^k)}$ edges.
Let $j = j_1 \to \dots \to j_S = j'$ where $S \leq \diam{G(C, E, \succ^k)[\delta], V(\succ^k)} + 1$ denote this path.
By definition of $G(C, E, \succ^k)[\delta]$, for all $s \in [S-1]$, there exists $i, i' \in [n]$ such that $j_{s} \succ_{i}^k j_{s+1}$, $j_{s} \prec_{i'}^k j_{s+1}$, and $\metric{e_{i}}{e_{i'}} \leq \delta$.
Thus,
\[
    \hat{q}_{j_{s+1}} - \hat{q}_{j_{s}} \geq \metric{e_{i'}}{c_{j_{s+1}}} - \metric{e_{i'}}{c_{j_{s}}} \geq \metric{e_{i}}{c_{j_{s+1}}} - \metric{e_{i}}{c_{j_s}} - 2\metric{e_{i}}{e_{i'}} \geq w(j_s, j_{s+1}) - 2\delta
\]

Here, the first inequality follows from the fact that $j_{s} \prec_{i'}^k j_{s+1}$.
The second inequality follows from the triangle inequality.
The third inequality follows from the definition of $w$ and the fact that $j_{s} \succ_{i}^k j_{s+1}$. The chain of inequalities shows that the difference of the proxy qualities of candidate $j_{s+1}$ and $j_s$ can serve as an upper bound of $w(j_s,j_{s+1})$.
Since $S \leq \diam{G(C, E, \succ^k)[\delta], V(\succ^k)} + 1$, it follows that
\[
    \hat{q}_{j'} - \hat{q}_{j} = \sum_{s=1}^{S-1} \hat{q}_{j_{s + 1}} - \hat{q}_{j_s} \geq \sum_{s=1}^{S-1} w(j_s, j_{s+1}) - 2\delta \cdot \diam{G(C, E, \succ^k)[\delta], V(\succ^k)}
\]
Moreover, since $G(C, E, \succ^k)[\delta]$ is a subgraph of $G(C, E, \succ^k)$, $j_1 \to \dots \to j_S$ is a $j \to j'$ path in $G(C, E, \succ^k)$ as well, so $\sum_{s=1}^{S-1} w(j_s, j_{s+1}) \geq d_{G}(j, j')%d_{G}(j, j')
$.
Thus,
\begin{equation}\label{equation:upper-bound-on-distance}
    \hat{q}_{j'} - \hat{q}_{j} \geq d_{G}(j, j') - 2\delta \cdot \diam{G(C, E, \succ^k)[\delta], V(\succ^k)}
\end{equation}

Now, consider $j^* \in \arg\max_{j \in V(\succ^k)} \hat{q}_j$.
By Equation \ref{equation:upper-bound-on-distance}, for all $j \in V(\succ^k)$,
\[
    d_{G}(j^*, j) - 2\delta \cdot \diam{G(C, E, \succ^k)[\delta], V(\succ^k)} \leq \hat{q}_j - \hat{q}_{j^*} \leq 0,
\]
In particular, $d_{G}(j^*, j) \leq 2\delta \cdot \diam{G(C, E, \succ^k)[\delta], V(\succ^k)}$.

On the other hand, for all $j \in C \setminus V(\succ^k)$, there exists $i \in [n]$ such that $\metric{e_i}{c_j} \leq r$ since $E$ forms an $r$-cover of $C$.
Let $\ell$ be the favorite candidate of $i$. By definition of $w$,
\[
    w(\ell, j) \leq \metric{e_i}{c_j} - \metric{e_i}{c_\ell} \leq \metric{e_i}{c_j} \leq r
\]
It follows from the triangle inequality and Equation \ref{equation:upper-bound-on-distance} that
\[
    d_{G}(j^*, j) \leq d_{G}(j^*, \ell) + w(\ell, j) \leq 2\delta \cdot \diam{G(C, E, \succ^k)[\delta], V(\succ^k)}+r
\]
Thus, 
\[
     \max_{j \in [m]} d_{G}(j^*, j) \leq r + 2\delta \cdot \diam{G(C, E, \succ^k)[\delta], V(\succ^k)}
\]
\end{proof}

% \begin{corollary}\label{corollary:new-voting-rule}
% There exists an algorithm such that given candidate locations $C= \{c_1,\ldots, c_m\}$, expert locations $E= \{e_1,\ldots,e_n\}$, and inducible rankings $\succ^k = \{\succ_1^k, \dots, \succ_n^k\}$ such that $C \subseteq \cup_{i \in E} B(e_i, r)$ for some $r > 0$ and $\diam{G(C, E, \succ^k)[\delta], V(\succ^k)} < +\infty$ for some $\delta > 0$, returns a candidate with a regret of at most $r + 2 \delta \cdot \diam{G(C, E, \succ^k)[\delta], V(\succ^k)}$ in $O(nmk)$ time.
% \end{corollary}

%\textcolor{blue}{
%Before proving Lemma \ref{lemma:main-lemma}, we make a brief observation regarding the components of its upper bound.
%The upper bound consists of a cover radius $r$ and a $\delta$ parameter (which we will soon see bounds the difference in quality of two candidates adjacent in $G(C, E, \succ^k)$).
%Since $G(C, E, \succ^k)[\delta]$ has at most $m$ vertices, we expect $\delta$ to be proportional to $\varepsilon$ and inversely proportional to $m$ in order for the upper bound to be at most $\varepsilon$.
%However, we would not like $r$ to be inversely proportional to $m$ since if it were, then the size of a full $d$-dimensional committee $E$ would be unable to avoid an $m^d$ dependency.
%Thus, we would like $r$ to only be proportional to $\varepsilon$.
%}

Next, we demonstrate a universal committee $E^*$ such that $E^*$ is an $(\varepsilon/2)$-cover of $\Theta$\footnote{Lemma~\ref{lemma:main-lemma} only requires $E$ to be an $r$-cover of $C$. However, since a universal committee is chosen before the candidates are revealed, to be an $r$-cover for any $C$, the committee needs to be an $r$-cover of $\Theta$.} of size $(8(m-1)/k + 1)(N(\Theta, \norm{\cdot}, \varepsilon/2)-1)$ and that, for all $C$ and inducible $\succ^k$,
\[
    \textstyle \diam{G(C, E^*, \succ^k)\left[\frac{k\varepsilon}{8(m-1)}\right], V(\succ^k)} \leq O(m/k)
\]

\begin{construction}\label{construction:upper-bound}
Let $X \subseteq \Theta$ be an (internal) $(\varepsilon/2)$-cover of $(\Theta, \norm{\cdot})$ of minimal size, and let $H(X)$ denote the undirected graph with vertices in $X$ and an edge between two vertices if and only if the distance between them is at most two times the radius of the cover.
\[
    \textstyle H(X) \coloneqq (X, \{\{x, y\} \in X^2 : 0 < \metric{x}{y} \leq \varepsilon \}).
\]
Let $T$ be a spanning tree of $H(X)$.
For each edge $\{x,y\} \in T$, place an expert at $x+\frac{ki}{8(m-1)}(y-x) \in \Theta$ for $i = 0, 1, \dots, 8(m-1)/k$.
Let $E^* \coloneqq \{e_1, \dots, e_n\}$ denote this placement of experts.
\end{construction}

% For now, suppose a committee satisfying these properties exists.
% We show how to apply Lemma \ref{lemma:main-lemma}.

% \yang{We need a theorem here to summarize the upper bound, and the following paragraph should be its proof.}

%\eric{I moved the theorem statement to the beginning of the section. I moved Lemmas \ref{lemma:H-connected} and \ref{lemma:bound-on-diameter} to before Lemma \ref{lemma:E^*-low-regret} (where the original theorem statement was placed). I moved the intermediary lemma (Lemma \ref{lemma:remains-connected}) used to prove Lemma \ref{lemma:bound-on-diameter} to the appendix. Since I moved the theorem statement to the beginning of the section, I made the proof of the old theorem statement into Lemma \ref{lemma:E^*-low-regret}. This lemma is also needed to prove the corollary at the end of the section. If we don't want the corollary anymore, then we can use the old proof (which can be found in History, labelled ``first draft w/ abstract'').}

Note that the spanning tree $T$ exists since $H(X)$ is a connected graph.

\begin{lemma}\label{lemma:H-connected}
Let $X$ be an $(\varepsilon/2)$-cover of $\Theta$.
$H(X) \coloneqq (X, \{\{x, y\} \in X^2 : 0 < \metric{x}{y} \leq \varepsilon \})$ is connected.
\end{lemma}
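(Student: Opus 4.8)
The plan is to exploit the fact that $\Theta$, being convex, is connected, and to convert the connectedness of $H(X)$ into a statement about a partition of $\Theta$ induced by the connected components of $H(X)$. Let $\mathcal{C}_1,\dots,\mathcal{C}_p$ be the connected components of $H(X)$; since $X$ is a minimal cover and $\Theta$ has finite covering number, $X$ (hence the number of components $p$) is finite. For each component $\mathcal{C}_i$, define the region it covers, $A_i \coloneqq \{\theta \in \Theta : \exists\, a \in \mathcal{C}_i \text{ with } \metric{\theta}{a} \leq \varepsilon/2\}$. I will show that the $A_i$ form a partition of $\Theta$ into finitely many nonempty closed-and-open sets, which forces $p = 1$ by connectedness of $\Theta$.

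The argument has three steps. First, the $A_i$ cover $\Theta$: every $\theta \in \Theta$ has some cover point $a \in X$ with $\metric{\theta}{a} \leq \varepsilon/2$, and $a$ lies in exactly one component $\mathcal{C}_i$, so $\theta \in A_i$. Second, the $A_i$ are pairwise disjoint: if $\theta \in A_i \cap A_j$ with $i \neq j$, then there are centers $a \in \mathcal{C}_i$ and $b \in \mathcal{C}_j$ with $\metric{\theta}{a}, \metric{\theta}{b} \leq \varepsilon/2$, so by the triangle inequality $\metric{a}{b} \leq \varepsilon$; if $a = b$ the two components coincide, while if $a \neq b$ then $\{a,b\}$ is an edge of $H(X)$, again placing $a$ and $b$ in the same component---either way contradicting $i \neq j$. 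Third, each $A_i$ is a finite union of closed balls intersected with $\Theta$, hence closed in $\Theta$; since the $A_i$ partition $\Theta$, we have $A_i = \Theta \setminus \bigcup_{j \neq i} A_j$, the complement of a finite (hence closed) union, so $A_i$ is open in $\Theta$ as well. Each $A_i$ is nonempty because it contains the centers of $\mathcal{C}_i$.

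Having established that $\{A_i\}$ is a finite partition of $\Theta$ into nonempty clopen sets, connectedness of the convex set $\Theta$ immediately gives $p = 1$, i.e., $H(X)$ is connected. The step I expect to require the most care is the topological one: because the cover uses \emph{closed} balls of radius $\varepsilon/2$, I must ensure that ``closed'' and ``finitely many'' are genuinely available, which is exactly where finiteness of the minimal cover (and hence of the number of components) is used. An alternative route that sidesteps the clopen bookkeeping is a direct segment walk: convexity gives $[x,y] \subseteq \Theta$ for any $x, y \in X$, the closed sets $I_a \coloneqq \{t \in [0,1] : \metric{(1-t)x + ty}{a} \leq \varepsilon/2\}$ are subintervals covering $[0,1]$, and one extracts a chain of overlapping intervals $I_{a_0}, \dots, I_{a_K}$ from $a_0 = x$ to $a_K = y$; any overlap $I_{a_\ell} \cap I_{a_{\ell+1}} \neq \emptyset$ exhibits a common point within $\varepsilon/2$ of both centers, so $\metric{a_\ell}{a_{\ell+1}} \leq \varepsilon$ and consecutive centers are adjacent in $H(X)$. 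In either route the crux is the same overlap-to-edge observation; note in particular that naively connecting the cover points of a fine partition of $[x,y]$ fails, since the triangle inequality then costs $\varepsilon + \metric{p_\ell}{p_{\ell+1}} > \varepsilon$, so it is essential to use a shared point of two balls rather than two nearby segment points.
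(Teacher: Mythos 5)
Your proof is correct, but it takes a genuinely different route from the paper's. The paper argues by contradiction with a local, purely metric argument: it picks a pair $x, y$ minimizing $\metric{x}{y}$ among all pairs lying in different components of $H(X)$, considers the midpoint $z = \frac{x+y}{2} \in \Theta$ (this is exactly where convexity enters), and shows that any $w \in X$ covering $z$ would satisfy $\metric{w}{x} < \metric{x}{y}$ and $\metric{w}{y} < \metric{x}{y}$, so by minimality of the pair $w$ would have to lie in both components---hence nothing covers $z$, contradicting the covering property. Your argument is instead global and topological: the regions $A_i$ swept out by the components form a finite partition of $\Theta$ into nonempty clopen sets, and connectedness of $\Theta$ forces a single component; your segment-walk variant is the same chaining idea run along $[x,y]$, and your overlap-to-edge observation (a single point within $\varepsilon/2$ of two centers forces $\metric{a}{b} \leq \varepsilon$) plays the role of the paper's midpoint computation. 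Two observations on what each buys. First, your main route uses convexity only to conclude that $\Theta$ is connected, so it actually proves the lemma for arbitrary connected $\Theta$; the paper's midpoint argument genuinely needs convexity (though this extra generality does not strengthen Theorem \ref{theorem:general-upper-bound}, since Construction \ref{construction:upper-bound} needs convexity anyway to place experts on segments). Second, both proofs quietly rely on finiteness of $X$: you need it so that each $A_i$, a union of closed balls, is closed (and you correctly flag this), while the paper needs the minimum of $\metric{x}{y}$ over cross-component pairs to be attained, which can fail for an infinite cover. In the setting where the lemma is applied---$X$ a minimum-size cover with $N(\Theta, \norm{\cdot}, \varepsilon/2)$ finite, without which the theorem is vacuous---this caveat is harmless for both arguments.
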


\begin{proof}%[Proof of Lemma \ref{lemma:H-connected}]
Suppose $H$ is not connected, so there are at least two connected components.
Let $x$ and $y$ minimize $\metric{x}{y}$ over all $x, y \in X$ such that $x$ and $y$ belong to different connected components of $H$. 
Let $C_1 \not= C_2$ denote the connected components containing $x$ and $y$, respectively.
Since $x$ and $y$ belong to different connected components, $\metric{x}{y} > \varepsilon$.
Consider $z = \frac{x + y}{2}$.
Since $\Theta$ is convex, $z \in \Theta$.
Note that $\metric{x}{z} = \metric{y}{z} > \varepsilon/2$, so neither $x$ nor $y$ covers $z$.
But no other element of $X$ could cover $z$ either.
Suppose $w \in X$ covers $z$.
Then, $\metric{w}{z} \leq \varepsilon/2 < \metric{x}{z} = \metric{y}{z}$, so $\metric{x}{y} = \metric{x}{z} + \metric{y}{z} > \metric{x}{z} + \metric{w}{z} \geq \metric{w}{x}$ (similarly, $\metric{x}{y} > \metric{w}{y}$).
If $w \not\in C_1$, then $w$ and $x$ contradict the definition of $x$ and $y$, so $w \in C_1$.
But if $w \not\in C_2$, then $w$ and $y$ contradict the definition of $x$ and $y$.
Thus, $w \in C_1 \cap C_2$, another contradiction.
It follows that no element of $X$ covers $z$, contradicting the fact that $X$ is a cover of $\Theta$.
\end{proof}

% It remains to show that 
% \[
%     {\textstyle \diam{G(C, E^*, \succ^k)\left[\frac{k\varepsilon}{8(m-1)}\right], V(\succ^k)}} \leq \frac{2(m-1)}{k}
% \]
% for all candidate locations $C$ and inducible rankings $\succ^k$.

We now show that any two candidates who rank first for some expert remain connected after removing $k-1$ other candidates from $G(C, E^*, \succ^k)\left[\frac{k\varepsilon}{8(m-1)}\right]$ (Lemma \ref{lemma:remains-connected}), from which it will follow that there exists a path of length $O(m/k)$ between these two candidates (Lemma \ref{lemma:bound-on-diameter}).

\begin{lemma}\label{lemma:remains-connected}
Let $C = \{c_1, \dots, c_m\}$ denote the locations of $m$ candidates, $E^* = \{e_1, \dots, e_n\}$ denote the locations of the $n$ experts placed according to Construction \ref{construction:upper-bound}, and let $\succ^k = \{\succ_1^k, \dots, \succ_n^k\}$ denote their reported rankings.
For all $J \subseteq [m]$ such that $\abs{J} \leq k-1$ and $j \not= j' \in V(\succ^k) \setminus J$, there exists a $j \to j'$ path in $G(C, E^*, \succ^k)\left[\frac{k\varepsilon}{8(m-1)}\right] - J$.
\end{lemma}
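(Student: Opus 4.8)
The plan is to connect $j$ and $j'$ by following a fine chain of experts that runs from an expert whose favorite is $j$ to one whose favorite is $j'$, and tracking along this chain the highest-ranked candidate that avoids $J$. Writing $G[\delta]$ for $G(C, E^*, \succ^k)[\delta]$ with $\delta \coloneqq \frac{k\varepsilon}{8(m-1)}$, I will show that consecutive tracked candidates are either equal or adjacent in $G[\delta]$, and that all of them lie outside $J$, so the track gives a $j \to j'$ walk (hence a path) in $G[\delta] - J$.

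First I would establish that the experts of Construction~\ref{construction:upper-bound} form a connected ``$\delta$-graph'', i.e.\ the graph on $E^*$ joining two experts at distance at most $\delta$ is connected. On each tree edge $\{x, y\} \in T$ the experts are equally spaced, so consecutive ones are at distance $\frac{k}{8(m-1)} \metric{x}{y} \le \frac{k}{8(m-1)} \varepsilon = \delta$, using $\metric{x}{y} \le \varepsilon$ from the definition of $H(X)$; moreover the two endpoints (indices $i = 0$ and $i = 8(m-1)/k$) sit exactly at the cover points $x$ and $y$. Thus the experts along each edge form a $\delta$-chain between its endpoints, and since $T$ is a spanning tree, concatenating these chains (experts at a shared vertex coincide) yields a $\delta$-chain between any two cover points, and hence between any two experts. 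Picking $i_j, i_{j'}$ with $\succ_{i_j}^k(1) = j$ and $\succ_{i_{j'}}^k(1) = j'$, which exist because $j, j' \in V(\succ^k)$, I obtain a sequence $i_j = i_0, i_1, \dots, i_T = i_{j'}$ with $\metric{e_{i_t}}{e_{i_{t+1}}} \le \delta$.

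The crux is the tracking argument. For each $t$, let $a_t$ be the highest-ranked candidate in the top-$k$ list of $i_t$ that does not belong to $J$; since $\abs{J} \le k-1$ and each list has $k$ entries, $a_t$ is well defined, lies in the top $k$ of $i_t$, and $a_t \notin J$. Because $j, j' \notin J$ are the favorites of $i_0, i_T$, we have $a_0 = j$ and $a_T = j'$. I claim that whenever $a_t \ne a_{t+1}$, the pair is an edge of $G[\delta]$. Indeed, $a_t$ is in the top $k$ of $i_t$ and is its highest non-$J$ candidate, so $a_t$'s perceived quality for $i_t$ is at least that of $a_{t+1}$: if $a_{t+1}$ lies in $i_t$'s top $k$ then $a_t$ ranks at least as high, and if it does not then every top-$k$ candidate, $a_t$ included, dominates it. Hence $a_t \succ_{i_t}^k a_{t+1}$, and symmetrically $a_{t+1} \succ_{i_{t+1}}^k a_t$. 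Since $\metric{e_{i_t}}{e_{i_{t+1}}} \le \delta$, this is exactly the condition for $(a_t, a_{t+1})$ to be an edge of $G(C, E^*, \succ^k)[\delta]$ in Definition~\ref{def:delta minimal regret voting graph}.

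Finally, deleting equal consecutive terms from $a_0, a_1, \dots, a_T$ yields a walk in $G[\delta]$ from $j$ to $j'$ all of whose vertices avoid $J$ (each $a_t \notin J$), so it survives intact in $G[\delta] - J$ and contains the desired $j \to j'$ path. I expect the main obstacle to be the case analysis justifying $a_t \succ_{i_t}^k a_{t+1}$: one must argue carefully from the definition of $\succ_i^k$---which records a preference only when its winner lies in the expert's top $k$---that being the top non-$J$ candidate forces $a_t$ to dominate $a_{t+1}$ in $i_t$'s perceived quality, and it is precisely here that the hypothesis $\abs{J} \le k-1$ (rather than a single reported favorite) is essential.
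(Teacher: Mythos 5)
Your proof is correct and takes essentially the same approach as the paper's: walk along the $\delta$-spaced chain of experts induced by the spanning tree (with $\delta = \frac{k\varepsilon}{8(m-1)}$) and track, for each expert, her highest-ranked surviving candidate, so that consecutive distinct tracked candidates form edges of $G[\delta] - J$. The only minor difference is that the paper tracks the top candidate among $V(\succ^k) \setminus J$ while you track the top candidate in the expert's top-$k$ list outside $J$; your variant is, if anything, slightly cleaner, since its well-definedness follows immediately from $|J| \leq k-1$.
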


\begin{proof}
Let $J \subseteq [m]$ be an arbitrary subset of at most $k - 1$ candidates, and consider $j \not= j' \in V(\succ^k) \setminus J$.
Since we only removed at most $k-1$ candidates, each expert has at least one candidate from her top $k$ remaining in $V(\succ^k) \setminus J$.
For each expert $i \in [n]$, let $v_i$ denote the candidate at the top of her ranking among $V(\succ^k) \setminus J$.
Since $j, j' \in V(\succ^k) \setminus J$, there exists experts $i, i' \in [n]$ such that $v_i = {\succ_i^k}(1) = j$ and $v_{i'} = {\succ_{i'}^k}(1) = j'$.

Now, recall the spanning tree $T$ from Construction \ref{construction:upper-bound}, and note that for each edge $\{x, y\} \in T$, the distance between any two experts along the line segment between $x$ and $y$ is at most $\frac{k\varepsilon}{8(m-1)}$: for all $\ell = 0, \dots, 8(m-1)/k - 1$,
\begin{align*}
    \textstyle \metric{x + \frac{k(\ell+1)}{8(m-1)}(y-x)}{x + \frac{k\ell}{8(m-1)}(y-x)} 
        &= \norm{\left(x + \frac{k(\ell+1)}{8(m-1)}(y-x)\right)-\left(x + \frac{k\ell}{8(m-1)}(y-x)\right)} \\
        &= \frac{k}{8(m-1)} \norm{y-x} \\
        &\leq \frac{k\varepsilon}{8(m-1)}
\end{align*}
where the last inequality follows from the fact that $\{x,y\}$ is an edge in $T$, so $\metric{x}{y} \leq \varepsilon$.

Thus, there exists a sequence of experts $i = i_1, i_2, \dots, i_S = i'$ such that $\metric{e_{i_s}}{e_{i_{s+1}}} \leq \frac{k\varepsilon}{8(m-1)}$ for all $s \in [S - 1]$, namely, the sequence of experts encountered as we walk along the path from $i$ to $i'$ along the edges of $T$ (or rather, the line segments in $\Theta$ that correspond to the edges of $T$).
Construct a $j \to j'$ path as follows. 
Let $j_1 := j$.
Let $s \in [S-1]$ denote the index of the last expert in the sequence such that $v_{i_{s}} = j_\ell$ (so $v_{i_{s+1}} \not= j_1, \dots, j_\ell$). 
Define $j_{\ell+1} \coloneqq v_{i_{s+1}}$.
Since $\metric{i_{s}}{i_{s+1}} \leq \frac{k\varepsilon}{8(m-1)}$, $(j_\ell, j_{\ell + 1})$ is an edge in $G(C, E^*, \succ^k)\left[\frac{k\varepsilon}{8(m-1)}\right] - J$.
Terminate this process when $j_{\ell + 1} = j'$ (which will occur because $S < +\infty$ and $v_{i_S} = v_{i'} = j'$).
Note that the returned path is a $j \to j'$ path in $G(C, E^*, \succ^k)\left[\frac{k\varepsilon}{8(m-1)}\right] - J$.
\end{proof}

\begin{lemma}\label{lemma:bound-on-diameter}
Let $C = \{c_1, \dots, c_m\}$ denote the locations of $m$ candidates, $E^* = \{e_1, \dots, e_n\}$ denote the locations of the $n$ experts placed according to Construction \ref{construction:upper-bound}, and let $\succ^k = \{\succ_1^k, \dots, \succ_n^k\}$ denote their reported rankings.
For all $j \not= j' \in V(\succ^k)$, there exists a $j \to j'$ path in $G(C, E^*, \succ^k)\left[\frac{k\varepsilon}{8(m-1)}\right]$ with at most $2(m-1)/k$ edges.
\end{lemma}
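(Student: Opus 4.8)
Lemma \ref{lemma:bound-on-diameter} asserts that for any two candidates $j, j' \in V(\succ^k)$, there is a $j \to j'$ path in $G(C, E^*, \succ^k)[k\varepsilon/(8(m-1))]$ with at most $2(m-1)/k$ edges.

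**What's available.**

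Lemma \ref{lemma:remains-connected} is the key tool: for ANY subset $J$ of at most $k-1$ candidates, and any $j, j' \in V(\succ^k) \setminus J$, there's still a $j \to j'$ path in the graph after removing $J$. This is a connectivity-after-deletion statement (high connectivity).

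**The plan.**\textbf{Understanding the target.} The goal is to bound the \emph{length} (number of edges) of the shortest $j \to j'$ path in $G(C, E^*, \succ^k)\left[\frac{k\varepsilon}{8(m-1)}\right]$ by $2(m-1)/k$, for any two candidates $j \ne j' \in V(\succ^k)$. The main tool available is Lemma~\ref{lemma:remains-connected}, which is a high-connectivity statement: the graph stays connected between any pair of first-choice candidates even after deleting \emph{any} $k-1$ candidates. So the plan is to convert ``$(k-1)$-resilient connectivity'' into ``short paths,'' following the classic intuition that a graph that survives the deletion of many vertices cannot have long shortest paths between surviving vertices.

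\textbf{Main argument.} I would take a shortest $j \to j'$ path $P$ in $G(C, E^*, \succ^k)\left[\frac{k\varepsilon}{8(m-1)}\right]$ and argue that any $k$ consecutive \emph{internal} vertices on a shortest path are distinct (which is automatic, since a shortest path is simple), and more importantly that internal vertices lying $k$ apart along $P$ cannot be ``skipped over'' by a shortcut. The cleaner route is a contradiction/exchange argument: suppose the shortest path had strictly more than $2(m-1)/k$ edges, hence more than $2(m-1)/k$ internal vertices. Group the internal vertices of $P$ into blocks and select a set $J$ of at most $k-1$ internal vertices whose removal, by Lemma~\ref{lemma:remains-connected}, still leaves a $j \to j'$ path. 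Since removing these vertices forces the path to ``route around'' them while $j, j' \in V(\succ^k) \setminus J$ remain connected, one obtains a path that is strictly shorter than $P$ once the counting is set up so that each deleted vertex eliminates at least one edge of the original shortest path without being replaceable by fewer than one new edge on average. This contradicts minimality of $P$, yielding the bound $2(m-1)/k$.

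\textbf{Alternative framing I would actually write up.} A more direct version avoids the exchange step: on a shortest path $j = u_0 \to u_1 \to \cdots \to u_L = j'$, consider the internal vertices $u_1, \dots, u_{L-1}$. I would show that among any $k$ consecutive internal vertices, not all can be ``bypassed,'' because deleting those $k-1$ of them that could be bypassed and invoking Lemma~\ref{lemma:remains-connected} would produce a shorter $j \to j'$ path. Concretely, since $V(\succ^k)$ has at most $m-1$ vertices other than $j'$, and each group of $k$ steps along a shortest path must ``consume'' at least $k$ distinct vertices that cannot all be deleted while preserving connectivity, the number of edges $L$ satisfies $Lk/2 \le m-1$, i.e. $L \le 2(m-1)/k$. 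The factor of $2$ arises naturally from the fact that shortest-path vertices are counted against the $k-1$ deletable candidates with the appropriate slack.

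\textbf{The main obstacle.} The delicate point is extracting a quantitative length bound from the purely qualitative connectivity guarantee of Lemma~\ref{lemma:remains-connected}. The lemma says connectivity survives deleting $k-1$ vertices, but to conclude a bound of the specific form $2(m-1)/k$ I must argue that a shortest path cannot visit the same ``block'' of candidates too often and that the $(k-1)$-deletion resilience forces at least $k$ fresh vertices per pair of steps. Setting up the counting so the constant comes out as exactly $2(m-1)/k$---rather than, say, $(m-1)/k$ or $4(m-1)/k$---is where care is needed; I expect the clean way is to pair up the $L$ edges of a shortest path and show each pair forces $k$ distinct internal vertices, none of which can all be deleted, giving $(L/2)\,k \le m-1$.
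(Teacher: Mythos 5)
There is a genuine gap in your argument, and it lies at the very mechanism you rely on. Both your ``main argument'' and your ``alternative framing'' rest on the claim that deleting a carefully chosen set $J$ of at most $k-1$ vertices and invoking Lemma~\ref{lemma:remains-connected} ``would produce a shorter $j \to j'$ path,'' contradicting minimality. But vertex deletion can never decrease distances: any $j \to j'$ path in $G(C, E^*, \succ^k)\left[\frac{k\varepsilon}{8(m-1)}\right] - J$ is also a path in the original graph, so it has at least as many edges as the shortest path $P$. Lemma~\ref{lemma:remains-connected} only guarantees that \emph{some} path survives the deletion; it says nothing about its length, and in general that surviving path will be longer, not shorter. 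So the exchange step has no force, and the counting you sketch (``each pair of steps forces $k$ fresh internal vertices,'' giving $Lk/2 \le m-1$) is never actually justified --- a simple shortest path of length $L$ only trivially forces $L+1$ distinct vertices, which yields the useless bound $L \le m-1$.

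The missing idea is how to convert the cut statement into a path-length statement, and the paper does this with Menger's theorem. From Lemma~\ref{lemma:remains-connected} one concludes that every vertex cut separating $j$ from $j'$ has size at least $k$ (if some cut $J$ had $|J| \le k-1$, removing it would disconnect $j$ from $j'$, contradicting the lemma, since $j, j' \in V(\succ^k) \setminus J$). Menger's theorem then gives at least $k$ internally vertex-disjoint $j \to j'$ paths. Their internal vertex sets are pairwise disjoint subsets of the at most $m-2$ candidates other than $j$ and $j'$, so the shortest of them has at most $(m-2)/k$ internal vertices, i.e.\ at most $(m-2)/k + 1 \le 2(m-1)/k$ edges. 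Note also that the factor of $2$ is not a structural ``pairing'' constant as you conjectured; it is just slack in the final inequality $(m-2)/k + 1 \le 2(m-1)/k$, which holds because $k \le m$. (An equally valid alternative to Menger, closer in spirit to your block intuition, is a BFS-layer argument: each of the $L-1$ intermediate BFS layers between $j$ and $j'$ is itself a separating cut, hence has at least $k$ vertices, giving $(L-1)k + 2 \le m$. But either way, the step your proposal lacks is identifying a family of disjoint vertex sets, each of size at least $k$, whose disjointness is what forces the path to be short.)
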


\begin{proof}%[Proof of Lemma \ref{lemma:bound-on-diameter}]
The lemma holds trivially for any $j \not= j' \in V(\succ^k)$ that are adjacent in $G(C, E^*, \succ^k)\left[\frac{k\varepsilon}{8(m-1)}\right]$.
Thus, suppose $j \not= j' \in V(\succ^k)$ are not adjacent.
The minimum vertex cut $J$ that disconnects $j'$ from $j$ consists of at least $k$ candidates.
Otherwise, there is no $j \to j'$ path in $G(C, E^*, \succ^k)\left[\frac{k\varepsilon}{8(m-1)}\right] - J$, yet $\abs{J} \leq k-1$, contradicting Lemma \ref{lemma:remains-connected}.
Thus, by Menger's Theorem, there are at least $k$ vertex-independent $j \to j'$ paths in $G(C, E^*, \succ^k)\left[\frac{k\varepsilon}{8(m-1)}\right]$.
It follows that the $j \to j'$ path in $G(C, E^*, \succ^k)\left[\frac{k\varepsilon}{8(m-1)}\right]$ with the fewest number of edges contains at most $(m-2)/k + 1 \leq 2(m-1)/k$ edges.
\end{proof}

We now combine Lemmas \ref{lemma:main-lemma} and \ref{lemma:bound-on-diameter} to derive Lemma \ref{lemma:E^*-low-regret} as follows.
By Lemma \ref{lemma:bound-on-diameter}, there exists a path in $G(C, E^*, \succ^k)\left[\frac{k\varepsilon}{8(m-1)}\right]$ of length $O(m/k)$ between any two candidates who rank first for some expert.
Moreover, $E^*$ is an $(\varepsilon/2)$-cover of $\Theta$. 
Thus, by Lemma \ref{lemma:main-lemma}, the worst-case regret of choosing a candidate with the greatest proxy quality among those who rank first for some expert is at most $O(\varepsilon)$.

\begin{lemma}\label{lemma:E^*-low-regret}
Let $E^*$ be constructed according to Construction \ref{construction:upper-bound}.
$E^*$ is a universal committee of size $(8(m-1)/k + 1) (N\left(\Theta, \norm{\cdot}, \varepsilon/2\right) - 1)$.
Moreover, for any set $C = \{c_1,\ldots, c_m\}$ of $m$ candidate locations and any set $\succ^k = \{\succ_1^k, \dots, \succ_n^k\}$ of inducible rankings given $C$ and $E^*$, if $\hat{q} \in \RR^m$ is consistent with $\succ^k$, then, for all $j^* \in \arg\max_{j \in V(\succ^k)} \hat{q}_j$,
\[
    \max_{j \in [m]} d_G(j^*, j) \leq \varepsilon
\]
\end{lemma}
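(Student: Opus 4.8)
The plan is to assemble this lemma from ingredients already in place: the committee size follows by counting the experts placed in \Cref{construction:upper-bound}, and the regret bound follows by a single application of \Cref{lemma:main-lemma} with the carefully chosen parameters $r = \varepsilon/2$ and $\delta = \frac{k\varepsilon}{8(m-1)}$, where the diameter term is controlled by \Cref{lemma:bound-on-diameter}. There is essentially no new combinatorial content to generate; the work is in choosing the parameters so the arithmetic closes.

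First I would verify the size and covering properties. The cover $X$ has $N(\Theta, \norm{\cdot}, \varepsilon/2)$ elements, so the spanning tree $T$ has exactly $N(\Theta, \norm{\cdot}, \varepsilon/2) - 1$ edges, and \Cref{construction:upper-bound} places $8(m-1)/k + 1$ experts on each edge, giving the claimed bound $(8(m-1)/k + 1)(N(\Theta, \norm{\cdot}, \varepsilon/2) - 1)$; experts shared between adjacent edges only decrease the count. Because $\Theta$ is convex and each expert is a convex combination $x + \frac{ki}{8(m-1)}(y-x)$ of two cover points $x, y \in X \subseteq \Theta$, every expert lies in $\Theta$. Taking $i = 0$ and $i = 8(m-1)/k$ recovers the endpoints $x$ and $y$ of each tree edge, so $X \subseteq E^*$; since $T$ spans $X$ and $X$ is an $(\varepsilon/2)$-cover of $\Theta$, the committee $E^*$ is itself an $(\varepsilon/2)$-cover of $\Theta$.

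Next I would establish the regret bound. Fixing any $C$ and inducible $\succ^k$, the covering property gives $C \subseteq \bigcup_{i} B(e_i, \varepsilon/2)$, so the first hypothesis of \Cref{lemma:main-lemma} holds with $r = \varepsilon/2$. Setting $\delta = \frac{k\varepsilon}{8(m-1)}$, \Cref{lemma:bound-on-diameter} guarantees a path of at most $2(m-1)/k$ edges in $G(C, E^*, \succ^k)[\delta]$ between any two candidates in $V(\succ^k)$, so $\diam{G(C, E^*, \succ^k)[\delta], V(\succ^k)} \leq 2(m-1)/k < +\infty$ and the second hypothesis holds as well. Plugging these values into the conclusion of \Cref{lemma:main-lemma} yields, for any $j^* \in \arg\max_{j \in V(\succ^k)} \hat{q}_j$,
\[
    \max_{j \in [m]} d_G(j^*, j) \;\leq\; \frac{\varepsilon}{2} + 2 \cdot \frac{k\varepsilon}{8(m-1)} \cdot \frac{2(m-1)}{k} \;=\; \frac{\varepsilon}{2} + \frac{\varepsilon}{2} \;=\; \varepsilon,
\]
which is exactly the claimed bound.

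I do not expect a genuine obstacle, since the two substantive lemmas (\Cref{lemma:main-lemma} and \Cref{lemma:bound-on-diameter}) have already done the heavy lifting. The only point requiring care is that the choices $r = \varepsilon/2$ and $\delta = \frac{k\varepsilon}{8(m-1)}$ are precisely calibrated so that the two contributions $r$ and $2\delta \cdot \diam{G(C, E^*, \succ^k)[\delta], V(\succ^k)}$ each equal $\varepsilon/2$; a different packing of experts along each edge would not make the bound close at exactly $\varepsilon$. The degenerate case $\abs{V(\succ^k)} = 1$, where the diameter term vanishes and only the covering half of \Cref{lemma:main-lemma} is invoked, is absorbed cleanly, so no separate argument is needed.
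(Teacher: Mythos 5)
Your proof is correct and follows essentially the same route as the paper's: count the experts per edge of the spanning tree over the minimal $(\varepsilon/2)$-cover, then apply \Cref{lemma:main-lemma} with $r = \varepsilon/2$ and $\delta = \frac{k\varepsilon}{8(m-1)}$, bounding the diameter term by $2(m-1)/k$ via \Cref{lemma:bound-on-diameter} so the two contributions sum to exactly $\varepsilon$. The only cosmetic difference is that the paper explicitly re-invokes \Cref{lemma:H-connected} to confirm the spanning tree exists, which you leave implicit in the well-definedness of \Cref{construction:upper-bound}.
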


\begin{proof}
Let $X \subseteq \Theta$ denote the $(\varepsilon/2)$-cover underlying the construction of $E^*$.
By Lemma \ref{lemma:H-connected}, $H(X)$ is connected, so the spanning tree $T$ underlying the construction of $E^*$ exists.
Moreover, $E^* \subseteq \Theta$ by the convexity of $\Theta$.
Thus, $E^*$ is a well-defined universal committee.
Since $X$ is an $(\varepsilon/2)$-cover of $\Theta$ of minimum size, $\abs{X} = N(\Theta, \norm{\cdot}, \varepsilon/2)$ and $T$ contains $N(\Theta, \norm{\cdot}, \varepsilon/2) - 1$ edges.
Since $E^*$ contains $8(m-1)/k + 1$ experts per edge in $T$, there are $(8(m-1)/k + 1)(N(\Theta, \norm{\cdot}, \varepsilon/2)-1)$ experts in total.

Now, let $C:= \{c_1,\ldots, c_m\}$ denote the locations of $m$ arbitrary candidates and $\succ^k \coloneqq \{\succ_1^k, \dots, \succ_n^k\}$ denote a set of inducible rankings.
Let $\hat{q} \in \RR^m$ be any quality vector consistent with $\succ^k$, and let $j^* \in \arg\max_{j \in V(\succ^k)} \hat{q}_j$.
Since $E^*$ is an $(\varepsilon/2)$-cover of $\Theta$ (the experts at $X$ already form an $(\varepsilon/2)$-cover),
% \eric{I mention Lemma \ref{lemma:bound-on-diameter} twice since applying Lemma \ref{lemma:main-lemma} requires the diameter term to be finite. If you think it's unnecessary to include it the first time, then we can mention it the second time only.}
\begin{align*}
\max_{j \in [m]} d_{G}(j^*, j) %d_{G(C, E^*, \succ^k)}(j^*, j) 
        &\leq \varepsilon/2 + 2\left(\frac{k\varepsilon}{8(m-1)}\right) \textstyle \diam{G(C, E^*, \succ^k)\left[\frac{k\varepsilon}{8(m-1)}\right], V(\succ^k)} \tag{Lemma \ref{lemma:main-lemma} with $r \coloneqq \varepsilon/2, \delta \coloneqq \frac{k\varepsilon}{8(m-1)}$} \\
        &\leq \varepsilon/2 + 2\left(\frac{k\varepsilon}{8(m-1)}\right)\left(\frac{2(m-1)}{k}\right) \tag{Lemma \ref{lemma:bound-on-diameter}} \\
        &= \varepsilon
\end{align*}

% It remains to demonstrate that the universal committee $E^*$ satisfies the desired properties.
\end{proof}

With Lemma \ref{lemma:E^*-low-regret} in hand, Theorem \ref{theorem:general-upper-bound} easily follows.

\begin{proof}[Proof of Theorem \ref{theorem:general-upper-bound}]
Let $f$ denote the generalized minimal regret voting rule, $E^* = \{e_1, \dots, e_n\}$ denote the locations of the $n$ experts placed according to Construction \ref{construction:upper-bound}, $C:= \{c_1,\ldots, c_m\}$ denote the locations of $m$ arbitrary candidates, and $\succ^k \coloneqq \{\succ_1^k, \dots, \succ_n^k\}$ denote a set of inducible rankings.
Let $\hat{q} \in \RR^m$ be any quality vector consistent with $\succ^k$, and let $j^* \in \arg\max_{j \in V(\succ^k)} \hat{q}_j$.
Note that
\[
    \max_{j \in [m]} d_{G}(f(C, E^*, \succ^k), j) \leq \max_{j \in [m]} d_{G}(j^*, j) \leq \varepsilon
\]
where first inequality follows from the fact that $f(C, E^*, \succ_i^k) = \arg\min_{j \in C} \max_{j' \in C} d_{G}(j, j')$ and the second inequality follows from Lemma \ref{lemma:E^*-low-regret}.
Thus, by Corollary \ref{corollary:reduction}, the regret of selecting the minimal regret candidate $f(C, E^*, \succ^k)$ is at most $\varepsilon$.
\end{proof}

% \eric{I reorganized the latter half of the proof if you want to take a look. The most dramatic change is that I moved the discussion of the distance between experts being at most $\frac{k\varepsilon}{8(m-1)}$ into the proof of Lemma \ref{lemma:k-connected}.}

In line with our discussion of Lemma \ref{lemma:main-lemma}, we derive as a corollary of Lemma \ref{lemma:E^*-low-regret} an alternative voting rule that is more efficient than the generalized minimal regret voting rule when $G(C, E^*, \succ^k)$ is sparse.
The alternative voting rule solves a single-source shortest path problem to compute a proxy quality vector that is consistent with the partial preferences returned by the experts and returns the candidate with the greatest proxy quality among those who rank first for some candidate.

\begin{corollary}[Corollary to Lemma \ref{lemma:E^*-low-regret}]\label{corollary:new-voting-rule}
Let $E^* = \{e_1,\dots,e_n\}$ be constructed according to Construction \ref{construction:upper-bound}.
There exists an algorithm that takes candidate locations $C = \{c_1,\ldots, c_m\}$ and the reported rankings $\succ^k = \{\succ_1^k, \dots, \succ_n^k\}$ of the experts as inputs and outputs a candidate with a regret of at most $\varepsilon$ in $O(nmk + m \cdot \abs{E(G(C, E^*, \succ^k))})$ time where $E(G)$ denotes the edge set of a graph $G$.
\end{corollary}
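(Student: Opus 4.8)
The plan is to make Lemma \ref{lemma:E^*-low-regret} algorithmic: since any candidate maximizing a consistent proxy quality vector over $V(\succ^k)$ has regret at most $\varepsilon$, it suffices to produce one such vector $\hat{q}$ efficiently and then scan $V(\succ^k)$. The one nontrivial step is computing $\hat{q}$, and for this I would exploit the fact that consistency with $\succ^k$ is exactly a system of difference constraints. Concretely, I would first build $G \coloneqq G(C, E^*, \succ^k)$ in $O(nmk)$ time as in Algorithm \ref{algorithm:minimal-regret}. A vector $\hat{q}$ is consistent with $\succ^k$ precisely when $\hat{q}_{j'} - \hat{q}_j \le w(j, j')$ for every edge $(j, j')$ of $G$: it suffices to satisfy the tightest constraint $w(j,j') = \min_{i : j \succ_i^k j'} d(e_i, c_{j'}) - d(e_i, c_j)$, since any $i$ with $j \succ_i^k j'$ has $d(e_i, c_{j'}) - d(e_i, c_j) \ge w(j,j')$. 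Setting $\hat{q}_j \coloneqq d_G(s, j)$ for a fixed source $s$ then satisfies every such constraint by the shortest-path triangle inequality $d_G(s, j') \le d_G(s, j) + w(j, j')$.

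Next I would argue that a single-source computation suffices, which is what keeps the running time at $O(m \cdot \abs{E(G)})$. Pick any $s \in V(\succ^k)$. I claim every candidate is reachable from $s$ in $G$, so that all $d_G(s, j)$ are finite. Indeed, by Lemma \ref{lemma:bound-on-diameter} there is an $s \to j$ path in $G(C, E^*, \succ^k)\left[\frac{k\varepsilon}{8(m-1)}\right]$, which is a subgraph of $G$, for every $j \in V(\succ^k)$, so all of $V(\succ^k)$ is reachable; and for each $j \in C \setminus V(\succ^k)$, the covering argument in the proof of Lemma \ref{lemma:main-lemma} produces an edge $(\ell, j)$ of $G$ whose tail $\ell$ is some expert's favorite, hence $\ell \in V(\succ^k)$, so $j$ is reachable through $\ell$. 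Because $\succ^k$ is inducible, a consistent quality vector exists, so $G$ has no negative cycle and Bellman--Ford from $s$ returns finite, correct distances in $O(m \cdot \abs{E(G)})$ time. The algorithm then returns $\arg\max_{j \in V(\succ^k)} d_G(s, j)$ after an $O(m)$ scan.

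Correctness is then immediate: $\hat{q} \coloneqq (d_G(s, j))_{j \in [m]}$ is a finite vector consistent with $\succ^k$, so by Lemma \ref{lemma:E^*-low-regret} the returned candidate $j^* \in \arg\max_{j \in V(\succ^k)} \hat{q}_j$ satisfies $\max_{j \in [m]} d_G(j^*, j) \le \varepsilon$, whence Corollary \ref{corollary:reduction} bounds its regret by $\varepsilon$. Summing the three phases gives $O(nmk) + O(m \cdot \abs{E(G)}) + O(m) = O(nmk + m \cdot \abs{E(G(C, E^*, \succ^k))})$, as claimed.

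I expect the main obstacle to be the running-time bookkeeping rather than correctness. The natural instinct is to add a super-source with zero-weight edges to all candidates so as to guarantee reachability unconditionally; doing so would inflate the vertex count and introduce an extra $O(m^2)$ term in Bellman--Ford. The reachability argument above is precisely what lets us seed Bellman--Ford at a single vertex of $V(\succ^k)$ and avoid that overhead. A secondary point worth verifying carefully is that inducibility of $\succ^k$ rules out negative cycles in $G$ (otherwise Bellman--Ford would not return finite distances); this follows because inducibility guarantees at least one consistent, hence feasible, assignment to the difference-constraint system.
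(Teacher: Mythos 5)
Your proposal is correct and follows essentially the same route as the paper: build $G(C, E^*, \succ^k)$ in $O(nmk)$ time, run a single-source shortest-path (Bellman--Ford) computation from a source in $V(\succ^k)$, observe that the resulting distances form a quality vector consistent with $\succ^k$ (the paper cites Lemma \ref{lemma:primal-feasible-solution} where you re-derive the difference-constraint argument), and conclude via Lemma \ref{lemma:E^*-low-regret} and Corollary \ref{corollary:reduction}. The only substantive difference is your reachability argument, which is heavier than needed: the paper takes $v \coloneqq {\succ_1^k}(1)$ and notes that, by the definition of $\succ_i^k$, a top-ranked candidate is preferred to \emph{every} other candidate, so $G$ has a direct edge from $v$ to each $j \neq v$ --- and since this one-line observation applies to any $s \in V(\succ^k)$, your appeal to Lemma \ref{lemma:bound-on-diameter} and the covering argument is correct but unnecessary.
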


% \begin{corollary}[Corollary to Lemma \ref{lemma:E^*-low-regret}]\label{corollary:new-voting-rule}
% Let $E^* = \{e_1,\dots,e_n\}$ be constructed according to Construction \ref{construction:upper-bound}.
% Algorithm \ref{algorithm:new-voting-rule} takes candidate locations $C = \{c_1,\ldots, c_m\}$ and the reported rankings $\succ^k = \{\succ_1^k, \dots, \succ_n^k\}$ of the experts as inputs and outputs a candidate with a regret of at most $\varepsilon$ in $O(nmk + m \cdot \abs{E(G(C, E^*, \succ^k))})$ time where $E(G)$ denotes the edge set of a graph $G$.
% \end{corollary}

\section{General Lower Bound}\label{section:general-lower-bound}

\begin{theorem}\label{theorem:general-lower-bound}
Let $(V, \norm{\cdot})$ be a normed vector space, $\varepsilon > 0$, and $\Theta \subseteq V$ be convex with $\mathrm{diam}(\Theta) \geq 12\varepsilon$.
If each expert returns a ranking of her top $k$ candidates, then any universal committee that guarantees a regret of at most $\varepsilon$ for any set of $m \geq k(N(B(0,1), \norm{\cdot}, 1/4) + 1)$ candidates in $\Theta$ under some deterministic voting rule requires size at least
\[
    M(\Theta, \norm{\cdot}, 24\varepsilon)\left(\frac{m}{k(N(B(0, 1), \norm{\cdot}, 1/4) + 1)} - 1\right)
\]
\end{theorem}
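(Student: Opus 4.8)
The plan is to reduce the global bound to a \emph{per-ball} counting statement via a packing argument, and then to prove the per-ball statement by an indistinguishability (two-worlds) construction. First I would fix a maximum $24\varepsilon$-packing $\{\theta_1,\dots,\theta_M\}$ of $\Theta$, so that $M = M(\Theta,\norm{\cdot},24\varepsilon)$ and the balls $B(\theta_s,12\varepsilon)$ are pairwise disjoint. The core claim I would isolate is that for \emph{every} packing center $\theta_s$, any committee meeting the regret guarantee must place at least $L \coloneqq \frac{m}{k(N(B(0,1),\norm{\cdot},1/4)+1)} - 1$ experts inside $B(\theta_s,12\varepsilon)$. Since these balls are disjoint, summing the per-ball counts over $s$ gives at least $M\cdot L$ experts in total, which is exactly the stated bound. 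The hypotheses $\mathrm{diam}(\Theta)\ge 12\varepsilon$ and $m \ge k(N(B(0,1),\norm{\cdot},1/4)+1)$ enter only to guarantee the per-ball construction has enough geometric room and enough candidates to run (so that $L\ge 0$).

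For the per-ball claim, I would fix a center $\theta_0\coloneqq\theta_s$, write $p\coloneqq N(B(0,1),\norm{\cdot},1/4)$, and suppose for contradiction that $B(\theta_0,12\varepsilon)$ contains only $t<L$ experts, i.e. $m>k(p+1)(t+1)$. My goal is to exhibit one candidate placement $C$ and two quality vectors $q^1,q^2\in\RR^m$ that (i) induce \emph{identical} top-$k$ rankings $\succ^k$ from \emph{all} experts, yet (ii) have optima that differ by more than $\varepsilon$ and are attained at different candidates. Concretely, I would designate two candidates $a,b$, set $q^1_a=q^2_b=Q$ for a large $Q$ and keep $q^1_b,q^2_a$ small with every other quality held fixed across the two worlds, so the optimum is $a$ in world $1$ and $b$ in world $2$ with a gap exceeding $\varepsilon$. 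A deterministic rule sees the same $(C,E,\succ^k)$ in both worlds and must return the same candidate $j^*$; a short case analysis on whether $j^*\in\{a,b\}$ then forces regret above $\varepsilon$ in at least one world, contradicting the guarantee.

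The crux is arranging that flipping $a$ and $b$ alters \emph{no} expert's top-$k$ list, even though one of them is the global optimum; this means $a$ and $b$ must sit below rank $k$ in every expert's perceived ordering in \emph{both} worlds. Two obstructions must be defeated at once. For the $t$ \emph{interior} experts, I would place the optimum away from all of them---feasible because the ball has radius $12\varepsilon\gg\varepsilon$ while the experts are finitely many points---and surround each interior expert by $k$ fixed, moderately high quality ``guard'' candidates that dominate $a$ and $b$ in that expert's perception by more than the flip margin (using $\metric{e_i}{a}\le 24\varepsilon$), pushing $a,b$ past rank $k$. For the \emph{exterior} experts (those outside the ball, including experts in the other packing balls), I would erect a ``shield'' of boundary candidates placed along a scaled $1/4$-cover of the ball: since every exterior direction is $1/4$-covered, each exterior expert perceives at least $k$ shield candidates strictly above $a,b$, so $a,b$ again fall below rank $k$ and the exterior rankings are \emph{literally} unchanged between the two worlds. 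Guards and shield carry identical (fixed) qualities in both worlds, so $\succ^k$ is identical; counting them gives a budget of $k(p+1)$ candidates per interior ``slot'' (the factor $k$ from top-$k$ redundancy, the factor $p+1$ from the covering directions together with the hidden optimum), and a pigeonhole argument shows $t$ interior experts can certify the optimum only within a pool of size at most $k(p+1)t$, so once $m>k(p+1)(t+1)$ some candidate $a$ can be elevated to the global optimum while staying outside every top-$k$---the contradiction sought.

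The main obstacle is exactly the tension just described: the swapped candidate must be the \emph{true} global optimum (quality $Q$ above everything, including its own guards) yet remain invisible, below rank $k$, to \emph{all} experts simultaneously. Reconciling these pins the guard and shield qualities into a narrow window---high enough to dominate $a,b$ in perception through proximity, yet more than $\varepsilon$ below $Q$ so that selecting a guard or shield candidate is itself costly---and it is the convexity of $\Theta$ together with the covering number $N(B(0,1),\norm{\cdot},1/4)$ that makes a directional shield realizable \emph{inside} $\Theta$. Verifying that this quality window is nonempty and that the exterior perceptions are controlled uniformly over all directions is the technical heart of the argument; once these are in place, the two-worlds contradiction and the final packing sum are routine.
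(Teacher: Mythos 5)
Your global packing reduction (disjoint balls of a maximum $24\varepsilon$-packing, a per-ball counting claim, summing over balls) and the two-worlds indistinguishability framing are exactly the paper's strategy. However, your per-ball construction has a genuine gap, located precisely at what you call the technical heart. In your scheme the guards and the shield carry qualities that are \emph{fixed across the two worlds}, and this forces a rigid distance requirement: if the voting rule selects a guard of quality $q_g$, its regret is $Q - q_g$ in both worlds, so you must set $Q - q_g > \varepsilon$ for every guard; but then a guard co-located with expert $i$ dominates the hidden optimum $a$ in that expert's perception only if $q_g \geq Q - \norm{e_i - a}$, i.e., only if $\norm{e_i - a} \geq Q - q_g > \varepsilon$. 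Hence $a$ must lie at distance more than $\varepsilon$ from \emph{every} expert in the ball. Your justification---that this is ``feasible because the ball has radius $12\varepsilon \gg \varepsilon$ while the experts are finitely many points''---is incorrect: the obstruction around each expert is not a point but a ball of radius exceeding $\varepsilon$, and roughly $N(B(0,1), \norm{\cdot}, 1/12)$ experts (a constant depending only on the norm and dimension, independent of $m$ and $\varepsilon$) suffice to form an $\varepsilon$-net of $B(\theta_0, 12\varepsilon)$. Since your per-ball threshold $L = \frac{m}{k(N(B(0,1),\norm{\cdot},1/4)+1)} - 1$ grows with $m$, a committee may place such an $\varepsilon$-net inside every packing ball while still having far fewer than $L$ experts per ball; against that committee your construction has nowhere to put $a$ (placing $a$ outside the ball also fails, since the committee may be dense there and your shield only protects the interior of the ball from exterior view). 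So the per-ball claim is not established, and this is not a repairable technicality within your framework: with world-independent guard qualities the obstruction is unavoidable.

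The paper's construction resolves exactly this tension, and the resolution is the idea your proposal is missing: the quality-flipped candidates are not hidden at all---$k$ of them are co-located with \emph{each} expert in the ball (plus $k$ at the center and $k$ at each shield point), and their qualities form a gradient that \emph{flips between the worlds}: $q^1_j = 2\varepsilon - \frac{1}{2}\norm{e_i}$ versus $q^2_j = \frac{1}{2}\norm{e_i}$ at an interior expert $e_i$, $2\varepsilon$ versus $0$ at the center, and $0$ versus $2\varepsilon$ on the shield and the annulus. Because the quality assignment is $\frac{1}{2}$-Lipschitz in location while perception penalizes the full distance, every expert in the ball ranks her own $k$ co-located candidates first under \emph{both} vectors, and the shield hides the interior from exterior experts just as in your scheme; thus the vote profiles coincide, yet any selected candidate at location $x$ suffers regret $\max\{\frac{1}{2}\norm{x},\, 2\varepsilon - \frac{1}{2}\norm{x}\} \geq \varepsilon$ in one of the two worlds. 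Since nothing in this construction needs to be far from any expert, it survives arbitrarily dense committees inside the ball---precisely the case your fixed-quality guards cannot handle.
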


To show the lower bound, we will show that for any universal committee that guarantees $\varepsilon$ regret for any $m$ candidates and any $(24\varepsilon)$-packing of $\Theta$, each ball (of radius $12\varepsilon$) of the packing must contain $\Omega(m/(k \cdot N(B(0,1), \norm{\cdot}, 1/4)))$ experts. 
% \yang{Where did you define $\partial B_\Theta(0,1)$?}\eric{It's defined in Section \ref{section:preliminaries}. The extra notation is somewhat cumbersome, so maybe it's better to clean it up.}
Otherwise, there exists a way of placing $m$ candidates that admits two quality vectors $q^1$ and $q^2$ with the following properties: (1) the two quality vectors induce the same ranking from each expert, so the planner cannot distinguish between them and will choose the same candidate under both and (2) regardless of the candidate that is chosen, the regret of choosing her is at least $\varepsilon$ under one of the two quality vectors.

To give an idea of the construction, suppose there is a ball with too few experts.
We will place $k$ candidates at the center of the ball and for each expert in the ball, we will place $k$ candidates at her location.
Then, for each group of $k$ candidates just placed, we will place candidates on the boundary of the ball.
In $q^1$, the candidates at the center of the ball will have qualities of $0$, and the qualities of candidates will increase linearly with their distance from the center.
In $q^2$, the candidates at the center of the ball will have qualities that are $\Omega(\varepsilon)$, and the qualities of candidates will decrease linearly with their distance from the center.
Thus, if the same candidate is chosen under the two quality vectors, then a high quality under one vector implies a low quality under the other, and vice versa.

We will see that the top $k$ candidates for each expert inside the ball will consist of the $k$ candidates at her location under both $q^1$ and $q^2$.
Moreover, we will see that the top $k$ candidates for each expert outside of the ball will consist of some $k$ candidates outside of the ball under both quality vectors.
In other words, the candidates on the boundary will protect those inside the ball from being in the top $k$ for experts outside of the ball.
This is what allows us to move between $q^1$ and $q^2$ without changing some expert's top $k$ ranking.

We remark that the idea to partition the underlying space and then argue what happens when one element of the partition contains too few experts and the idea to construct two quality vectors that yield indistinguishable votes can be found in \citet{Feldman_Mansour_Nisan_Oren_Tennenholtz_2020}.
Our contribution is a generalization of these ideas %However, not only are these ideas natural approaches to prove a lower bound but generalizing them
to the case when some but still too few experts lie within an element of the partition: \citet{Feldman_Mansour_Nisan_Oren_Tennenholtz_2020} only consider the case when there exists an element of the partition with no expert. %---takes a sizable amount of work, which can be seen by how involved our lower bound construction is.
In particular, %our idea to pack the underlying metric space instead of discretizing it into smaller hypercubes as done by \citet{Feldman_Mansour_Nisan_Oren_Tennenholtz_2020} and 
our realization that an $O(r/2)$-cover of the boundary of a ball of radius $r$ centered at a candidate is precisely what prevents any expert outside of the ball from voting for the candidate at the center resolve critical issues that more straightforward extensions of \citet{Feldman_Mansour_Nisan_Oren_Tennenholtz_2020}'s ideas cannot address.

\subsection{Proof of Theorem \ref{theorem:general-lower-bound}}

% \eric{Below is the proof of the new lower bound result.}

Before outlining the construction behind our lower bound result, we introduce some notation for the sets that play recurring roles.

\begin{definition}
Let $\partial S \coloneqq \closure{S} \setminus \relint{S}$ denote the relative boundary of $S$.
\end{definition}

\begin{definition}
Let $(V, \norm{\cdot})$ be a normed vector space and $\Theta \subseteq V$.
For all $x \in \RR^d$ and $r > 0$, let $B_\Theta(x, r) \coloneqq \Theta \cap B(x, r)$ denote the intersection between $\Theta$ and the ball of radius $r$ centered at $x$.
\end{definition}

\begin{definition}
Let $L(x, y) \coloneqq \{x + \alpha (y-x) : \alpha \in [0, 1]\}$ denote the line segment between $x, y \in \RR^d$ and $S[A, B] \coloneqq \{S \cap L(x,y) : x \in A, y \in B\}$ denote the intersection between $S \subseteq \RR^d$ and all line segments between $A, B \subseteq \RR^d$.
If $A = \{a\}$ is a singleton, then we write $S[a, B]$ instead of $S[A, B]$ (the same applies for $B$ as well if $B$ were a singleton).
\end{definition}

\begin{figure}
    \includegraphics[width=\linewidth]{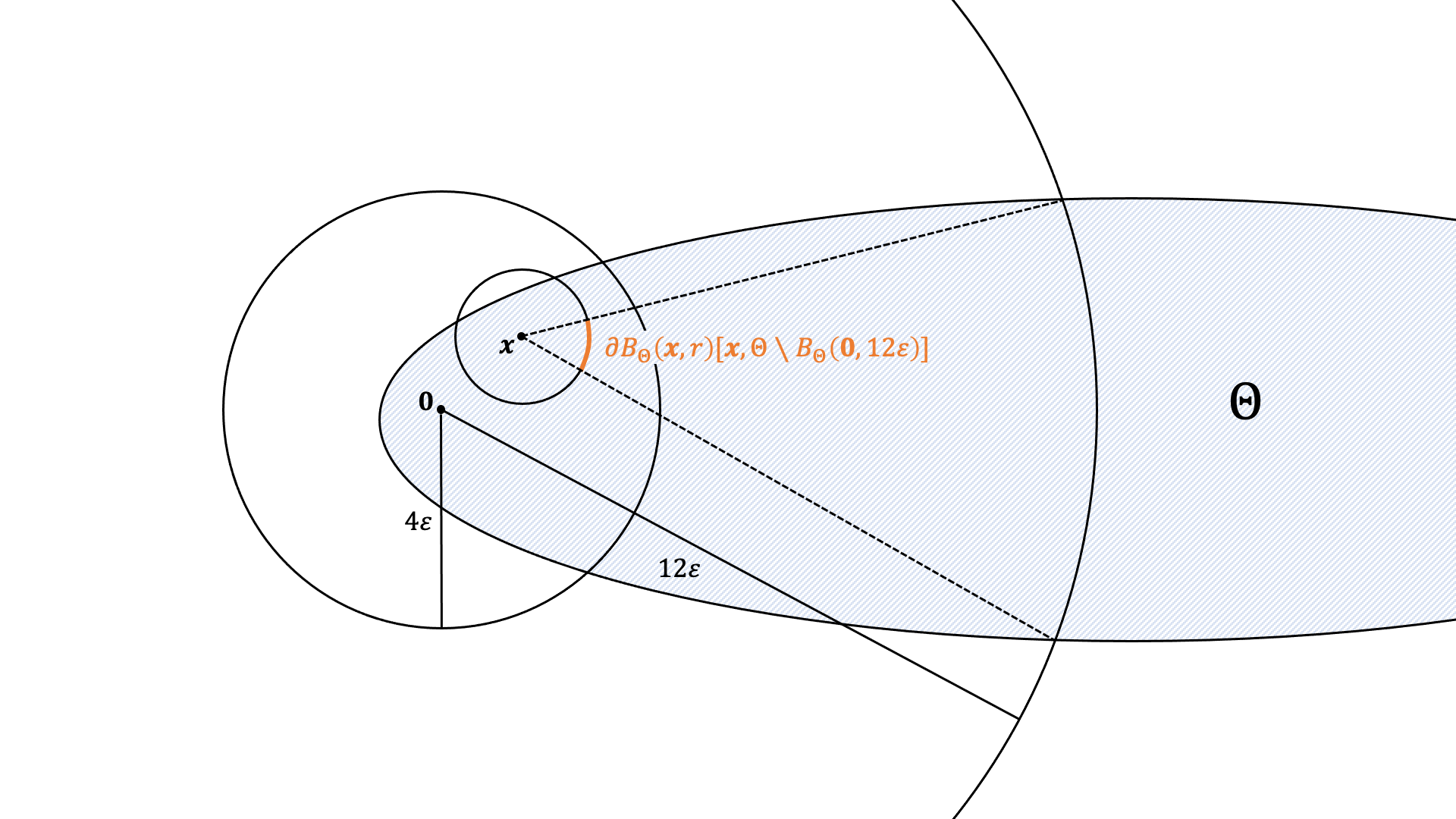}
    \caption{The set $\partial B_\Theta(x, r)[x, \Theta \setminus B_\Theta(0, 12 \varepsilon)]$}
\end{figure}

Let $E \coloneqq \{e_1, \dots, e_n\} \subseteq \Theta$ denote an arbitrary universal committee that guarantees a regret of at most $\varepsilon > 0$ for any set of $m$ candidates in $\Theta$.
Let $X$ denote an $(24\varepsilon)$-packing of $\Theta$ of maximum size (so $\abs{X} = M(\Theta, \norm{\cdot}, 24\varepsilon)$).
We will show that for all $x \in X$, 
\begin{equation}\label{equation:desired-lower-bound-for-each-ball}
    \abs{E \cap B_\Theta(x, 12\varepsilon)} \geq \frac{m}{k(N(B(0, 1), \norm{\cdot}, 1/4) + 1)} - 1.
\end{equation}
Since $X$ is a $(24\varepsilon)$-packing, $B(x, 12\varepsilon) \cap B(y, 12\varepsilon) = \varnothing$ for all $x \not= y \in X$.
Thus, 
\[
    \abs{E} \geq M(\Theta, \norm{\cdot}, 24\varepsilon)\left(\frac{m}{k(N(B(0, 1), \norm{\cdot}, 1/4) + 1)} - 1\right).
\]

Now, suppose there exists $x \in X$ for which Equation \ref{equation:desired-lower-bound-for-each-ball} does not hold.
Suppose without loss of generality that $x = 0 \in \Theta$.
Construct a set of at most $m$ candidates as follows.
\begin{enumerate}
    \item Place $k$ candidates at $0 \in \Theta$ with $q^1_j = 2\varepsilon$ and $q^2_j = 0$.
    \item Let $W_0$ be a $(2\varepsilon)$-cover of $\partial B_\Theta(0, 4\varepsilon)[0, \Theta \setminus B_\Theta(0, 12\varepsilon)]$ of minimal size (so $\abs{W_0} = N(\partial B_\Theta(0, 4\varepsilon)[0, \Theta \setminus B_\Theta(0, 12\varepsilon)], \norm{\cdot}, 2\varepsilon)$).
    For each $w \in W_0$, place $k$ candidates at $w$ with $q^1_j = 0$ and $q^2_j = 2\varepsilon$.
    Note that these candidates lie on $\partial B_\Theta(0, 4\varepsilon) \subseteq \Theta$.
    We will see that these candidates prevent any expert in $\Theta \setminus B_\Theta(0, 12\varepsilon)$ from voting for the candidate at $0$.
    \item For each expert $i \in [n]$ such that $e_i \in \relint{B_\Theta(0, 4\varepsilon)} \subseteq \Theta$,
    \begin{enumerate}
        \item Place $k$ candidates at $e_i$ with $q^1_j = 2\varepsilon - \frac{1}{2} \norm{e_i}$ and $q^2_j = \frac{1}{2} \norm{e_i}$.
        \item Let $W_i$ be a $(2\varepsilon - \frac{1}{2}\norm{e_i})$-cover of $\partial B_\Theta(e_i, 4\varepsilon - \norm{e_i})[e_i, \Theta \setminus B_\Theta(0, 12\varepsilon)]$ of minimal size (so $\abs{W_i} = N(\partial B_\Theta(e_i, 4\varepsilon - \norm{e_i})[e_i, \Theta \setminus B_\Theta(0, 12\varepsilon)], \norm{\cdot}, 2\varepsilon - \frac{1}{2} \norm{e_i})$).
        For each $w \in W_i$, let $\lambda \geq 0$ be such that $\norm{e_i + \lambda (w - e_i)} = 4\varepsilon$ and place $k$ candidates at $e_i + \lambda (w - e_i)$ with $q^1_j = 0$ and $q^2_j = 2\varepsilon$.
        By Lemma \ref{lemma:candidates-lie-in-Theta}, these candidates lie on $\partial B_\Theta(0, 4\varepsilon) \subseteq \Theta$.
        We will see that these candidates prevent any expert in $\Theta \setminus B_\Theta(0, 12\varepsilon)$ from voting for the candidate at $e_i$.
    \end{enumerate}
    \item For each expert $i$ such that $e_i \in B_\Theta(0, 12\varepsilon) \setminus \relint{B_\Theta(0, 4\varepsilon)} \subseteq \Theta$, place $k$ candidates at $e_i$ with $q^1_j = 0$ and $q^2_j = 2\varepsilon$.
\end{enumerate}

\begin{figure}
    \includegraphics[width=\linewidth]{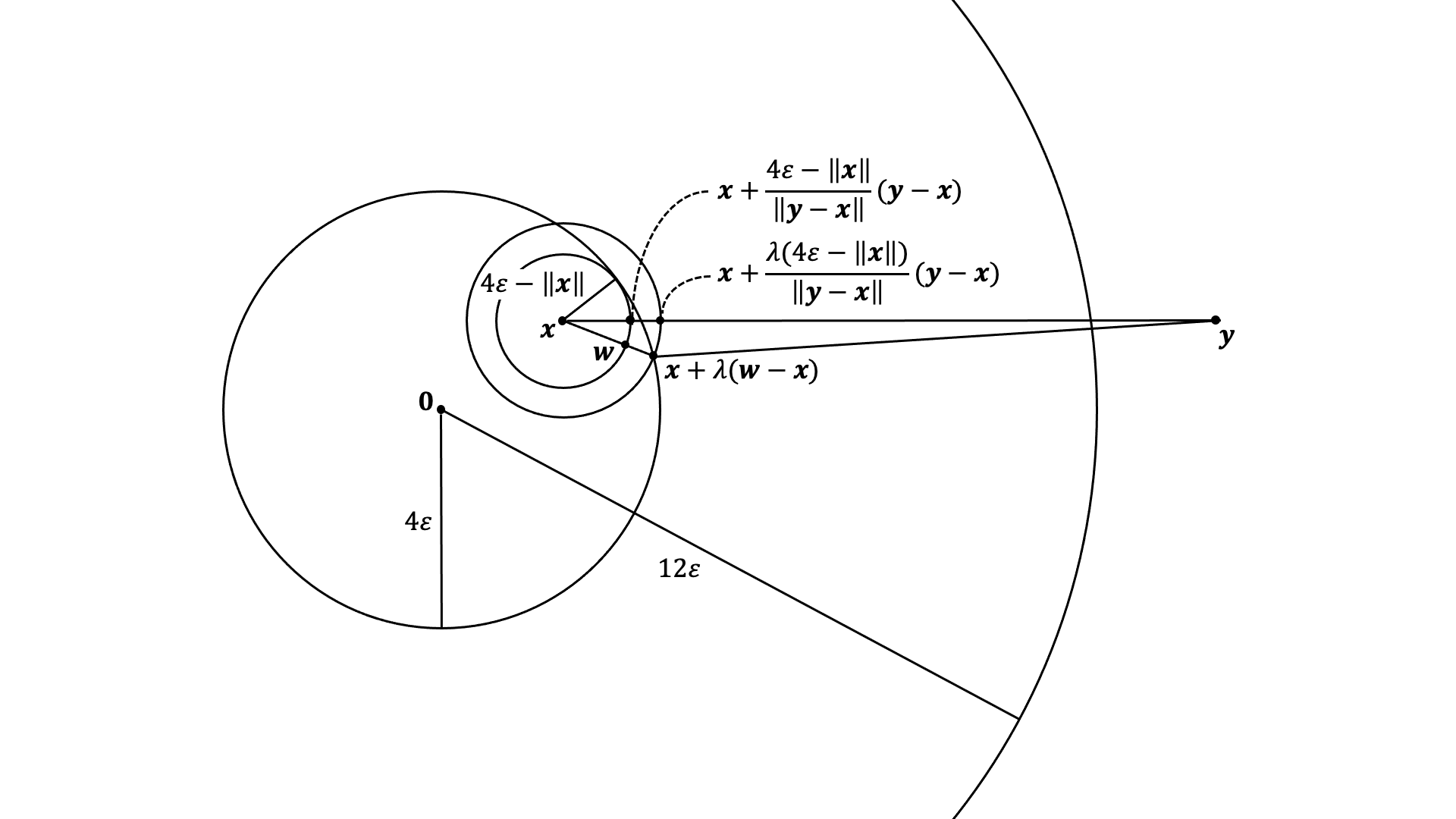}
    \caption{Picture proof of Lemma \ref{lemma:no-outside-expert-votes-for-an-inside-candidate}}
\end{figure}

\begin{lemma}\label{lemma:bound-on-W-covers}
For all $x \in \relint{B_\Theta(0, 4 \varepsilon)}$ and $r > 0$ such that $B_\Theta(x, r) \subseteq B_\Theta(0, 4\varepsilon)$,
\[
    N(\partial B_\Theta(x, r)[x, \Theta \setminus B_\Theta(0, 12 \varepsilon)], \norm{\cdot}, r/2) \leq N(B(0, 1), \norm{\cdot}, 1/4)
\]
\end{lemma}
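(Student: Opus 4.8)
The plan is to show that the set in question lies on a single norm-sphere, after which the bound reduces to a scale-invariance property of covering numbers. Write $S \coloneqq \partial B_\Theta(x, r)[x, \Theta \setminus B_\Theta(0, 12\varepsilon)]$ for brevity. I would first establish that $S \subseteq \partial B(x, r) = \{z : \norm{z - x} = r\}$, the norm-sphere of radius $r$ about $x$. Once this is in hand, the sphere is an affine image of the unit sphere, so covering $S$ with $(r/2)$-balls is equivalent to covering a subset of the unit sphere with $(1/2)$-balls, which I will in turn relate to covering the unit ball with $(1/4)$-balls.

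For the geometric claim, take any $p \in S$. By definition of the bracket notation, $p \in \partial B_\Theta(x, r)$ and $p \in L(x, y)$ for some $y \in \Theta \setminus B_\Theta(0, 12\varepsilon)$; write $p = (1-\alpha)x + \alpha y$ with $\alpha \in [0, 1]$. The containment hypothesis $B_\Theta(x, r) \subseteq B_\Theta(0, 4\varepsilon)$ gives, upon taking closures, $\partial B_\Theta(x, r) \subseteq \closure{B(0, 4\varepsilon)}$, so $\norm{p} \leq 4\varepsilon$; since $y \notin B_\Theta(0, 12\varepsilon)$ forces $\norm{y} > 12\varepsilon$, we cannot have $p = y$, hence $\alpha < 1$. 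Also $p \neq x$, since $x \in \relint{B_\Theta(0, 4\varepsilon)} \subseteq \relint{\Theta}$ together with $r > 0$ place $x$ in the relative interior of $B_\Theta(x, r)$ rather than on its boundary; thus $\alpha \in (0, 1)$ and $p$ lies strictly between $x$ and $y$. Because $x \in \relint{\Theta}$ and $y \in \Theta$, the standard convexity fact that the open segment from a relative-interior point to any point of the set stays in the relative interior yields $p \in \relint{\Theta}$. But then $\Theta$ contains a full relative neighborhood of $p$, so the only way $p$ can be a boundary point of $B_\Theta(x, r) = \Theta \cap B(x, r)$ is for $p$ to be a boundary point of $B(x, r)$, i.e. $\norm{p - x} = r$. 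This proves $S \subseteq \partial B(x, r)$.

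It remains to bound $N(S, \norm{\cdot}, r/2)$. Consider the affine bijection $\phi(z) \coloneqq (z - x)/r$, which scales all distances by $1/r$ and maps $S$ into $\partial B(0, 1)$. An internal $(r/2)$-cover of $S$ corresponds bijectively to an internal $(1/2)$-cover of $\phi(S)$, so $N(S, \norm{\cdot}, r/2) = N(\phi(S), \norm{\cdot}, 1/2)$. Finally I would transfer a minimal internal $(1/4)$-cover $Y$ of $B(0, 1)$ to a $(1/2)$-cover of $\phi(S) \subseteq \partial B(0, 1) \subseteq B(0, 1)$: for each $y \in Y$ whose $(1/4)$-ball meets $\phi(S)$, pick a witness point $s_y \in B(y, 1/4) \cap \phi(S)$. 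For any $z \in \phi(S)$, choosing $y \in Y$ with $\norm{z - y} \leq 1/4$ gives $\norm{z - s_y} \leq \norm{z - y} + \norm{y - s_y} \leq 1/2$, so $\{s_y\}$ is an internal $(1/2)$-cover of $\phi(S)$ of size at most $\abs{Y} = N(B(0, 1), \norm{\cdot}, 1/4)$. Chaining the three relations yields $N(S, \norm{\cdot}, r/2) \leq N(B(0, 1), \norm{\cdot}, 1/4)$.

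I expect the main obstacle to be the geometric claim $S \subseteq \partial B(x, r)$, specifically ruling out boundary points of $B_\Theta(x, r)$ that arise from $\partial\Theta$ rather than from the sphere. The containment hypothesis (which forces $\norm{p} \leq 4\varepsilon$ and thereby separates $p$ from the far endpoint $y$) together with the relative-interior assumption on $x$ are precisely what make the interior-point argument go through; some care is needed with relative interiors in case $\Theta$ is not full-dimensional. The covering-transfer step is routine but genuinely relies on using an \emph{internal} cover, so that the selected witnesses $s_y$ remain inside $\phi(S)$.
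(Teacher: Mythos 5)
Your proof is correct, and its second half is essentially the paper's proof: the paper also takes a minimal $(1/4)$-cover (at scale $r$, i.e., an $(r/4)$-cover of $B(x,r)$, rather than rescaling by $\phi$ -- a cosmetic difference), keeps only the cover elements whose balls meet the target set, picks a witness point inside each such intersection to restore internality, and pays a factor of two in the radius via the triangle inequality. The genuine difference is your first half, and it is a detour: the paper never proves, and never needs, the claim that $S \coloneqq \partial B_\Theta(x,r)[x, \Theta \setminus B_\Theta(0,12\varepsilon)]$ lies on the sphere $\{z : \norm{z-x} = r\}$. The transfer step only requires $S \subseteq B(x,r)$, which is immediate from the definitions, since $S \subseteq \partial B_\Theta(x,r) \subseteq \closure{B_\Theta(x,r)} \subseteq B(x,r)$; covering a subset of the closed ball is all that is needed, whether or not that subset sits on the sphere. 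Consequently, the convex-analysis work you flag as the ``main obstacle'' -- relative interiors, the segment-into-relative-interior fact, ruling out boundary points inherited from $\partial\Theta$ -- can be deleted wholesale. Your sphere claim does appear to hold in finite dimension (via $\relint{(C_1 \cap C_2)} = \relint{C_1} \cap \relint{C_2}$ for convex sets whose relative interiors meet), but it is precisely the kind of step that becomes delicate in a general normed space, where $\relint{\Theta}$ may be empty, and it buys nothing for the covering bound. Trimming it turns your argument into the paper's.
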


\begin{lemma}\label{lemma:candidates-lie-in-Theta}
Let $x \in \relint{B_\Theta(0, 4\varepsilon)}$ and $W$ denote a $(2\varepsilon - \frac{1}{2} \norm{x})$-cover of $\partial B_\Theta(x, 4\varepsilon - \norm{x})[x, \Theta \setminus B_\Theta(0, 12 \varepsilon)]$.
For all $w \in W$, $x + \lambda (w - x) \in \Theta$ where $\lambda$ solves $\norm{x + \lambda (w - x)} = 4\varepsilon$ subject to $\lambda \geq 0$.
\end{lemma}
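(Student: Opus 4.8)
The plan is to peel back the definitions so that $w$ is exhibited as a point on a genuine line segment $L(x,y)$ with $y \in \Theta$ far from the origin, and then argue that the target point $x + \lambda(w-x)$ must itself lie on $L(x,y)$, whence it is in $\Theta$ by convexity. First I would unpack membership: since $W$ is an (internal) cover, $w$ itself lies in $\partial B_\Theta(x, 4\varepsilon - \norm{x})[x, \Theta \setminus B_\Theta(0, 12\varepsilon)]$. By the definition of $S[x,B]$, this means there is a point $y \in \Theta \setminus B_\Theta(0, 12\varepsilon)$ with $w \in L(x,y)$, so we may write $w = x + \mu(y-x)$ for some $\mu \in (0,1]$ (here $\mu > 0$ because $w$ lies on the relative boundary of a ball of positive radius $4\varepsilon - \norm{x}$ centered at $x$, so $w \neq x$). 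In particular $y \in \Theta$ with $\norm{y} \geq 12\varepsilon$, and since $x \in \relint{B_\Theta(0, 4\varepsilon)}$ we also have $x \in \Theta$ and $\norm{x} < 4\varepsilon$.

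Next, let $z \coloneqq x + \lambda(w-x)$ denote the point in the statement, so $\norm{z} = 4\varepsilon$. Because $w - x = \mu(y-x)$, I can rewrite $z = x + \lambda\mu(y-x)$, exhibiting $z$ as the point on the ray from $x$ through $y$ at parameter $t = \lambda\mu \geq 0$. It therefore suffices to show $\lambda\mu \leq 1$: once this is established, $z = x + (\lambda\mu)(y-x) \in L(x,y)$, and $L(x,y) \subseteq \Theta$ because $\Theta$ is convex and both endpoints $x,y$ lie in $\Theta$.

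The only real content is the bound $\lambda\mu \leq 1$, which I would prove by contradiction using convexity of the norm. Suppose $\lambda\mu > 1$. Then $y$ lies strictly between $x$ and $z$ on the ray: setting $\beta \coloneqq 1/(\lambda\mu) \in (0,1)$, we have $y = (1-\beta)x + \beta z$, so the triangle inequality gives $\norm{y} \leq (1-\beta)\norm{x} + \beta\norm{z} < (1-\beta)(4\varepsilon) + \beta(4\varepsilon) = 4\varepsilon$, where the strict inequality uses $\norm{x} < 4\varepsilon$ and $\norm{z} = 4\varepsilon$. This contradicts $\norm{y} \geq 12\varepsilon$, so indeed $\lambda\mu \in [0,1]$ and the conclusion follows. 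I expect the main (and only) obstacle to be the definitional bookkeeping in the first paragraph—correctly reading the bracket notation to guarantee that $w$ sits on an actual segment from $x$ to a point of norm at least $12\varepsilon$, and confirming $\norm{x} < 4\varepsilon$ from the relative-interior hypothesis—while the geometric heart of the argument is the single convexity estimate above.
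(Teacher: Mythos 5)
Your proof is correct and shares the paper's overall skeleton: unpack the bracket notation to write $w = x + \mu(y-x)$ for some $y \in \Theta \setminus B_\Theta(0,12\varepsilon)$ and $\mu \in (0,1]$, reduce the claim to showing that the composite parameter $\lambda\mu$ lies in $[0,1]$, and conclude by convexity of $\Theta$. The only divergence is how the key bound $\lambda\mu \leq 1$ is established. The paper does it by direct computation: from $\norm{x+\lambda(w-x)} = 4\varepsilon$ and the triangle inequality it sandwiches $\frac{4\varepsilon - \norm{x}}{\norm{w-x}} \leq \lambda \leq \frac{4\varepsilon + \norm{x}}{\norm{w-x}}$, combines this with $\mu = \frac{\norm{w-x}}{\norm{y-x}}$, and uses $\norm{y-x} \geq \norm{y} - \norm{x} > 8\varepsilon > 4\varepsilon + \norm{x}$ to get $\lambda\mu < 1$. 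You instead argue by contradiction with a single convexity estimate on the norm: if $\lambda\mu > 1$, then $y$ is a strict convex combination of $x$ (norm $< 4\varepsilon$) and $z = x+\lambda(w-x)$ (norm $= 4\varepsilon$), forcing $\norm{y} < 4\varepsilon$ and contradicting $\norm{y} \geq 12\varepsilon$. Your variant is a bit cleaner and marginally more robust --- it only uses $\norm{y} \geq 4\varepsilon$ rather than the full $12\varepsilon$ separation, and it never needs explicit bounds on $\lambda$ --- while the paper's computation is more quantitative, producing the explicit window in which $\lambda\mu$ must lie. Both arguments are sound, and your handling of the degenerate case $w = x$ (ruled out since the ball $\partial B_\Theta(x, 4\varepsilon - \norm{x})$ has positive radius because $\norm{x} < 4\varepsilon$) is a detail the paper glosses over.
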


Note that our construction places
\begin{align*}
k(\abs{W_0} + 1) & {} + \sum_{i : e_i \in \relint{B_\Theta(0, 4\varepsilon)}} k(\abs{W_i} + 1) + \sum_{i : e_i \in B_\Theta(0, 12\varepsilon) \setminus \relint{B_\Theta(0, 4\varepsilon)}} k \\
    &\leq k(N(B(0, 1), \norm{\cdot}, 1/4) + 1) + \abs{E \cap B_\Theta(0, 12\varepsilon)} \cdot k(N(B(0, 1), \norm{\cdot}, 1/4) + 1) \tag{Lemma \ref{lemma:bound-on-W-covers}} \\
    &< m \tag{$\abs{E \cap B_\Theta(x, 12\varepsilon)} < \frac{m}{k(N(B(0, 1), \norm{\cdot}, 1/4) + 1)} - 1$} 
\end{align*}
candidates.
Moreover, by Lemma \ref{lemma:candidates-lie-in-Theta}, each candidate lies in $\Theta$, so our construction defines a valid set of candidates.

\subsection*{No expert in $\Theta \setminus B_\Theta(0, 12\varepsilon)$ will vote for a candidate in $\relint{B_\Theta(0, 4\varepsilon)}$}

\begin{lemma}\label{lemma:no-outside-expert-votes-for-an-inside-candidate}
Let $x \in \relint{B_\Theta(0, 4\varepsilon)}$, $y \in \Theta \setminus B_\Theta(0, 12\varepsilon)$, and $W$ denote a $(2\varepsilon - \frac{1}{2} \norm{x})$-cover of $\partial B_\Theta(x, 4\varepsilon - \norm{x})[x, \Theta \setminus B_\Theta(0, 12 \varepsilon)]$.
There exists $w \in W$ such that
\[
    \norm{y - \left(x + \lambda (w - x) \right)} \leq \norm{y - x} - \left(2\varepsilon - \frac{1}{2} \norm{x}\right) 
\]
where $\lambda$ solves $\norm{x + \lambda (w - x)} = 4\varepsilon$ subject to $\lambda \geq 0$.
\end{lemma}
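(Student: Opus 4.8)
The plan is to reduce the no-vote condition to the displayed distance inequality and then verify it via a single well-chosen triangle inequality. First I would set $r \coloneqq 4\varepsilon - \norm{x}$, so that the cover radius equals $r/2$ and, since $W$ is an internal cover of $\partial B_\Theta(x, r)[x, \Theta \setminus B_\Theta(0, 12\varepsilon)]$, every $w \in W$ satisfies $\norm{w - x} = r$. Let $D \coloneqq \norm{y - x}$. Because $y \in \Theta \setminus B_\Theta(0, 12\varepsilon)$ and $x \in \relint B_\Theta(0, 4\varepsilon)$, we have $\norm{y} \geq 12\varepsilon$ and $\norm{x} < 4\varepsilon$, hence $D \geq \norm{y} - \norm{x} > 8\varepsilon > r$.

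Next I would locate the point where the segment from $x$ to $y$ exits the ball $B(x, r)$: set $p \coloneqq x + (r/D)(y - x)$, so that $\norm{p - x} = r$ and $p$ lies on $L(x, y)$. Since $\Theta$ is convex and $x, y \in \Theta$, the point $p$ lies in $\Theta$, and as $p \in \partial B_\Theta(x, r)$ lies on a segment from $x$ to $y \in \Theta \setminus B_\Theta(0, 12\varepsilon)$, it belongs to the set $\partial B_\Theta(x, r)[x, \Theta \setminus B_\Theta(0, 12\varepsilon)]$ that $W$ covers. Hence there is $w \in W$ with $\norm{p - w} \leq r/2$. Writing $\hat{p} \coloneqq (p - x)/r$ and $\hat{w} \coloneqq (w - x)/r$, both have unit norm, $\norm{\hat{p} - \hat{w}} = \norm{p - w}/r \leq 1/2$, and, because $p$ lies on $L(x,y)$, $y - x = D\,\hat{p}$.

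Finally, consider the candidate location $u \coloneqq x + \lambda(w - x) = x + \lambda r\,\hat{w}$, where $\lambda \geq 0$ solves $\norm{u} = 4\varepsilon$. The two quantitative facts I need are $\lambda \geq 1$ and $\lambda r < D$: the first from $\lambda r = \norm{u - x} \geq \norm{u} - \norm{x} = r$, the second from $\lambda r = \norm{u - x} \leq \norm{u} + \norm{x} < 8\varepsilon < D$. Decomposing $y - u$ along the direction $\hat{p}$ toward $y$ then gives
\[
    \norm{y - u} = \norm{D\hat{p} - \lambda r\hat{w}} = \norm{(D - \lambda r)\hat{p} + \lambda r(\hat{p} - \hat{w})} \leq (D - \lambda r) + \frac{\lambda r}{2} = D - \frac{\lambda r}{2} \leq D - \frac{r}{2},
\]
where the first inequality uses $D - \lambda r > 0$, $\norm{\hat{p}} = 1$, and $\norm{\hat{p} - \hat{w}} \leq 1/2$, and the last uses $\lambda \geq 1$. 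Since $r/2 = 2\varepsilon - \norm{x}/2$, this is exactly the claimed bound.

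I expect the main obstacle to be the gap between $w$ and the actual candidate $u$. The candidate is placed on $\partial B_\Theta(0, 4\varepsilon)$, i.e., at the radial projection of $w$ \emph{away} from $x$, so it is strictly farther from $x$ than $w$ (indeed $\lambda \geq 1$), and the naive bound $\norm{y - u} \leq \norm{y - w} + \norm{w - u}$ loses the term $(\lambda - 1)r$ and is useless. Resolving this is exactly what forces the decomposition of $y - u$ along $\hat{p}$: pushing $u$ outward along a direction $\hat{w}$ that is within $1/2$ of the direction $\hat{p}$ toward $y$ moves $u$ \emph{closer} to $y$, and the distance gained, $\lambda r/2 \geq r/2$, dominates the margin $r/2$ we must beat. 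This is precisely why a cover at radius $r/2$ (half the ball radius) is the right choice, and I would highlight that the convexity of $\Theta$ enters only to guarantee $p \in \Theta$ so that $p$ falls in the covered set.
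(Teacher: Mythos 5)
Your proof is correct and follows essentially the same route as the paper's: both pick the exit point $p = x + \frac{4\varepsilon - \norm{x}}{\norm{y-x}}(y-x)$ of the segment $L(x,y)$ on $\partial B_\Theta(x, 4\varepsilon-\norm{x})$, invoke convexity to place $p$ in the covered set, take $w$ within $2\varepsilon - \frac{1}{2}\norm{x}$ of $p$, establish $1 \leq \lambda$ and $\lambda(4\varepsilon - \norm{x}) < \norm{y-x}$ from $\norm{w-x} = 4\varepsilon - \norm{x}$, and bound $\norm{y - (x+\lambda(w-x))}$ by a triangle inequality through the point $x + \frac{\lambda(4\varepsilon-\norm{x})}{\norm{y-x}}(y-x)$ — your decomposition $(D-\lambda r)\hat{p} + \lambda r(\hat{p}-\hat{w})$ is exactly this step rewritten in unit-vector notation. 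The argument and all intermediate estimates match the paper's proof, so there is nothing to correct.
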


\begin{proof}%[Proof of Lemma \ref{lemma:no-outside-expert-votes-for-an-inside-candidate}]
Since $\norm{x} < 4\varepsilon$ and $\norm{y - x} \geq \norm{y} - \norm{x} > 8 \varepsilon$, $\frac{4\varepsilon - \norm{x}}{\norm{y - x}} \in (0, 1/2)$.
Thus, $x + \frac{4\varepsilon - \norm{x}}{\norm{y - x}} (y - x) \in L(x, y) \subseteq \Theta$ since $\Theta$ is convex and $x, y \in \Theta$.
Moreover,
\[
    \norm{x - \left(x + \frac{4\varepsilon - \norm{x}}{\norm{y - x}} (y - x)\right)} = 4\varepsilon - \norm{x}
\]
so $x + \frac{\lambda \left(4\varepsilon - \norm{x}\right)}{\norm{y - x}} (y - x) \in \partial B_\Theta(x, 4\varepsilon - \norm{x})[x, \Theta \setminus B_\Theta(0, 12 \varepsilon)]$.
Since $W$ is a $(2\varepsilon - \frac{1}{2} \norm{x})$-cover of $\partial B_\Theta(x, 4\varepsilon - \norm{x})[x, \Theta \setminus B_\Theta(0, 12 \varepsilon)]$, there exists $w \in W$ such that $\norm{\left[x + \frac{4\varepsilon - \norm{x}}{\norm{y - x}} (y-x)\right] - w} \leq 2\varepsilon - \frac{1}{2} \norm{x}$.
Note that
\[
     1 = \frac{4\varepsilon - \norm{x}}{\norm{w - x}} \leq \lambda \leq \frac{4\varepsilon + \norm{x}}{\norm{w - x}}= \frac{4\varepsilon + \norm{x}}{4\varepsilon - \norm{x}}
\]
where the inequalities follow from the fact that $\lambda$ solves $\norm{x + \lambda \left(4\varepsilon - \norm{x}\right) w} = 4\varepsilon$ subject to $\lambda \geq 0$ and the equalities follow from the fact that $w \in W \subseteq \partial B_\Theta(x, 4\varepsilon - \norm{x})$.
Moreover, since $\norm{y - x} \geq \norm{y} - \norm{x} > 8 \varepsilon$, $\frac{\lambda (4\varepsilon - \norm{x})}{\norm{y - x}} \leq \frac{4\varepsilon + \norm{x}}{\norm{y - x}} < 1$.
Thus,
\begin{align*}
    \norm{y - \left(x + \lambda (w - x) \right)}
        \leq {} & \norm{y - \left[x + \frac{\lambda \left(4\varepsilon - \norm{x}\right)}{\norm{y - x}} (y - x)\right]} \\
            & + \norm{\left[x + \frac{\lambda \left(4\varepsilon - \norm{x}\right)}{\norm{y - x}} (y - x)\right] - (x + \lambda (w - x))} \\
        = {} & \norm{y - x}\left(1 - \frac{\lambda \left(4\varepsilon - \norm{x}\right)}{\norm{y - x}}\right) + \lambda \norm{\left[x + \frac{4\varepsilon - \norm{x}}{\norm{y - x}} (y-x)\right] - w} \tag{$\frac{\lambda (4\varepsilon - \norm{x})}{\norm{y - x}} < 1$} \\
        \leq {} & \norm{y - x} - \lambda \left(4\varepsilon - \norm{x}\right) + \frac{\lambda}{2}(4\varepsilon - \norm{x}) \tag{$\norm{\left[x + \frac{4\varepsilon - \norm{x}}{\norm{y - x}} (y-x)\right] - w} \leq 2\varepsilon - \frac{1}{2} \norm{x}$} \\
        \leq {} & \norm{y - x} - \left(2\varepsilon - \frac{1}{2} \norm{x}\right) \tag{$\lambda \geq 1$}
\end{align*}
\end{proof}

By Lemma \ref{lemma:no-outside-expert-votes-for-an-inside-candidate}, for all $x \in \{0\} \cup (E \cap \relint{B_\Theta(0, 4\varepsilon)})$ and $y \in \Theta \setminus B_\Theta(0, 12\varepsilon)$, there exists $w \in W_0 \cup \bigcup_{i \in [n] : e_i \in \relint{B_\Theta(0, 4\varepsilon)}} W_i$ such that, letting $j$ denote the candidate placed at $x$ and $j'$ denote the candidate placed at $x + \lambda (w - x)$,
% \yang{There are also candidates at $0$.}
% \eric{Addressed.}
\begin{align*}
    q^1_{j'} - \norm{y - (x + \lambda (w - x))}
        &= - \norm{y - (x + \lambda (w - x))} \tag{$q^1_{j'} = 0$} \\
        &\geq \left(2\varepsilon - \frac{1}{2} \norm{x}\right) - \norm{y - x} \tag{Lemma \ref{lemma:no-outside-expert-votes-for-an-inside-candidate}} \\
        &= q^1_j - \norm{y - x} \tag{$q^1_{j} = 2\varepsilon - \frac{1}{2} \norm{x}$} \\
    q^2_{j'} - \norm{y - (x + \lambda (w - x))}
        &= 2 \varepsilon - \norm{y - (x + \lambda (w - x))} \tag{$q^1_{j'} = 2\varepsilon$} \\
        &\geq 2 \varepsilon + \left(2\varepsilon - \frac{1}{2} \norm{x}\right) - \norm{y - x} \tag{Lemma \ref{lemma:no-outside-expert-votes-for-an-inside-candidate}} \\
        &> \frac{1}{2} \norm{x} - \norm{y - x} \tag{$\norm{x} < 4\varepsilon$} \\
        &= q^2_j - \norm{y - x} \tag{$q^2_{j} = \frac{1}{2} \norm{x}$} 
\end{align*}
In other words, no expert in $\Theta \setminus B_\Theta(0, 12\varepsilon)$ will rank any of the $k$ candidates at $x$ in her top $k$ candidates.
Thus, the top $k$ candidates for each of these experts will consist of candidates either on the boundary of $B_\Theta(0, 4\varepsilon)$ or in $B_\Theta(0, 12\varepsilon) \setminus B_\Theta(0, 4\varepsilon)$.\footnote{Note that one of the inequalities is not strict, so some candidate in $\relint{B_\Theta(0, 4\varepsilon)}$ may tie with an outside candidate and end up among the top $k$ candidates for some expert outside of $B_\Theta(0, 12\varepsilon)$. One can address this issue by decreasing the radius of the cover in Lemma \ref{lemma:no-outside-expert-votes-for-an-inside-candidate} by an arbitrarily small $\delta$ without qualitatively changing our results.} %We chose not to do so for ease of reading.

\subsection*{Each expert in $\relint{B_\Theta(0, 4\varepsilon)}$ votes for the candidate at her location}

\begin{lemma}\label{lemma:inside experts do not prefer outside candidates}
If $x \in \relint{B_\Theta(0, 4\varepsilon)}$ and $y \in \Theta \setminus \relint{B_\Theta(0, 4\varepsilon)}$, then
\begin{enumerate}
    \item $2\varepsilon - \frac{1}{2} \norm{x} > -\norm{y - x}$
    \item $\frac{1}{2} \norm{x} > 2\varepsilon - \norm{y - x}$
\end{enumerate}
\end{lemma}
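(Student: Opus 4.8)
The plan is to read both inequalities as statements of local optimality for an expert located at $x$: the candidate at her own location has perceived quality $2\varepsilon-\frac12\norm{x}$ under $q^1$ and $\frac12\norm{x}$ under $q^2$ (at distance $0$), whereas a candidate at $y$ has perceived quality $-\norm{y-x}$ under $q^1$ and $2\varepsilon-\norm{y-x}$ under $q^2$. Thus (1) and (2) simply assert that the expert strictly prefers her own candidate to the $y$-candidate under each quality vector. Both reduce to two elementary facts: that $\norm{x}<4\varepsilon$ and that $\norm{y-x}\geq 4\varepsilon-\norm{x}$. First I would record that $x\in\relint{B_\Theta(0,4\varepsilon)}$ forces $\norm{x}<4\varepsilon$, since the relative interior of $B_\Theta(0,4\varepsilon)=\Theta\cap B(0,4\varepsilon)$ is contained in the open ball $\{\norm{\cdot}<4\varepsilon\}$; in particular $x\in\relint{\Theta}$ as well.

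Inequality (1) is then immediate and requires nothing about $y$: since $\norm{x}<4\varepsilon$, we have $2\varepsilon-\frac12\norm{x}>0\geq-\norm{y-x}$. The real content of the lemma is inequality (2), which rearranges to $\norm{y-x}>2\varepsilon-\frac12\norm{x}$. Given the bound $\norm{y-x}\geq 4\varepsilon-\norm{x}$, this follows because $4\varepsilon-\norm{x}>2\varepsilon-\frac12\norm{x}$ is equivalent to $\norm{x}<4\varepsilon$, which we have already established (and with strict inequality, matching the strict conclusion). So the entire lemma hinges on proving the single geometric inequality $\norm{y-x}\geq 4\varepsilon-\norm{x}$.

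To establish this, I would argue that the segment $L(x,y)\subseteq\Theta$ (by convexity of $\Theta$) must cross the sphere $\{\norm{\cdot}=4\varepsilon\}$. Concretely, since $x\in\relint{\Theta}$ and $y\in\Theta$, the line-segment principle for convex sets puts the half-open segment from $x$ toward $y$ inside $\relint{\Theta}$; and since $x\in\relint{B_\Theta(0,4\varepsilon)}$ while $y\notin\relint{B_\Theta(0,4\varepsilon)}$, continuity of $t\mapsto\norm{x+t(y-x)}$ together with the intermediate value theorem yields a point $z$ on the segment with $\norm{z}=4\varepsilon$. The crossing occurs genuinely at norm $4\varepsilon$ (rather than where the segment would meet the relative boundary of $\Theta$ at a smaller norm) precisely because the relevant portion of the segment lies in $\relint{\Theta}$. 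Writing $z-x=t(y-x)$ with $t\in(0,1]$ gives $\norm{y-x}\geq\norm{z-x}\geq\norm{z}-\norm{x}=4\varepsilon-\norm{x}$, as desired.

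I expect the main obstacle to be exactly this geometric step: ensuring that leaving $\relint{B_\Theta(0,4\varepsilon)}$ forces the segment out to norm $4\varepsilon$, rather than merely to the relative boundary of $\Theta$ at a smaller norm. The line-segment principle (that the half-open segment from a relative-interior point to any point of $\Theta$ stays in $\relint{\Theta}$) is the tool that rules out this degenerate alternative. Once $\norm{y-x}\geq 4\varepsilon-\norm{x}$ is in hand, the rest is the one-line algebra recorded above.
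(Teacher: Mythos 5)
Your reduction matches the paper's: part (1) needs only $\norm{x} < 4\varepsilon$, and part (2) reduces to the two facts $\norm{x} < 4\varepsilon$ and $\norm{y-x} \geq 4\varepsilon - \norm{x}$. The difference is how that second fact is obtained, and that is where your argument has a genuine gap. The paper reads the hypothesis $y \in \Theta \setminus \relint{B_\Theta(0,4\varepsilon)}$ as directly giving $\norm{y} \geq 4\varepsilon$ --- i.e., it treats $\relint{B_\Theta(0,4\varepsilon)}$ as $\{z \in \Theta : \norm{z} < 4\varepsilon\}$, the interior of the ball relative to $\Theta$ --- and then its whole proof of (2) is the one-line chain $2\varepsilon - \norm{y - x} \leq 2\varepsilon - \norm{y} + \norm{x} \leq \norm{x} - 2\varepsilon < \frac{1}{2}\norm{x}$. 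You instead work with the convex-analysis relative interior, correctly notice that then $y$ could fail to be in $\relint{B_\Theta(0,4\varepsilon)}$ merely by sitting on the relative boundary of $\Theta$ at norm strictly below $4\varepsilon$, and try to exclude this case via the line segment principle plus the intermediate value theorem. That exclusion does not work: the line segment principle only places the half-open segment from $x$ to $y$ inside $\relint{\Theta}$; it says nothing about the norm along the segment. If $\norm{y} < 4\varepsilon$, convexity of the norm gives $\norm{x + t(y-x)} \leq (1-t)\norm{x} + t\norm{y} < 4\varepsilon$ for all $t \in [0,1]$, so your claimed crossing point $z$ with $\norm{z} = 4\varepsilon$ simply does not exist --- the segment ``exits'' $\relint{B_\Theta(0,4\varepsilon)}$ only at the endpoint $y$, through $\partial \Theta$ rather than through the sphere.

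Moreover, the gap cannot be patched within your reading, because under the standard relative interior both your intermediate inequality and part (2) itself are false. Take $\Theta = [0,1]^2$ with the Euclidean norm, $\varepsilon = 0.1$, $x = (\delta, \delta)$ for small $\delta > 0$ (so $x$ is interior to $\Theta$ and $\norm{x} < 4\varepsilon$, hence $x \in \relint{B_\Theta(0,4\varepsilon)}$), and $y = (0,\delta)$: then $y \in \Theta$ lies on the boundary of the square, so $y \notin \relint{B_\Theta(0,4\varepsilon)}$, yet $\norm{y-x} = \delta$ is far smaller than $4\varepsilon - \norm{x}$, and $\frac{1}{2}\norm{x} > 2\varepsilon - \norm{y-x}$ fails. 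So the lemma is true only under the reading in which ``$y$ outside the relative interior'' means $\norm{y} \geq 4\varepsilon$ --- which is how it is invoked in the lower-bound construction, where the outside candidates are placed at norm exactly $4\varepsilon$ or farther --- and under that reading your segment-crossing machinery is unnecessary: the triangle inequality $\norm{y-x} \geq \norm{y} - \norm{x} \geq 4\varepsilon - \norm{x}$ finishes immediately, as in the paper. To your credit, you isolated exactly the right degenerate scenario; the error is in believing the line segment principle rules it out rather than recognizing that it must be excluded by hypothesis.
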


\begin{proof}%[Proof of Lemma \ref{lemma:inside experts do not prefer outside candidates}]
The first part of the lemma follows readily from the fact that $\norm{x} < 4\varepsilon$.
To see the second part of the lemma, note that
\[
    2\varepsilon - \norm{y - x} \leq 2\varepsilon - \norm{y} + \norm{x} \leq \norm{x} - 2\varepsilon < \frac{1}{2} \norm{x}
\]
where the first inequality follows from the triangle inequality, the second from the fact that $\norm{y} \geq 4\varepsilon$, and third follows from the fact that $\norm{x} < 4\varepsilon$.
\end{proof}

\begin{lemma}\label{lemma:the best inside candidate}
For $x\neq x' \in \relint{B_\Theta(0, 4\varepsilon)}$,
\begin{enumerate}
    \item $\frac{1}{2} \norm{x} {>} \frac{1}{2} \norm{x'} - \norm{x - x'}$
    \item $2\varepsilon - \frac{1}{2} \norm{x} {>}\left(2\varepsilon - \frac{1}{2} \norm{x'}\right) - \norm{x - x'}$
\end{enumerate}
\end{lemma}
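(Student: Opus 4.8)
The plan is to recognize that both inequalities in Lemma~\ref{lemma:the best inside candidate} are nothing more than reverse triangle inequalities, with the strictness supplied entirely by the hypothesis $x \neq x'$. Notice first that in part~(2) the additive constant $2\varepsilon$ appears on both sides and cancels, so that part reduces to a statement about $\norm{x}$, $\norm{x'}$, and $\norm{x - x'}$ alone; likewise part~(1) involves only these three quantities. In particular, neither the value $2\varepsilon$ nor the fact that $x, x' \in \relint{B_\Theta(0, 4\varepsilon)}$ plays any role beyond guaranteeing that $x$ and $x'$ are two genuinely distinct points of $V$. The only substantive input we will use is that $x \neq x'$ forces $\norm{x - x'} > 0$, which is what upgrades the (weak) triangle inequality to the strict inequalities claimed.

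For part~(1), I would first rearrange the target inequality $\tfrac{1}{2}\norm{x} > \tfrac{1}{2}\norm{x'} - \norm{x - x'}$ into the equivalent form $\tfrac{1}{2}\bigl(\norm{x'} - \norm{x}\bigr) < \norm{x - x'}$. The triangle inequality gives $\norm{x'} \leq \norm{x} + \norm{x - x'}$, hence $\norm{x'} - \norm{x} \leq \norm{x - x'}$, so that $\tfrac{1}{2}\bigl(\norm{x'} - \norm{x}\bigr) \leq \tfrac{1}{2}\norm{x - x'}$. Since $x \neq x'$ implies $\norm{x - x'} > 0$, we have $\tfrac{1}{2}\norm{x - x'} < \norm{x - x'}$, and chaining these yields the strict inequality. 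For part~(2), after cancelling $2\varepsilon$ the goal becomes $\tfrac{1}{2}\bigl(\norm{x'} - \norm{x}\bigr) > -\norm{x - x'}$; here I would apply the triangle inequality in the opposite direction, $\norm{x} \leq \norm{x'} + \norm{x - x'}$, to get $\norm{x'} - \norm{x} \geq -\norm{x - x'}$, hence $\tfrac{1}{2}\bigl(\norm{x'} - \norm{x}\bigr) \geq -\tfrac{1}{2}\norm{x - x'} > -\norm{x - x'}$, again using $\norm{x - x'} > 0$.

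There is essentially no obstacle here; the only point requiring minor care is the strictness, which is precisely why the hypothesis $x \neq x'$ is imposed rather than $x = x'$ being allowed. Conceptually, this lemma records that under quality vector $q^2$ an expert located at $x$ assigns the candidate co-located with her a perceived quality $\tfrac{1}{2}\norm{x}$ strictly exceeding the perceived quality $\tfrac{1}{2}\norm{x'} - \norm{x - x'}$ of the candidate at any other interior location $x'$ (part~(1)), and symmetrically under $q^1$ with qualities $2\varepsilon - \tfrac{1}{2}\norm{\cdot}$ (part~(2)). Combined with Lemma~\ref{lemma:inside experts do not prefer outside candidates}, which rules out any outside candidate beating the co-located one, this establishes that each expert in $\relint{B_\Theta(0, 4\varepsilon)}$ ranks the candidate at her own location first under both quality vectors.
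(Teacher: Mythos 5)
Your proof is correct and follows essentially the same route as the paper: both arguments reduce each part to the triangle inequality applied to $\norm{x}$, $\norm{x'}$, $\norm{x - x'}$, with strictness coming from $\tfrac{1}{2}\norm{x - x'} < \norm{x - x'}$ since $x \neq x'$. The only cosmetic difference is that the paper obtains part (2) by symmetry from part (1) (swapping $x$ and $x'$ and adding $2\varepsilon$), whereas you cancel $2\varepsilon$ and invoke the triangle inequality directly, which is the same computation.
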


\begin{proof}%[Proof of Lemma \ref{lemma:the best inside candidate}]
To see the first part of the lemma, note that by the triangle inequality,
\[
    \frac{1}{2} (\norm{x'} - \norm{x}) \leq \frac{1}{2} \norm{x - x'} < \norm{x - x'} 
\]
By symmetry, we also have
\[
    \frac{1}{2}\norm{x'} > \frac{1}{2} \norm{x} - \norm{x - x'}
\]
Add $2\varepsilon$ to both sides and rearrange to get the second part of the lemma.
\end{proof}

By Lemma~\ref{lemma:inside experts do not prefer outside candidates}, each expert in $\relint{B_\Theta(0, 4\varepsilon)}$ prefers the $k$ candidates at her location over any candidate in $\Theta \setminus \relint{B_\Theta(0, 4\varepsilon)}$.
By Lemma~\ref{lemma:the best inside candidate}, she prefers the $k$ candidates at her location over any other candidate in $\relint{B_\Theta(0, 4\varepsilon)}$.
Thus, her top $k$ candidates consist of the $k$ candidates at her location under both quality vectors.

\subsection*{Each expert in $B_\Theta(0, 12\varepsilon) \setminus \relint{B_\Theta(0, 4\varepsilon)}$ votes for the candidate at her location}

Consider two experts $i, i' \in [n]$ located at $e_i \not= e_{i'} \in B_\Theta(0, 12\varepsilon) \setminus \relint{B_\Theta(0, 4\varepsilon)}$.
By construction, the candidates at $e_i$ and $e_{i'}$ have the same quality (either $0$ or $2\varepsilon$).
For any quality $q$, it is easy to see that
\[
    q - \norm{e_i - e_i} > q - \norm{e_i - e_{i'}}
\]
so the expert at $e_i$ will always prefer the $k$ candidates at her location over a candidate at another location in $B_\Theta(0, 12\varepsilon) \setminus \relint{B_\Theta(0, 4\varepsilon)}$.

\begin{lemma}\label{lemma:in-between-expert-prefers-candidate-at-her-location}
If $x \in \relint{B_\Theta(0, 4\varepsilon)}, x' \in B_\Theta(0, 12\varepsilon) \setminus \relint{B_\Theta(0, 4\varepsilon)}$, then
\begin{enumerate}
    \item $0 > \left(2\varepsilon - \frac{1}{2} \norm{x}\right) - \norm{x - x'}$ 
    \item $2\varepsilon > \frac{1}{2} \norm{x} - \norm{x - x'}$
\end{enumerate}
\end{lemma}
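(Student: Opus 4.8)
The plan is to derive both inequalities directly from two radius constraints imposed by membership in the relevant sets, together with a single application of the triangle inequality. Since $x \in \relint{B_\Theta(0, 4\varepsilon)}$, we have $\norm{x} < 4\varepsilon$, and since $x' \in B_\Theta(0, 12\varepsilon) \setminus \relint{B_\Theta(0, 4\varepsilon)}$, we have $\norm{x'} \geq 4\varepsilon$. These are the only facts about $x$ and $x'$ I expect to need; in particular the outer bound $\norm{x'} \leq 12\varepsilon$ should play no role.

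For the first inequality I would lower bound the separation $\norm{x - x'}$. By the triangle inequality and the two radius bounds, $\norm{x - x'} \geq \norm{x'} - \norm{x} \geq 4\varepsilon - \norm{x}$. It then remains to verify $4\varepsilon - \norm{x} > 2\varepsilon - \tfrac{1}{2}\norm{x}$, which rearranges to $2\varepsilon > \tfrac{1}{2}\norm{x}$, i.e.\ $\norm{x} < 4\varepsilon$ --- precisely the interior condition on $x$. Chaining these gives $\norm{x - x'} > 2\varepsilon - \tfrac{1}{2}\norm{x}$, which is exactly the claimed inequality $0 > \left(2\varepsilon - \tfrac{1}{2}\norm{x}\right) - \norm{x - x'}$.

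The second inequality is even more immediate and should not require the distance bound at all: since $\norm{x} < 4\varepsilon$ forces $\tfrac{1}{2}\norm{x} < 2\varepsilon$, and since $\norm{x - x'} \geq 0$, we get $\tfrac{1}{2}\norm{x} - \norm{x - x'} \leq \tfrac{1}{2}\norm{x} < 2\varepsilon$, which is the desired inequality.

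I do not anticipate a genuine obstacle here: both parts reduce to the single structural input that $x$ lies strictly inside the ball of radius $4\varepsilon$ while $x'$ lies outside its relative interior, so that $x$ and $x'$ are separated by at least $4\varepsilon - \norm{x}$ in norm. The only point requiring mild care is that the relevant separation must be measured against the boundary radius $4\varepsilon$ rather than against $0$, which is what makes the bound $\norm{x'} - \norm{x} \geq 4\varepsilon - \norm{x}$ the correct quantity to compare with the quality gap $2\varepsilon - \tfrac{1}{2}\norm{x}$. Read in context, this lemma certifies that under both quality vectors an expert stationed in the annulus $B_\Theta(0, 12\varepsilon) \setminus \relint{B_\Theta(0, 4\varepsilon)}$ perceives the candidate at her own location as strictly better than any candidate placed at an interior point $x$, which is the remaining case needed to pin down these experts' top-$k$ ballots.
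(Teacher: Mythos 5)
Your proof is correct and takes essentially the same route as the paper: both parts reduce to the triangle inequality plus the radius bounds $\norm{x} < 4\varepsilon$ and $\norm{x'} \geq 4\varepsilon$, and indeed the outer bound $\norm{x'} \leq 12\varepsilon$ is unused in the paper's proof as well. The only cosmetic difference is in part 1, where the paper obtains strictness via $\norm{x - x'} > \frac{1}{2}\norm{x - x'} \geq \frac{1}{2}(\norm{x'} - \norm{x})$ (implicitly using $x \neq x'$), whereas you obtain it from $4\varepsilon - \norm{x} > 2\varepsilon - \frac{1}{2}\norm{x}$, i.e., directly from $\norm{x} < 4\varepsilon$.
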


\begin{proof}%[Proof of Lemma \ref{lemma:in-between-expert-prefers-candidate-at-her-location}]
To see the first part of the lemma, note that
\[
    \norm{x - x'} > \frac{1}{2} \norm{x - x'} \geq \frac{1}{2} \left(\norm{x'} - \norm{x}\right) \geq 2\varepsilon - \frac{1}{2} \norm{x} 
\]
where the second inequality follows from the triangle inequality and the third from the fact that $\norm{x'} \geq 4\varepsilon$.
The second part of the lemma follows readily from the fact that $\norm{x} < 4\varepsilon$.
\end{proof}

Lemma \ref{lemma:in-between-expert-prefers-candidate-at-her-location} essentially says that the expert at $x'$ prefers the $k$ candidates at her location over any candidate located in $\relint{B_\Theta(0, 4\varepsilon)}$.
Since this expert prefers the $k$ candidates at her location over all other candidates in $B_\Theta(0, 12\varepsilon) \setminus \relint{B_\Theta(0, 4\varepsilon)}$ as well, it follows that her top $k$ consists of the $k$ candidates at her location.

\subsection*{Indistinguishable votes}

By the previous sections, it is clear that each expert in $B_\Theta(0, 12\varepsilon)$ votes for the $k$ candidates at her location under both $q^1$ and $q^2$.
Moreover, each expert in $\Theta \setminus B_\Theta(0, 12\varepsilon)$ votes for candidates in $\Theta \setminus \relint{B_\Theta(0, 4\varepsilon)}$.
But these candidates have the same qualities as each other under $q^1$ and $q^2$ separately.
Thus, the votes of experts in $\Theta \setminus B_\Theta(0, 12\varepsilon)$ would not change when the quality changes from $q^1$ to $q^2$ and vice versa.
Thus, the two quality vectors induce indistinguishable votes, and the regret is at least $\max\{\frac{1}{2} \norm{x}, 2\varepsilon - \frac{1}{2} \norm{x}\} \geq \varepsilon$ (where $\norm{x} \leq 4\varepsilon$).
% \yang{We again need a theorem to summarize the lowerbound.}
% \eric{Addressed. See Theorem \ref{theorem:general-lower-bound}.}

\section{Multi-Winner and Multi-Round Voting}\label{section:extensions}

In this section, we consider two settings to which our ideas easily extend.

\subsection{Multi-Winner Voting} 

In the multi-winner setting, we wish design a voting rule $f$ and committee $E$ of minimal size to hire $\ell \leq k$ candidates such that our cumulative regret is at most $\varepsilon$.
That is, we wish to select $f$---a function from the product space of candidate locations, expert locations, and top $k$ rankings to the space of all sets of $\ell$ candidates---and $E$ such that for all sets $C$ of $m$ candidate locations and all quality vectors $q \in \RR^m$,
\[
    \max_{S' \subseteq [m] : \abs{S'} = \ell} \sum_{j \in S'} q_j - \sum_{j \in f(C, E, \succ^k)} q_j \leq \varepsilon
\]
The ideas behind Theorems \ref{theorem:general-upper-bound} and \ref{theorem:general-lower-bound} readily extend to this setting to yield the following result.

\begin{theorem}\label{theorem:multi-winner}
Let $(V, \norm{\cdot})$ be a normed vector space, $B_1$ be a unit ball with respect to $\norm{\cdot}$, and $\Theta \subseteq V$ be convex with $\mathrm{diam}(\Theta) \geq 12\varepsilon > 0$.
Let $\mathcal{F}$ denote the set of deterministic voting rules that take as inputs candidate locations, expert locations, and top $k$ rankings and output a set of $\ell$ candidates.
Given $f \in \mathcal{F}$, let $n_f$ denote the size of the smallest universal committee that guarantees a cumulative regret of at most $\varepsilon$ for any set of $m \geq k(N(B_1, 1/4) + 1)$ candidates in $\Theta$.
\[
    \textstyle M(\Theta, 24\varepsilon/\ell)\left(\frac{m}{k(N(B_1, 1/4) + 1)} - 1\right) \leq \displaystyle \inf_{f \in \mathcal{F}} n_f \leq \begin{cases}
        \textstyle \left(\frac{8m}{\ell} \log \left(\frac{k}{k-\ell}\right) + 1\right) (N\left(\Theta, \varepsilon/(2\ell)\right)-1) & \ell < k \\
        \textstyle \left(\frac{8m}{k} (1 + \log k) + 1\right) (N\left(\Theta, \varepsilon/(2\ell)\right)-1) & \ell = k
    \end{cases}
\]
where the covering and packing numbers are all with respect to $\norm{\cdot}$.
\end{theorem}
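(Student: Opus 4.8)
The plan is to prove Theorem~\ref{theorem:multi-winner} by adapting the single-winner arguments (Theorems~\ref{theorem:general-upper-bound} and~\ref{theorem:general-lower-bound}) to the multi-winner setting, treating the upper and lower bounds separately.

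\textbf{Upper bound.} The key observation is that the multi-winner objective decomposes favorably once we understand the structure of the selection. I would first argue that it suffices to select the $\ell$ candidates with the greatest proxy qualities among those who rank first for some expert (i.e.\ among $V(\succ^k)$). The cumulative regret then telescopes: for each of the $\ell$ chosen candidates, I compare its proxy quality against the corresponding candidate in the optimal set of $\ell$, and I invoke the pairwise distance bound of Lemma~\ref{lemma:main-lemma} (specifically Equation~\ref{equation:upper-bound-on-distance}). Concretely, I would match the $s$-th best chosen candidate against the $s$-th best optimal candidate and bound each difference by $r + 2\delta \cdot \diam{G(C, E, \succ^k)[\delta], V(\succ^k)}$, summing over the $\ell$ pairs. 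To make the \emph{cumulative} error at most $\varepsilon$, each per-candidate error must be $O(\varepsilon/\ell)$, which is exactly why the covering number appears at scale $\varepsilon/(2\ell)$: I would run Construction~\ref{construction:upper-bound} with an $(\varepsilon/(2\ell))$-cover so that $r = \varepsilon/(2\ell)$. The remaining piece is the diameter bound. Here is where $\ell < k$ matters: after removing the $\ell - 1$ already-chosen candidates from consideration, a Menger-style argument (as in Lemmas~\ref{lemma:remains-connected} and~\ref{lemma:bound-on-diameter}) shows connectivity persists, but now each expert keeps at least $k - \ell + 1$ of her top candidates, giving roughly $k - \ell$ vertex-disjoint paths and hence a diameter of $O(m/(k-\ell))$. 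The logarithmic factor $\log(k/(k-\ell))$ should emerge from a sharper, iterative counting of how many candidates can be ``used up'' along successive paths, rather than the crude single-cut bound; the $\ell = k$ case degenerates and yields the harmonic-sum factor $1 + \log k$.

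\textbf{Lower bound.} I would reuse the packing-based construction from the proof of Theorem~\ref{theorem:general-lower-bound} essentially verbatim, but rescale the critical gap from $2\varepsilon$ to $2\varepsilon/\ell$, which is why the packing number is taken at scale $24\varepsilon/\ell$ and the balls have radius $12\varepsilon/\ell$. The two quality vectors $q^1, q^2$ are built so that within each packing ball the chosen set must incur cumulative regret $\Omega(\varepsilon)$: since a voting rule selecting $\ell$ candidates cannot distinguish the two worlds, and each ball forces an $\Omega(\varepsilon/\ell)$ per-candidate swing that accumulates across the $\ell$ selected slots, the same indistinguishability argument (Lemmas~\ref{lemma:no-outside-expert-votes-for-an-inside-candidate} through~\ref{lemma:in-between-expert-prefers-candidate-at-her-location}) goes through with the rescaled radii. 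The count of experts required per ball, $\Omega(m/(k(N(B_1,1/4)+1)))$, is unchanged because the boundary-cover cardinality bound of Lemma~\ref{lemma:bound-on-W-covers} is scale-invariant.

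\textbf{Main obstacle.} The genuinely delicate part is the logarithmic factor in the upper bound; a naive reduction would give $O(m/(k-\ell))$ with no logarithm, which is the wrong dependence. I expect the right approach is to build the $\ell$-element output greedily, peeling off one candidate at a time and re-running the connectivity argument on the residual graph, where the number of surviving vertex-disjoint paths shrinks from $\sim k$ down to $\sim k-\ell$; summing the per-step path lengths $\frac{m}{k}, \frac{m}{k-1}, \dots, \frac{m}{k-\ell+1}$ produces exactly the harmonic-difference $\frac{m}{\ell}\log\frac{k}{k-\ell}$ (and $\frac{m}{k}(1+\log k)$ when $\ell = k$). Verifying that this iterative peeling interacts correctly with Lemma~\ref{lemma:main-lemma}'s per-pair error accounting, so that the $\varepsilon/(2\ell)$ scaling and the diameter bound combine to give cumulative regret exactly $\varepsilon$, is the step I would treat most carefully.
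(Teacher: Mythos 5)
Your proposal is correct and, once your ``Main obstacle'' paragraph supersedes the initial all-at-once matching idea, it is essentially the paper's own proof: the upper bound uses the same $(\varepsilon/(2\ell))$-cover plus spanning tree, experts spaced at scale $\varepsilon/\bigl(8m(H_k-H_{k-\ell})\bigr)$, and the greedy peeling rule that repeatedly takes the graph center of the residual graph, with the per-step Menger bound $\frac{2(m-i)}{k-i+1}$ summing to the harmonic difference; the lower bound is the paper's single-winner construction rescaled by $1/\ell$, with the scale-invariant boundary-cover bound keeping the per-ball expert count unchanged. The only cosmetic slips are that the harmonic sum itself is $m(H_k-H_{k-\ell})\approx m\log\frac{k}{k-\ell}$ (the extra $1/\ell$ in the committee size comes from dividing the edge length $\varepsilon/\ell$ by the expert spacing, not from the sum), and the paper also records the easy case where the chosen set intersects the true top-$\ell$ set, which your sketch omits.
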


We highlight that the upper and lower bounds are off by at most an additional multiplicative $O(\log k)$ factor.
We roughly outline the proof of Theorem \ref{theorem:multi-winner} here. 
To obtain the upper bound, place a minimal $O(\varepsilon/\ell)$-cover on $\Theta$. 
There exists a spanning tree of the elements in the cover in which each edge is of length at most $O(\varepsilon/\ell)$. 
Locate $O(m (H_k - H_{k-\ell}))/ \ell)$ experts (where $H_n$ is the $n$-th harmonic number) uniformly along each edge of the spanning tree. 
This committee has size $N(\Theta, \lVert\cdot\rVert, O(\varepsilon/\ell)) O(m (H_k-H_{k-\ell})/ \ell)$. 
Now, choose $\ell$ candidates as follows. 
First, choose the graph center as we do when we want to choose a single candidate. 
Then, remove this candidate (and her edges) from the graph and choose the graph center of the resulting graph. 
Continue this process until we have chosen $\ell$ candidates. 
Throughout this process, we incur a cumulative regret of $\ell O(\varepsilon/\ell) + O(\varepsilon / (m (H_k-H_{k-\ell}))) (\sum_{i=1}^\ell (m-i)/(k-i+1)) = O(\varepsilon)$. 
The first term bounds the cumulative regret of choosing each candidate over a candidate who was not some expert's favorite. 
The second term bounds the cumulative regret of choosing each candidate over another candidate who was some expert's favorite. 
The sum captures the fact that removing a candidate decreases the connectivity in the remaining graph between each expert's favorite remaining candidate by at most 1 and hence, increases the diameter.
To obtain a nearly matching lower bound, consider our lower bound construction but with a maximal $\Omega(\varepsilon/\ell)$-packing of $\Theta$ and qualities that range from 0 to $2\varepsilon/\ell$ instead. It will follow from the same arguments as those given in Section 5 that if there exists a ball in the packing with fewer than $\Omega(m/k)$ experts, then there exists a set of $m$ candidates such that at least half of the $\ell$ candidates chosen by a deterministic voting rule will have regret $\Omega(\varepsilon/\ell)$ each, so the cumulative regret is at least $\Omega(\varepsilon)$. Thus, any committee that obtains at most $\varepsilon$ cumulative regret requires size $M(\Theta, \lVert\cdot\rVert, \Omega(\varepsilon/\ell))\Omega(m/k)$. 

\subsection{Multi-Round Voting}

In the multi-round setting, we wish to select a single candidate with regret at most $\varepsilon$, but we have access to multiple rounds of voting, in each of which each expert reports only her favorite candidate.
More specifically, before the candidates arrive, we can choose and commit to a (possibly different) committee in each round.
Then, after the candidates arrive, in each round, we elicit the votes from the experts in that round and remove some candidates from consideration.
This setting is quite natural: hiring processes in reality, e.g., faculty recruiting, are often split into multiple rounds.

We show that with two rounds of voting, the number of experts needed to guarantee a regret of at most $\varepsilon$ becomes independent of the number of candidates.
The idea is to introduce a screening process before the selection process.
Intuitively, by introducing a screening process, we can limit the number of candidates to choose from during the selection process.
If we choose a screening committee that is sufficient large, then we can guarantee that the candidates who do not pass the screening process are not much better than the candidates who do pass.
Moreover, if we choose a screening committee whose size only depends on $\varepsilon$, then the number of candidates we have to choose from during the selection process will only depend on $\varepsilon$ as well and will not depend on the number $m$ of candidates who arrived during the screening process.

\begin{theorem}\label{theorem:multi-round-voting}
Let $(V, \norm{\cdot})$ be a normed vector space and $\Theta \subseteq V$.
For all $\varepsilon > 0$, there exists a screening committee of size $N(\Theta, \norm{\cdot}, \varepsilon/2)$ and a selection committee of size at most $8N(\Theta, \norm{\cdot}, \varepsilon/2)^2$ that together guarantee a regret of at most $\varepsilon$ regardless of the number of candidates.
\end{theorem}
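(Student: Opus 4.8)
The plan is to realize the two-round committee as a screening step that collapses the effective number of candidates to a quantity depending only on the cover, followed by an application of \Cref{theorem:general-upper-bound} with $k=1$ on the surviving candidates. For the screening committee, take $E_{\text{screen}}$ to be a minimal $(\varepsilon/2)$-cover of $\Theta$, so $\abs{E_{\text{screen}}} = N(\Theta, \norm{\cdot}, \varepsilon/2)$. In the first round each expert reports her favorite candidate, and we retain exactly the set $V(\succ^1)$ of candidates who rank first for some screening expert. The crucial observation is that $\abs{V(\succ^1)} \leq N(\Theta, \norm{\cdot}, \varepsilon/2)$ since at most one candidate survives per screening expert, so the number of surviving candidates is bounded by a quantity that depends only on $\varepsilon$ (and $d$), not on $m$.

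The first step is to argue that screening is nearly lossless: no discarded candidate is much better than a surviving one. Because $E_{\text{screen}}$ is an $(\varepsilon/2)$-cover of $\Theta$, every candidate location $c_j$ has some expert $e_i$ within $\varepsilon/2$. The candidate $\ell = {\succ_i^1}(1)$ that this expert actually votes for survives, and since $e_i$ prefers $\ell$ to $j$, we have $q_\ell - \metric{e_i}{c_\ell} \geq q_j - \metric{e_i}{c_j}$, which after using $\metric{e_i}{c_j} \leq \varepsilon/2$ and the triangle/nonnegativity of distances yields $q_\ell \geq q_j - \varepsilon/2$. Hence the best surviving candidate has quality within $\varepsilon/2$ of the global best. (This is exactly the $C \setminus V$ bookkeeping already carried out inside the proof of \Cref{lemma:main-lemma} via $w(\ell, j) \leq \metric{e_i}{c_j} \leq r$, specialized to $r = \varepsilon/2$.)

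The second step is to run a fresh selection round on the $\leq N(\Theta, \norm{\cdot}, \varepsilon/2)$ surviving candidates, treating this bound as the new candidate count $m'$ and invoking \Cref{theorem:general-upper-bound} with $k=1$ and accuracy $\varepsilon/2$ rather than $\varepsilon$. That theorem gives a committee of size $(8(m'-1)+1)(N(\Theta, \norm{\cdot}, \varepsilon/4) - 1)$ guaranteeing regret $\varepsilon/2$ among the survivors; substituting $m' = N(\Theta, \norm{\cdot}, \varepsilon/2)$ and crudely bounding gives size $O(N(\Theta, \norm{\cdot}, \varepsilon/2) \cdot N(\Theta, \norm{\cdot}, \varepsilon/4))$. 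To hit the stated $8 N(\Theta, \norm{\cdot}, \varepsilon/2)^2$ bound one instead runs the selection round at accuracy $\varepsilon$ directly (not $\varepsilon/2$), using $N(\Theta, \norm{\cdot}, \varepsilon/2)$ as both the candidate count and the cover size, so that the selection committee has size roughly $8 m' \cdot (N(\Theta,\norm{\cdot},\varepsilon/2) - 1) \leq 8 N(\Theta, \norm{\cdot}, \varepsilon/2)^2$. Finally, combine the two guarantees: the regret against the \emph{best surviving} candidate is at most $\varepsilon/2$, and the best surviving candidate is within $\varepsilon/2$ of the global best, so by the triangle inequality on qualities the total regret is at most $\varepsilon$.

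The main obstacle is getting the accuracy budget to split cleanly so that both the screening loss and the selection error sum to exactly $\varepsilon$ while keeping the selection-committee cover at the coarse scale $\varepsilon/2$ (rather than $\varepsilon/4$) needed for the $N(\cdot, \varepsilon/2)^2$ bound. The tension is that a naive $\varepsilon/2 + \varepsilon/2$ split would force a finer cover in the selection round and blow up the constant; the resolution is to absorb the screening slack into the regret analysis of \Cref{lemma:main-lemma} by noting that the selection round's own $(\varepsilon/2)$-cover already dominates the discarded candidates, so the two rounds can share a single $(\varepsilon/2)$-scale cover and the $r$-term in \Cref{lemma:main-lemma} need only be charged once. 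Care is also needed to confirm that the selection round's graph-diameter argument goes through with $m' = N(\Theta, \norm{\cdot}, \varepsilon/2)$ candidates and $k=1$, which it does verbatim, since nothing in \Cref{lemma:bound-on-diameter} depends on the candidates being the original ones.
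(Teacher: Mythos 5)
Your proposal is correct and follows the paper's skeleton---screen with a minimal $(\varepsilon/2)$-cover, keep only candidates who receive a vote, then elect among the at most $N(\Theta,\norm{\cdot},\varepsilon/2)$ survivors---but your second step is genuinely sharper than the paper's own proof, and that sharpness is actually needed to get the stated constant. The paper's proof black-boxes \Cref{theorem:general-upper-bound} with $m \coloneqq N(\Theta,\norm{\cdot},\varepsilon/2)$ and accuracy $\varepsilon/2$ and adds the two regrets; instantiated literally, that uses a cover at scale $\varepsilon/4$ and yields a selection committee of size roughly $8N(\Theta,\norm{\cdot},\varepsilon/2)\,N(\Theta,\norm{\cdot},\varepsilon/4)$ rather than the stated $8N(\Theta,\norm{\cdot},\varepsilon/2)^2$ (and reading ``$\varepsilon \coloneqq \varepsilon/2$'' as the cover scale instead would give selection regret $\varepsilon$, hence total regret $3\varepsilon/2$). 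You identify exactly this tension and resolve it correctly: run the selection round at cover scale $\varepsilon/2$, let the two rounds share the same cover, and charge the $r$-term of \Cref{lemma:main-lemma} only once. This works because the screening expert within $\varepsilon/2$ of a discarded candidate votes in round two for the very same survivor she chose in round one, so that survivor is among the candidates receiving a round-two vote, and the regret against such candidates is controlled by the $2\delta\cdot\mathrm{diam} \leq \varepsilon/2$ term alone. Two details to pin down in a full write-up: (i) the round-two rule must be the argmax-proxy-quality rule of \Cref{corollary:new-voting-rule} rather than the minimal-regret rule, since the minimal-regret candidate only inherits the bound $\varepsilon$ against all survivors, not $\varepsilon/2$ against those receiving round-two votes; and (ii) your intermediate summary that ``the regret against the best surviving candidate is at most $\varepsilon/2$'' is not right at the coarse scale (the best survivor need not receive a round-two vote); the correct accounting is the case split implicit in your final paragraph---whether or not the global optimum survives screening---together with consistent tie-breaking across rounds so that a screening expert's round-one favorite is again her round-two favorite.
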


\begin{proof}%[Proof of Theorem \ref{theorem:multi-round-voting}]
In the screening round, place an $(\varepsilon/2)$-cover over $\Theta$ of minimum size.
Each candidate who receives a vote in the screening round moves on to the selection round.
Note that the size of the screening committee is $N(\Theta, \norm{\cdot}, \varepsilon/2)$, so there are at most this many candidates who pass the screening round.
Moreover, any candidate who did not receive a vote in this round is at most $\varepsilon/2$ better than a candidate who did receive a vote.

Now, since there are only $N(\Theta, \norm{\cdot}, \varepsilon/2)$ candidates in the selection round, by Theorem \ref{theorem:general-upper-bound}, there exists a committee of size at most $8N(\Theta, \norm{\cdot}, \varepsilon/2)^2$ (take $m \coloneqq N(\Theta, \norm{\cdot}, \varepsilon/2)$ and $\varepsilon \coloneqq \varepsilon/2$) that guarantees a regret of at most $\varepsilon/2$ if we choose the minimal regret candidate.
Thus, after two rounds of voting, we have incurred a cumulative regret of at most $\varepsilon$.
\end{proof}

\bibliographystyle{ACM-Reference-Format}
\bibliography{acm-ec-23}

% ============== beginning of appendix ==============

\appendix

\section{Bounds in Euclidean Space}

To compare our bounds with those derived by \citet{Feldman_Mansour_Nisan_Oren_Tennenholtz_2020}, we derive explicit bounds for the $d$-dimensional unit hypercube $[0,1]^d$ equipped with the $\ell_p$ norm $\norm{\cdot}_p$, the setting studied by \citet{Feldman_Mansour_Nisan_Oren_Tennenholtz_2020}.
We first derive bounds for general convex subsets of $d$-dimensional Euclidean space equipped with any norm using standard volume arguments to bound the covering and packing numbers.

\begin{lemma}[see \citet{yihong}]\label{lemma:yihong's-lemma} 
Let $\norm{\cdot}$ be a norm in $\RR^d$ and $\Theta \subseteq V$ be convex and contain a ball of radius $\varepsilon$.
\[
    \left(\frac{1}{\varepsilon}\right)^d \frac{\mathrm{vol}(\Theta)}{\mathrm{vol}(B(0,1))} \leq N(\Theta, \norm{\cdot}, \varepsilon) \leq M(\Theta, \norm{\cdot}, \varepsilon) \leq \left(\frac{3}{\varepsilon}\right)^d \frac{\mathrm{vol}(\Theta)}{\mathrm{vol}(B(0,1))}
\]
\end{lemma}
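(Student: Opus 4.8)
The plan is to prove the standard volumetric bounds on covering and packing numbers via three separate arguments: the packing-covering comparison, the volume lower bound on covers, and the volume upper bound on packings. These are classical and I would organize the proof around the chain of inequalities, establishing each link in turn.

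\textbf{First I would establish the middle inequality} $N(\Theta, \norm{\cdot}, \varepsilon) \leq M(\Theta, \norm{\cdot}, \varepsilon)$, which relates covering and packing numbers. Take a maximal $\varepsilon$-packing $X$ of $\Theta$, so $\abs{X} = M(\Theta, \norm{\cdot}, \varepsilon)$. By maximality, no point of $\Theta$ can be added to $X$ while preserving the packing property, which means every $\theta \in \Theta$ is within distance $\varepsilon$ of some $x \in X$ (otherwise we could add $\theta$). Hence $X$ is itself an $\varepsilon$-cover, giving $N(\Theta, \norm{\cdot}, \varepsilon) \leq \abs{X} = M(\Theta, \norm{\cdot}, \varepsilon)$.

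\textbf{Next I would prove the two outer bounds} using volume-packing arguments, which is the heart of the lemma. For the lower bound on $N$, let $X$ be a minimal $\varepsilon$-cover, so the $\varepsilon$-balls centered at points of $X$ cover $\Theta$. Taking volumes, $\mathrm{vol}(\Theta) \leq \abs{X} \cdot \mathrm{vol}(B(0,\varepsilon)) = \abs{X} \cdot \varepsilon^d \mathrm{vol}(B(0,1))$, where I use that a ball of radius $\varepsilon$ has volume $\varepsilon^d$ times that of the unit ball under any norm. Rearranging yields the claimed lower bound $(1/\varepsilon)^d \mathrm{vol}(\Theta)/\mathrm{vol}(B(0,1)) \leq N(\Theta, \norm{\cdot}, \varepsilon)$. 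For the upper bound on $M$, let $X$ be a maximal $\varepsilon$-packing. The balls of radius $\varepsilon/2$ centered at points of $X$ are pairwise disjoint (since centers are at least $\varepsilon$ apart) and are contained in the $\varepsilon/2$-neighborhood of $\Theta$. Here I would invoke convexity and the assumption that $\Theta$ contains a ball of radius $\varepsilon$: the $(\varepsilon/2)$-enlargement of $\Theta$ has volume at most $(3/2)^d \mathrm{vol}(\Theta)$ (scaling $\Theta$ about the center of the inscribed ball by a factor of $3/2$ contains this enlargement). Thus $\abs{X} \cdot (\varepsilon/2)^d \mathrm{vol}(B(0,1)) \leq (3/2)^d \mathrm{vol}(\Theta)$, giving $M(\Theta, \norm{\cdot}, \varepsilon) \leq (3/\varepsilon)^d \mathrm{vol}(\Theta)/\mathrm{vol}(B(0,1))$.

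\textbf{The main obstacle} will be the volume bound for the enlargement of $\Theta$ in the packing upper bound: one must carefully use convexity together with the inscribed-ball hypothesis to control $\mathrm{vol}(\Theta + B(0,\varepsilon/2))$ by a constant multiple of $\mathrm{vol}(\Theta)$. The key geometric fact is that for a convex body containing a ball of radius $\varepsilon$ centered at some point $x_0$, the Minkowski sum $\Theta + B(0, \varepsilon/2)$ is contained in the dilation $x_0 + \tfrac{3}{2}(\Theta - x_0)$, since any displacement of magnitude $\varepsilon/2$ can be absorbed by scaling the inscribed radius-$\varepsilon$ ball. Getting the constant exactly right (to match the factor of $3$ in the statement) requires this dilation argument rather than a naive union bound; everything else is routine volume accounting. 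I would cite \citet{yihong} for the precise constants, as the lemma statement already does.
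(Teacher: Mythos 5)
Your proof is correct and is precisely the standard volumetric argument that the paper itself defers to by citing \citet{yihong} rather than proving the lemma: maximal packings are covers, a union bound on the cover gives the volume lower bound, and disjoint half-radius balls plus the dilation containment $\Theta + B(0,\varepsilon/2) \subseteq x_0 + \tfrac{3}{2}(\Theta - x_0)$ (valid by convexity and the inscribed ball $B(x_0,\varepsilon) \subseteq \Theta$, via $\tfrac{2}{3}\theta + \tfrac{1}{3}(x_0+2u) \in \Theta$) give the upper bound with the stated constant $3$. Since the paper supplies no proof of its own, there is nothing further to compare; your argument fills the gap correctly.
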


\begin{theorem}\label{theorem:combination}
Let $\norm{\cdot}$ be a norm in $\RR^d$, $\varepsilon > 0$, $\Theta \subseteq \RR^d$ be convex and contain a ball of radius $24 \varepsilon$, and $E$ be a universal committee of minimum size that guarantees a regret of at most $\varepsilon$ for any set of $m \geq k(12^d + 1)$ candidates in $\Theta$.
\[
    \Omega\left(\frac{m}{k}\left(\frac{1}{288\varepsilon}\right)^d \frac{\mathrm{vol}(\Theta)}{\mathrm{vol}(B(0,1))} \right) \leq \abs{E} \leq O\left(\frac{m}{k}\left(\frac{6}{\varepsilon}\right)^d \frac{\mathrm{vol}(\Theta)}{\mathrm{vol}(B(0,1))} \right)
\]
\end{theorem}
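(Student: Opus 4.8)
The plan is to combine the two general bounds---Theorem \ref{theorem:general-upper-bound} and Theorem \ref{theorem:general-lower-bound}---with the volume-based estimates on covering and packing numbers from Lemma \ref{lemma:yihong's-lemma}. The two hypotheses of the theorem are exactly what is needed to invoke these results: the assumption that $\Theta$ contains a ball of radius $24\varepsilon$ (hence also of radius $\varepsilon/2$) lets us apply Lemma \ref{lemma:yihong's-lemma} to $\Theta$ at scales $\varepsilon/2$ and $24\varepsilon$, while the assumption $m \geq k(12^d + 1)$ will be shown to imply the threshold $m \geq k(N(B(0,1), \norm{\cdot}, 1/4) + 1)$ required by Theorem \ref{theorem:general-lower-bound}.

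For the upper bound, I would start from Theorem \ref{theorem:general-upper-bound}, which furnishes a committee of size $(8(m-1)/k + 1)(N(\Theta, \norm{\cdot}, \varepsilon/2) - 1)$. Using $N \leq M$ and the right-hand estimate of Lemma \ref{lemma:yihong's-lemma} at scale $\varepsilon/2$ gives $N(\Theta, \norm{\cdot}, \varepsilon/2) \leq (6/\varepsilon)^d \, \mathrm{vol}(\Theta)/\mathrm{vol}(B(0,1))$. Since the hypothesis on $m$ forces $m/k \geq 12^d + 1 \geq 2$, the prefactor $8(m-1)/k + 1$ is $O(m/k)$, and multiplying the two estimates yields the claimed upper bound.

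For the lower bound, I would first bound $N(B(0,1), \norm{\cdot}, 1/4) \leq M(B(0,1), \norm{\cdot}, 1/4) \leq 12^d$ by applying the right-hand estimate of Lemma \ref{lemma:yihong's-lemma} to $B(0,1)$ at scale $1/4$; this shows $m \geq k(12^d+1) \geq k(N(B(0,1), \norm{\cdot}, 1/4) + 1)$, so Theorem \ref{theorem:general-lower-bound} applies and gives $\abs{E} \geq M(\Theta, \norm{\cdot}, 24\varepsilon)(\frac{m}{k(N(B(0,1), \norm{\cdot}, 1/4)+1)} - 1)$. I would then lower-bound $M(\Theta, \norm{\cdot}, 24\varepsilon) \geq N(\Theta, \norm{\cdot}, 24\varepsilon) \geq (1/(24\varepsilon))^d \, \mathrm{vol}(\Theta)/\mathrm{vol}(B(0,1))$ using the left-hand estimate of Lemma \ref{lemma:yihong's-lemma}, and upper-bound $N(B(0,1), \norm{\cdot}, 1/4) + 1 \leq 2 \cdot 12^d$. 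Collecting the constants, $\frac{1}{12^d}(\frac{1}{24\varepsilon})^d = (\frac{1}{288\varepsilon})^d$, which produces the claimed lower bound.

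The computations are routine; the only delicate point is the bookkeeping of the additive slack terms---the ``$-1$'' in the parenthetical factor of the lower bound and the ``$+1$'' in the prefactor of the upper bound. These are absorbed into the $\Omega(\cdot)$ and $O(\cdot)$ notation in the natural asymptotic regime where $m/k$ is large relative to $12^d$; concretely, once $m/(k(N(B(0,1), \norm{\cdot}, 1/4)+1)) \geq 2$ one has $\frac{m}{k(N(B(0,1), \norm{\cdot}, 1/4)+1)} - 1 \geq \frac{1}{2}\cdot\frac{m}{k(N(B(0,1), \norm{\cdot}, 1/4)+1)} = \Omega(m/(k\,12^d))$, and I would make this regime explicit. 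I would also double-check that every invocation of Lemma \ref{lemma:yihong's-lemma} respects its ``contains a ball of radius $\varepsilon$'' hypothesis at the relevant scale, which is guaranteed for $\Theta$ by the standing assumption and trivially for $B(0,1)$.
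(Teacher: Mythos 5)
Your proposal is correct and takes essentially the same route as the paper's own proof: both combine Theorem \ref{theorem:general-upper-bound} and Theorem \ref{theorem:general-lower-bound} with the volume estimates of Lemma \ref{lemma:yihong's-lemma}, using $N(B(0,1), \norm{\cdot}, 1/4) \leq 12^d$ both to verify the hypothesis of the lower bound and to collect the constant $(1/(288\varepsilon))^d$. Your explicit handling of the additive $\pm 1$ slack terms and of the ball-containment hypotheses of Lemma \ref{lemma:yihong's-lemma} is in fact slightly more careful than the paper, which simply leaves the factor $\left(\frac{m}{k(12^d+1)} - 1\right)$ inside the $\Omega(\cdot)$.
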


We highlight that the upper and lower bounds are asymptotically off by no more than a constant raised to the dimension of the underlying space.

\begin{proof}[Proof of Theorem \ref{theorem:combination}]
By Theorem \ref{theorem:general-upper-bound} and Lemma \ref{lemma:yihong's-lemma},
\[
    \abs{E} \leq (8(m-1)/k + 1) (N\left(\Theta, \norm{\cdot}, \varepsilon/2\right) - 1) \leq (8(m-1)/k + 1)\left(\frac{6}{\varepsilon}\right)^d \frac{\mathrm{vol}(\Theta)}{\mathrm{vol}(B(0,1))}
\]
By Theorem \ref{theorem:general-lower-bound} and Lemma \ref{lemma:yihong's-lemma},
\[
    \abs{E} \geq M(\Theta, \norm{\cdot}, 24\varepsilon)\left(\frac{m}{k(N(B(0, 1), \norm{\cdot}, 1/4) + 1)} - 1\right) \geq \left(\frac{1}{24\varepsilon}\right)^d \frac{\mathrm{vol}(\Theta)}{\mathrm{vol}(B(0,1))} \left(\frac{m}{k(12^d + 1)} - 1\right)
\]
\end{proof}

As a corollary of Theorem \ref{theorem:combination}, we obtain the following explicit upper and lower bounds for $([0, 1]^d, \norm{\cdot}_p)$ by bounding the volume of the unit ball with respect to the $\ell_p$ norm.

\begin{theorem}\label{theorem:lp_bounds}
Let $E$ be a universal committee of minimum size that guarantees a regret of at most $\varepsilon \in (0, 1/48)$ for any set of $m \geq 12^d + 1$ candidates in $([0,1]^d, \norm{\cdot}_p)$.
\[
    \Omega\left(m\left(\frac{1}{1152 e^{1/12} \varepsilon \sqrt{\pi}}\right)^d d^{d/p} \sqrt{\frac{d}{p} + 1} \right) \leq \abs{E} \leq O\left(m\left(\frac{3e}{\varepsilon\sqrt{2\pi}}\right)^d d^{d/p} \sqrt{\frac{d}{p} + 1} \right)
\]
\end{theorem}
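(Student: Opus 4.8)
The plan is to derive \Cref{theorem:lp_bounds} as a specialization of \Cref{theorem:combination} to $\Theta = [0,1]^d$, $k = 1$, and $\norm{\cdot} = \norm{\cdot}_p$, followed by an explicit evaluation of the dimension-dependent factor $\mathrm{vol}(\Theta)/\mathrm{vol}(B(0,1))$. First I would verify the hypotheses of \Cref{theorem:combination}. Since $\varepsilon < 1/48$ we have $24\varepsilon < 1/2$, and because $\norm{\cdot}_\infty \leq \norm{\cdot}_p$ for all $p \geq 1$, the $\ell_p$ ball of radius $24\varepsilon$ centered at $(1/2, \dots, 1/2)$ is contained in $[0,1]^d$, so $\Theta$ contains a ball of radius $24\varepsilon$. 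Moreover, with $k = 1$, the hypothesis $m \geq 12^d + 1$ is exactly $m \geq k(12^d + 1)$, which suffices because $N(B(0,1), \norm{\cdot}_p, 1/4) \leq 12^d$ by \Cref{lemma:yihong's-lemma} (take $\Theta = B(0,1)$, $\varepsilon = 1/4$).

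It then remains only to estimate $\mathrm{vol}(\Theta)/\mathrm{vol}(B(0,1))$ and substitute it into the bounds of \Cref{theorem:combination}. Here $\mathrm{vol}([0,1]^d) = 1$ and the $\ell_p$ unit ball has the closed form
\[
    \mathrm{vol}(B(0,1)) = \frac{(2\Gamma(1 + 1/p))^d}{\Gamma(1 + d/p)},
\]
so $\mathrm{vol}(\Theta)/\mathrm{vol}(B(0,1)) = \Gamma(1 + d/p)/(2\Gamma(1 + 1/p))^d$; substituting this into the two bounds of \Cref{theorem:combination} is immediate, and all the remaining work is the estimation of this Gamma-function ratio.

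To estimate it I would apply two-sided Stirling bounds such as $\sqrt{2\pi}\,x^{x + 1/2}e^{-x} \leq \Gamma(1 + x) \leq \sqrt{2\pi}\,x^{x + 1/2}e^{-x}e^{1/(12x)}$ to the numerator at $x = d/p$, which produces the factor $(d/p)^{d/p} = d^{d/p}p^{-d/p}$ and a $\sqrt{d/p}$ term that I would relax to $\sqrt{d/p + 1}$. For the denominator I would use only the elementary estimate $\Gamma_{\min} \leq \Gamma(1 + 1/p) \leq 1$, valid because $1/p \in (0,1]$; the point is that the factor $\Gamma(1 + 1/p)^d$, combined with the $p^{-d/p}$ and $e^{-d/p}$ terms from the numerator, is precisely what turns $(d/p)^{d/p}$ into the advertised $d^{d/p}$ and sets the numerical base of the exponential. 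Pairing the upper Stirling bound with \Cref{theorem:combination}'s factor $6^d$ (and $\Gamma(1 + 1/p) \leq 1$) gives the base $3e/(\varepsilon\sqrt{2\pi})$, while pairing the lower Stirling bound with the factor $(288)^{-d}$ (and the correction $e^{1/12}$) gives the base $1/(1152\,e^{1/12}\varepsilon\sqrt{\pi})$.

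The hard part will be exactly this bookkeeping: one must apply the Gamma estimates \emph{asymmetrically} on the two sides and retain the $\Gamma(1 + 1/p)^d$ term rather than discarding it, so that the $p$-dependence cancels to leave $d^{d/p}$ and the polynomial correction comes out as the clean $\sqrt{d/p + 1}$ rather than $\sqrt{2\pi d/p}$. It is also worth confirming that the $p$-dependent quantity $2\,p^{1/p}e^{1/p}\Gamma(1 + 1/p)$, which governs the base of the exponential, stays within $[2, 2e]$ uniformly in $p \geq 1$; this boundedness is what guarantees that the upper and lower bases differ only by a constant raised to the $d$-th power. No individual step is deep, but the stated constants emerge only if these estimates are handled carefully.
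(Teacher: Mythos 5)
Your proposal is correct and takes essentially the same approach as the paper: the paper likewise specializes \Cref{theorem:combination} to $\Theta = [0,1]^d$, uses the closed form $\mathrm{vol}(B(0,1)) = \left(2\Gamma(1/p+1)\right)^d/\Gamma(d/p+1)$, and applies two-sided Stirling-type bounds to both Gamma factors (Lemmas \ref{lemma:bounds-gamma-x}--\ref{lemma:bounds-volume-unit-ball}). The only cosmetic difference is bookkeeping: the paper retains the $(pe)^{-1/p}$ factor in its bounds on $\Gamma(1/p+1)$ so that the $p$-dependence cancels exactly against the numerator, whereas you bound the combined base $2p^{1/p}e^{1/p}\Gamma(1+1/p) \in [2, 2e]$ uniformly in $p \geq 1$, which yields per-dimension constants at least as strong as the stated ones.
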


In contrast, the arguments in \citet{Feldman_Mansour_Nisan_Oren_Tennenholtz_2020} obtain a lower bound of $\Omega(\max\{m, c_1^d \varepsilon^{-d} d^{- d/p}\})$ and an upper bound of $O(c_2^d m^d \varepsilon^{-d} d^{d/p}\sqrt{d/p + 1})$ for some constants $c_1$ and $c_2$.
We now proceed to bound the volume of the unit ball with respect to $\norm{\cdot}_p$, from which Theorem \ref{theorem:lp_bounds} will readily follow.

\begin{lemma}[{\citet[Eq.~5.6.1]{NIST:DLMF}}]\label{lemma:bounds-gamma-x}
For all $x > 0$,
\[
    \sqrt{\frac{2\pi}{x}} \left(\frac{x}{e}\right)^x < \Gamma(x) < \sqrt{\frac{2\pi}{x}} \left(\frac{x}{e}\right)^x e^{\frac{1}{12x}}
\]
\end{lemma}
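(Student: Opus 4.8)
The plan is to recast the statement additively and prove it by telescoping. Define the \emph{Binet remainder} $\mu(x) \coloneqq \log\Gamma(x) - (x - \tfrac12)\log x + x - \tfrac12\log(2\pi)$. Exponentiating and using the elementary identity $\sqrt{2\pi}\,x^{x-1/2}e^{-x} = \sqrt{2\pi/x}\,(x/e)^x$ gives $\Gamma(x) = \sqrt{2\pi/x}\,(x/e)^x\, e^{\mu(x)}$, so the claimed two-sided bound is \emph{exactly equivalent} to $0 < \mu(x) < \tfrac{1}{12x}$ for every $x > 0$. Thus it suffices to establish these bounds on $\mu$.

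First I would derive the one-step recurrence. Substituting $\Gamma(x+1) = x\Gamma(x)$ into the definition and cancelling makes almost everything collapse, leaving $\mu(x) - \mu(x+1) = (x + \tfrac12)\log(1 + \tfrac1x) - 1$. To bound this difference I set $u \coloneqq \tfrac{1}{2x+1} \in (0,1)$, for which $1 + \tfrac1x = \tfrac{1+u}{1-u}$ and $x + \tfrac12 = \tfrac{1}{2u}$; then $\log\tfrac{1+u}{1-u} = 2\big(u + \tfrac{u^3}{3} + \tfrac{u^5}{5} + \cdots\big)$ yields the clean series $\mu(x) - \mu(x+1) = \tfrac{u^2}{3} + \tfrac{u^4}{5} + \tfrac{u^6}{7} + \cdots$. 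Every term is positive, so the difference is positive; and comparing the tail with the geometric series $\tfrac{u^2}{3}\big(1 + u^2 + u^4 + \cdots\big) = \tfrac{u^2}{3(1-u^2)}$, which simplifies via $1 - u^2 = \tfrac{4x(x+1)}{(2x+1)^2}$ to exactly $\tfrac{1}{12x(x+1)}$, gives $0 < \mu(x) - \mu(x+1) < \tfrac{1}{12x(x+1)}$.

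Since these differences are $O(x^{-2})$, the telescoping decomposition $\mu(x) = \sum_{n\ge 0}\big[\mu(x+n) - \mu(x+n+1)\big] + \lim_{n\to\infty}\mu(x+n)$ converges, and positivity of each term together with the telescoping bound $\sum_{n\ge0}\tfrac{1}{12}\big(\tfrac{1}{x+n} - \tfrac{1}{x+n+1}\big) = \tfrac{1}{12x}$ will give both inequalities \emph{provided} $\lim_{n\to\infty}\mu(x+n) = 0$. Establishing this last limit is the main obstacle: it is precisely the assertion that the Stirling constant equals $\sqrt{2\pi}$, and it is the only place where the factor $\sqrt{2\pi}$ genuinely enters. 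I would obtain it from Wallis's product (equivalently from the Legendre duplication formula), the standard ingredient that pins down the constant in Stirling's approximation. Exponentiating $0 < \mu(x) < \tfrac1{12x}$ then recovers the lemma. As an alternative to the telescoping, one may invoke Binet's first integral formula $\mu(x) = \int_0^\infty \big(\tfrac{1}{e^t-1} - \tfrac1t + \tfrac12\big)\tfrac{e^{-xt}}{t}\,dt$ and check $0 < \tfrac1t\big(\tfrac12\coth\tfrac t2 - \tfrac1t\big) < \tfrac1{12}$ from the partial-fraction expansion $\coth u - \tfrac1u = \sum_{n\ge1}\tfrac{2u}{u^2 + n^2\pi^2}$ (bounded above by $\tfrac{u}{3}$ since $\sum_{n\ge1} n^{-2} = \tfrac{\pi^2}{6}$); this trades the Wallis-product step for verifying the integral representation.
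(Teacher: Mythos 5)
The paper offers no proof of this lemma at all: it is quoted directly from \citet[Eq.~5.6.1]{NIST:DLMF}, so there is no internal argument to compare yours against. Your proposal is a correct sketch of the classical proof of Stirling's bounds with the $e^{1/(12x)}$ error factor (essentially the Binet/Robbins argument): the reduction to $0 < \mu(x) < \tfrac{1}{12x}$, the recurrence $\mu(x)-\mu(x+1) = (x+\tfrac12)\log(1+\tfrac1x)-1$, the power series in $u = \tfrac{1}{2x+1}$, and the telescoping comparison against $\tfrac{1}{12}\bigl(\tfrac{1}{x+n}-\tfrac{1}{x+n+1}\bigr)$ are all accurate. The one step you flag as the main obstacle deserves more care than your one-line fix suggests: Wallis's product gives $\mu(n)\to 0$ only along \emph{integer} $n$, whereas your telescoping needs $\lim_{n\to\infty}\mu(x+n)=0$ for every real $x>0$, and a priori that limit (which exists, since the tail differences are summable) could depend on the fractional part of $x$. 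Bridging from integers to reals requires an extra ingredient, e.g.\ Gauss's limit $\Gamma(x+n)/(n^x\,\Gamma(n))\to 1$ (or log-convexity via Bohr--Mollerup, or a careful use of the duplication formula), after which one gets $\mu(x+n)=\mu(n)+o(1)$ and the limit is identified. This is standard but not automatic, so as written the first route has a genuine (if routine) hole. Your alternative route via Binet's first integral avoids the issue entirely: granting the representation $\mu(x)=\int_0^\infty f(t)e^{-xt}\,dt$ with $f(t)=\tfrac1t\bigl(\tfrac12\coth\tfrac t2-\tfrac1t\bigr)$, the partial-fraction bound $0<f(t)<\tfrac1{12}$ (using $\sum_{n\ge1}n^{-2}=\pi^2/6$) immediately yields both strict inequalities upon integrating against $e^{-xt}$; the cost is that the integral representation itself must be proved. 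Either route, completed, is a legitimate self-contained replacement for the paper's citation.
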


\begin{lemma}\label{lemma:bounds-gamma-x+1}
For all $x > 0$,
\[
    \frac{\sqrt{2\pi(x+1)}}{e} \left(\frac{x}{e}\right)^x < \Gamma(x+1) < \sqrt{2\pi(x+1)}\left(\frac{x}{e}\right)^x e^{1/12}
\]
\end{lemma}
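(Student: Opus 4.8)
The plan is to apply Lemma~\ref{lemma:bounds-gamma-x} not through the recurrence $\Gamma(x+1) = x\,\Gamma(x)$---which, after substituting the bounds for $\Gamma(x)$, leaves factors $\sqrt{x}$ and $e^{1/(12x)}$ that are too weak to match the claim once $x$ is small---but directly at the shifted argument $x+1$. Since $x+1 > 0$ whenever $x > 0$, Lemma~\ref{lemma:bounds-gamma-x} immediately gives
\[
    \sqrt{\frac{2\pi}{x+1}}\left(\frac{x+1}{e}\right)^{x+1} < \Gamma(x+1) < \sqrt{\frac{2\pi}{x+1}}\left(\frac{x+1}{e}\right)^{x+1} e^{\frac{1}{12(x+1)}}.
\]
It then remains to compare each of these two expressions, written in terms of $x+1$, against the corresponding target bound, written in terms of $x$.

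First I would form the ratio of the left expression above to the claimed lower bound $\tfrac{\sqrt{2\pi(x+1)}}{e}(x/e)^x$. After cancelling the common $\sqrt{2\pi}$ and the powers of $e$, and using $\tfrac{(x+1)^{x+1}}{(x+1)\,x^x} = \left(\tfrac{x+1}{x}\right)^x$, this ratio simplifies exactly to $\left(1+\tfrac1x\right)^{x}$. Similarly, the ratio of the right expression above to the claimed upper bound $\sqrt{2\pi(x+1)}(x/e)^x e^{1/12}$ simplifies to $\left(1+\tfrac1x\right)^x e^{-1}\exp\!\left(\tfrac{1}{12(x+1)}-\tfrac1{12}\right)$. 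Both simplifications are routine once the power-of-$(x+1)$ bookkeeping above is carried out.

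The whole argument then rests on the single elementary fact that $1 < \left(1+\tfrac1x\right)^x < e$ for all $x>0$: the lower inequality holds because the base exceeds $1$ and the exponent is positive, while the upper inequality is equivalent, after taking logarithms and setting $t = 1/x$, to $\ln(1+t) < t$. For the lower bound the first ratio equals $\left(1+\tfrac1x\right)^x > 1$, so the substituted left expression already dominates the target. For the upper bound, $\left(1+\tfrac1x\right)^x < e$ cancels the $e^{-1}$ factor, and the remaining exponent $\tfrac{1}{12(x+1)}-\tfrac1{12} = -\tfrac{x}{12(x+1)}$ is strictly negative, so its exponential lies below $1$; the product is thus strictly less than $1$ and the substituted right expression is dominated by the target.

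I expect the only genuine obstacle to be recognizing the correct substitution---applying Lemma~\ref{lemma:bounds-gamma-x} at $x+1$ rather than at $x$. Once that choice is made, everything collapses to the $\left(1+\tfrac1x\right)^x$ estimate and no case analysis on the size of $x$ is required; by contrast, the recurrence-based route forces an awkward split for small $x$ and does not directly deliver the clean two-sided bound for all $x>0$.
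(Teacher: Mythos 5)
Your proof is correct and follows essentially the same route as the paper: both apply Lemma~\ref{lemma:bounds-gamma-x} at the shifted argument $x+1$ and then reduce everything to the elementary bounds $1 < \left(1+\tfrac{1}{x}\right)^x < e$ (the paper writes these steps as $(x+1)^x > x^x$ for the lower bound and the cancellation $\left(1+\tfrac{1}{x}\right)^x e^{1/(12(x+1))} \le e\cdot e^{1/12}$ for the upper bound). Your ratio bookkeeping is just a slightly more systematic presentation of the identical argument.
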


\begin{proof}
The result follows readily from Lemma \ref{lemma:bounds-gamma-x}.
To see the lower bound, observe that
\[
    \Gamma(x+1) > \sqrt{\frac{2\pi}{x+1}}\left(\frac{x + 1}{e}\right)^{x+1} = \frac{\sqrt{2\pi(x+1)}}{e} \left(\frac{x + 1}{e}\right)^{x} > \frac{\sqrt{2\pi(x+1)}}{e} \left(\frac{x}{e}\right)^{x} 
\]
where the first inequality follows from Lemma \ref{lemma:bounds-gamma-x}.
To see the upper bound, observe that
\begin{align*}
    \Gamma(x+1)
        &< \sqrt{\frac{2\pi}{x + 1}} \left(\frac{x + 1}{e}\right)^{x+1} e^{\frac{1}{12(x+1)}} \tag{Lemma \ref{lemma:bounds-gamma-x}} \\
        &= \frac{\sqrt{2\pi(x+1)}}{e} \left(\frac{x+1}{e}\right)^x e^{\frac{1}{12(x+1)}} \\
        &= \frac{\sqrt{2\pi(x+1)}}{e} \left(\frac{x}{e}\right)^x \left(1 + \frac{1}{x}\right)^x e^{\frac{1}{12(x+1)}} \\
        &\leq \sqrt{2\pi(x+1)} \left(\frac{x}{e}\right)^x e^{1/12}
\end{align*}
\end{proof}

\begin{lemma}\label{lemma:bounds-gamma-1/p-d/p}
For $p \geq 1$,
\begin{align*}
    \frac{\sqrt{2\pi}}{e} \left(\frac{1}{pe}\right)^{1/p} < {} &\Gamma\left(\frac{1}{p} + 1 \right) < 2\sqrt{\pi} \left(\frac{1}{pe}\right)^{1/p} e^{1/12} \\
    \frac{\sqrt{2\pi\left(\frac{d}{p} + 1\right)}}{e} \left(\frac{d}{pe}\right)^{d/p} < {} &\Gamma\left(\frac{d}{p} + 1 \right) < \sqrt{2\pi\left(\frac{d}{p} + 1\right)} \left(\frac{d}{pe}\right)^{d/p} e^{1/12}
\end{align*}
\end{lemma}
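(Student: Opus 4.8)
The plan is to obtain both pairs of bounds by specializing Lemma~\ref{lemma:bounds-gamma-x+1} at the appropriate value of $x$ and then cleaning up the resulting factor $\sqrt{2\pi(x+1)}$ using the hypothesis $p \geq 1$. Recall that Lemma~\ref{lemma:bounds-gamma-x+1} states $\frac{\sqrt{2\pi(x+1)}}{e} \left(\frac{x}{e}\right)^x < \Gamma(x+1) < \sqrt{2\pi(x+1)}\left(\frac{x}{e}\right)^x e^{1/12}$ for all $x > 0$, so I would simply feed in the two values $x = 1/p$ and $x = d/p$, both of which are positive.

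First I would handle the second pair, the bounds on $\Gamma(d/p + 1)$. Setting $x = d/p$ in Lemma~\ref{lemma:bounds-gamma-x+1} and observing that $\left(\frac{x}{e}\right)^x = \left(\frac{d/p}{e}\right)^{d/p} = \left(\frac{d}{pe}\right)^{d/p}$, the displayed inequalities become exactly the second line of the lemma statement with no further manipulation required. This step is a pure substitution.

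For the first pair, the bounds on $\Gamma(1/p + 1)$, I would set $x = 1/p$, giving $\frac{\sqrt{2\pi(1/p+1)}}{e} \left(\frac{1}{pe}\right)^{1/p} < \Gamma(1/p + 1) < \sqrt{2\pi(1/p+1)}\left(\frac{1}{pe}\right)^{1/p} e^{1/12}$. The only gap between this and the claimed statement is replacing the factor $\sqrt{2\pi(1/p+1)}$ by the cleaner constants $\sqrt{2\pi}$ (lower bound) and $2\sqrt{\pi}$ (upper bound). Here is where $p \geq 1$ enters: since $p \geq 1$ we have $1 \leq 1/p + 1 \leq 2$, so for the lower bound I bound $\sqrt{2\pi(1/p+1)} \geq \sqrt{2\pi}$, and for the upper bound I bound $\sqrt{2\pi(1/p+1)} \leq \sqrt{4\pi} = 2\sqrt{\pi}$. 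Chaining these two monotonicity estimates with the specialized Lemma~\ref{lemma:bounds-gamma-x+1} yields the first line of the statement.

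There is no genuine obstacle in this proof; every step is a direct substitution or an elementary monotonicity bound on the single factor $\sqrt{2\pi(1/p+1)}$. The only point that requires any care is noting that the hypothesis $p \geq 1$ is precisely what pins $1/p + 1$ into the interval $[1, 2]$ and hence licenses the two constant replacements in the first pair; the second pair does not even use this hypothesis, as it is stated directly in terms of $\frac{d}{p} + 1$.
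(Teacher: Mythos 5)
Your proposal is correct and follows essentially the same route as the paper: both obtain the second pair of bounds by direct substitution of $x = d/p$ into Lemma~\ref{lemma:bounds-gamma-x+1}, and both obtain the first pair by substituting $x = 1/p$ and then replacing $\sqrt{2\pi(1/p+1)}$ with $\sqrt{2\pi}$ below and $2\sqrt{\pi}$ above using $1 \leq 1/p + 1 \leq 2$. The only cosmetic difference is your observation that $p \geq 1$ licenses both constant replacements, whereas in fact the lower-bound replacement needs only $1/p + 1 \geq 1$ (true for all $p > 0$); the paper accordingly tags only the upper-bound step with $p \geq 1$.
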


\begin{proof}
The upper and lower bounds for $\Gamma(d/p + 1)$ follow immediately from Lemma \ref{lemma:bounds-gamma-x+1}.
The upper and lower bounds for $\Gamma(1/p + 1)$ require one more step each.
\begin{gather*}
    \Gamma\left(\frac{1}{p} + 1 \right) > \frac{\sqrt{2\pi\left(\frac{1}{p} + 1\right)}}{e} \left(\frac{1}{pe}\right)^{1/p} \geq \frac{\sqrt{2\pi}}{e} \left(\frac{1}{pe}\right)^{1/p} \\
    \Gamma\left(\frac{1}{p} + 1 \right) < \sqrt{2\pi\left(\frac{1}{p} + 1\right)} \left(\frac{1}{pe}\right)^{1/p} e^{1/12} \leq 2 \sqrt{\pi} \left(\frac{1}{pe}\right)^{1/p} e^{1/12} \tag{$p \geq 1$}
\end{gather*}
\end{proof}

\begin{lemma}\label{lemma:bounds-volume-unit-ball}
For $p \geq 1$,
\[
    e^{-1} \sqrt{2\pi \left(\frac{d}{p} + 1\right)}\left(\frac{1}{4 e^{1/12} \sqrt{\pi}}\right)^d d^{d/p} < \mathrm{vol}(B(0,1))^{-1} < e^{1/12} \sqrt{2\pi\left(\frac{d}{p} + 1\right)} \left(\frac{e}{2\sqrt{2\pi}}\right)^d d^{d/p} 
\]
\end{lemma}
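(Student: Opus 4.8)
The plan is to start from the closed-form expression for the volume of the $\ell_p$ unit ball and substitute the two-sided Gamma-function estimates of Lemma~\ref{lemma:bounds-gamma-1/p-d/p}. Recall the standard volume formula
\[
    \mathrm{vol}(B(0,1)) = \frac{\left(2\,\Gamma\!\left(\tfrac{1}{p}+1\right)\right)^d}{\Gamma\!\left(\tfrac{d}{p}+1\right)},
    \qquad\text{so}\qquad
    \mathrm{vol}(B(0,1))^{-1} = \frac{\Gamma\!\left(\tfrac{d}{p}+1\right)}{2^d\,\Gamma\!\left(\tfrac{1}{p}+1\right)^d}.
\]
Everything then reduces to bounding this single quotient from above and below.

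For the upper bound on $\mathrm{vol}(B(0,1))^{-1}$ I would bound the numerator $\Gamma(d/p+1)$ from above and the denominator $\Gamma(1/p+1)^d$ from below, using the respective estimates of Lemma~\ref{lemma:bounds-gamma-1/p-d/p}. The key simplification is that the factor $(1/(pe))^{1/p}$ in the estimate of $\Gamma(1/p+1)$, once raised to the $d$-th power, produces $(1/(pe))^{d/p}$ in the denominator, which cancels exactly against the $(d/(pe))^{d/p}$ coming from $\Gamma(d/p+1)$ and leaves the clean factor $d^{d/p}$. The $2^d$ then combines with $(\sqrt{2\pi}/e)^d$ into $(e/(2\sqrt{2\pi}))^d$, while the residual $e^{1/12}$ and $\sqrt{2\pi(d/p+1)}$ reproduce the stated prefactor. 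The lower bound is entirely symmetric: bound $\Gamma(d/p+1)$ from below and $\Gamma(1/p+1)$ from above, so that the $e^{1/12}$ in the upper estimate of $\Gamma(1/p+1)$ is raised to the $d$-th power and, together with the $2\sqrt{\pi}$ factor and the $2^d$, assembles the constant $(1/(4e^{1/12}\sqrt{\pi}))^d$; the $1/e$ from the lower estimate of $\Gamma(d/p+1)$ contributes the $e^{-1}$ prefactor.

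The computation is mechanical once the volume formula is invoked, so the only genuine subtlety --- and the step I expect to require the most care --- is bookkeeping the direction of each inequality: for an upper bound on the quotient one must bound the numerator above and the $d$-th power $\Gamma(1/p+1)^d$ below, which forces the relevant one-sided estimate of $\Gamma(1/p+1)$ to be raised to the $d$-th power before the exponential constants $e^{1/12}$, $2\sqrt{\pi}$, and $\sqrt{2\pi}/e$ are propagated. Making these constants coalesce into precisely $(e/(2\sqrt{2\pi}))^d$ and $(1/(4e^{1/12}\sqrt{\pi}))^d$ is where an off-by-a-constant slip is easiest; everything else is routine algebra with powers.
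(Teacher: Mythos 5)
Your proposal is correct and matches the paper's proof exactly: the paper likewise invokes the formula $\mathrm{vol}(B(0,1)) = \left(2\Gamma(\tfrac{1}{p}+1)\right)^d / \Gamma(\tfrac{d}{p}+1)$ and applies the two-sided estimates of Lemma~\ref{lemma:bounds-gamma-1/p-d/p} with the inequality directions you describe. The algebra you outline (cancellation of the $(1/(pe))^{d/p}$ factors and the assembly of the constants $(e/(2\sqrt{2\pi}))^d$ and $(1/(4e^{1/12}\sqrt{\pi}))^d$) is precisely what the paper leaves implicit, and your bookkeeping is accurate.
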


\begin{proof}
The lemma follows from Lemma \ref{lemma:bounds-gamma-1/p-d/p} and the fact that 
\[
    \mathrm{vol}(B(0,1)) = \frac{\left(2\Gamma\left(\frac{1}{p} + 1\right)\right)^d}{\Gamma\left(\frac{d}{p} + 1\right)}.
\]
% (see \url{https://en.wikipedia.org/wiki/Volume_of_an_n-ball#Balls_in_Lp_norms}).
\end{proof}

\begin{proof}[Proof of Theorem \ref{theorem:lp_bounds}]
The result follows readily from Theorem \ref{theorem:combination} and Lemma \ref{lemma:bounds-volume-unit-ball}.
\end{proof}

\section{Omitted Algorithms, Lemmas, and Proofs}

\subsection{Algorithms, Lemmas, And Proofs Omitted From Section \ref{section:preliminaries}}

We derive the dual of \ref{primal_lp}.
The Lagrangian of \ref{primal_lp} is
\begin{align*}
    L(q, \lambda)
        = {} & q_h - q_g + \sum_{i \in [n]} \sum_{j \in [m]} \sum_{j' : j \succ_i^k j'} \lambda_{i,j,j'} (q_j - q_{j'} + d(e_i, c_{j'}) - d(e_i, c_j)) \\
        = {} & \sum_{i \in [n]} \sum_{j \in [m]} \sum_{j': j \succ_i^k j'} \lambda_{i,j,j'} (d(e_i, c_{j'}) - d(e_i, c_j)) + q_h \left(1 + \sum_{i \in [n]} \sum_{j' : h \succ_i^k j'} \lambda_{i, h, j'} - \sum_{i \in [n]} \sum_{j : j \succ_i^k h} \lambda_{i, j, h} \right) \\
        & + q_g \left(- 1 + \sum_{i \in [n]} \sum_{j' : g \succ_i^k j'} \lambda_{i, g, j'} - \sum_{i \in [n]} \sum_{j : j \succ_i^k g} \lambda_{i, j, g} \right) + \sum_{\ell \in [m]} q_\ell \left(\sum_{i \in [n]} \sum_{j' : \ell \succ_i^k j'} \lambda_{i, \ell, j'} - \sum_{i \in [n]} \sum_{j : j \succ_i^k \ell} \lambda_{i, j, \ell} \right)
\end{align*}
Note that the dual objective $\max_q L(q, \lambda) = +\infty$ if any of the parenthesized expressions are non-zero.
Thus, the dual problem is
\begin{equation*}\label{dual_lp}
\begin{array}{ll@{}ll}
\min_\lambda 
    & & \displaystyle \sum_{i \in [n]} \sum_{j \in [m]} \sum_{j': j \succ_i^k j'} \lambda_{i,j,j'} (d(e_i, c_{j'}) - d(e_i, c_j)) \\
\text{s.t.} 
    & & \displaystyle \sum_{i \in [n]} \sum_{j : j \succ_i^k h} \lambda_{i, j, h} - \sum_{i \in [n]} \sum_{j' : h \succ_i^k j'} \lambda_{i, h, j'} = 1 \\
    & & \displaystyle \sum_{i \in [n]} \sum_{j' : g \succ_i^k j'} \lambda_{i, g, j'} - \sum_{i \in [n]} \sum_{j : j \succ_i^k g} \lambda_{i, j, g} = 1 \\
    & & \displaystyle \sum_{i \in [n]} \sum_{j' : \ell \succ_i^k j'} \lambda_{i, \ell, j'} - \sum_{i \in [n]} \sum_{j : j \succ_i^k \ell} \lambda_{i, j, \ell} = 0 & \forall\: \ell \not= h,k \\
    & & \lambda_{i, j, j'} \geq 0 & \forall\: i \in [n], j, j' \in [m] \text{ s.t. } j \succ_i^k j'
\end{array}\tag{LP 2}
\end{equation*}

\begin{lemma}\label{lemma:no-negative-cycles}
Let $C:= \{c_1,\ldots, c_m\}$ denote the set of locations for the $m$ candidates, $E:= \{e_1,\ldots,e_n\}$ denote the set of locations for the $n$ experts, and $\succ^k \coloneqq \{\succ_1^k, \dots, \succ_n^k\}$ denote an inducible set of rankings.
$G(C, E, \succ^k)$ has no negative cycles.\footnote{The ideas used to prove this lemma are used by \citet{Feldman_Mansour_Nisan_Oren_Tennenholtz_2020} as well.}
\end{lemma}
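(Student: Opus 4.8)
The plan is to exploit inducibility to produce a \emph{feasible potential} on the vertices of $G(C,E,\succ^k)$ and then invoke the standard telescoping argument: a weighted digraph admits a real-valued potential $\phi$ on its vertices with $w(j,j') \ge \phi(j') - \phi(j)$ for every edge $(j,j')$ if and only if it has no negative cycle (the ``only if'' is what we exploit). Since $\succ^k$ is inducible by hypothesis, there exists a quality vector $\hat{q} \in \RR^m$ consistent with $\succ^k$, and I would use exactly $\phi \coloneqq \hat{q}$ as the potential.

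First I would verify the per-edge inequality $w(j,j') \ge \hat{q}_{j'} - \hat{q}_{j}$ for every edge $(j,j')$ of $G(C,E,\succ^k)$. By definition of the edge set, such an edge exists only when some expert $i$ satisfies $j \succ_i^k j'$, and by consistency of $\hat{q}$ this gives $\hat{q}_{j} - d(e_i,c_{j}) \ge \hat{q}_{j'} - d(e_i,c_{j'})$, i.e. $d(e_i,c_{j'}) - d(e_i,c_{j}) \ge \hat{q}_{j'} - \hat{q}_{j}$. Crucially, this holds for \emph{every} expert $i$ with $j \succ_i^k j'$, so the minimum over all such experts—which is precisely $w(j,j')$ by \Cref{def:minimal regret voting graph}—also satisfies $w(j,j') = \min_{i : j \succ_i^k j'} \bigl(d(e_i,c_{j'}) - d(e_i,c_{j})\bigr) \ge \hat{q}_{j'} - \hat{q}_{j}$.

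Then I would take an arbitrary directed cycle $j_1 \to j_2 \to \cdots \to j_L \to j_1$ in $G(C,E,\succ^k)$ and sum the per-edge bound around it. The left-hand side is the cycle weight, and the right-hand side telescopes to zero:
\[
    \sum_{\ell=1}^{L} w(j_\ell, j_{\ell+1}) \;\ge\; \sum_{\ell=1}^{L} \bigl(\hat{q}_{j_{\ell+1}} - \hat{q}_{j_\ell}\bigr) \;=\; 0,
\]
with indices taken modulo $L$ so that $j_{L+1} = j_1$. Hence every cycle has nonnegative total weight, which is exactly the claim.

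The argument is short, and the only real subtlety—what I would flag as the main point to get right—is the interaction between the $\min$ in the definition of $w$ and the direction of the consistency inequality. Because $w(j,j')$ is a minimum over experts, one must confirm that the consistency bound $d(e_i,c_{j'}) - d(e_i,c_{j}) \ge \hat{q}_{j'} - \hat{q}_{j}$ holds uniformly over all experts contributing to the minimum (it does, since each such expert has $j \succ_i^k j'$), so that the minimum inherits the lower bound; taking a minimum preserves a common lower bound but would \emph{not} preserve a common upper bound, so the inequality direction is essential and is precisely why inducibility suffices.
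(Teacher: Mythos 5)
Your proof is correct and follows essentially the same route as the paper's: both use a quality vector $\hat{q}$ consistent with $\succ^k$ (guaranteed by inducibility) to lower-bound each edge weight $w(j,j')$ by $\hat{q}_{j'} - \hat{q}_j$ and then telescope around an arbitrary cycle. Your ``feasible potential'' framing and your explicit remark that the minimum over experts inherits the common lower bound are just a cleaner packaging of the paper's step of selecting the argmin expert and applying consistency to it.
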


\begin{proof}
Let $j_1 \to \dots \to j_T \to j_{T+1} = j_1$ be a cycle in $G(C, E, \succ^k)$.
For all $t \in [T]$, let 
\[
    i_t \coloneqq \arg\min_{i : j_t \succ_i^k j_{t+1}} d(e_i, c_{j_{t+1}}) - d(e_i, c_{j_t}).
\]
Note that $j_\ell \succ_{i_t}^k j_{t+1}$ implies $q_{j_{t+1}} - q_{j_t} \leq  d(e_{i_t}, c_{j_{t+1}}) - d(e_{i_t}, c_{j_t}) = w(j_t, j_{t+1})$, so
\[
    \sum_{t=1}^T w(j_t, j_{t+1}) \geq \sum_{t=1}^{T} q_{j_{t+1}} - q_{j_t} = 0
\]
Thus, there are no negative cycles in $G(C, E, \succ^k)$.
\end{proof}

\begin{lemma}\label{lemma:primal-feasible-solution}
Let $C:= \{c_1,\ldots, c_m\}$ denote the set of locations for the $m$ candidates, $E:= \{e_1,\ldots,e_n\}$ denote the set of locations for the $n$ experts, and $\succ^k \coloneqq \{\succ_1^k, \dots, \succ_n^k\}$ denote an inducible set of rankings.
If expert $i$ ranks $g$ among her top $k$ candidates, then there exists $\alpha < +\infty$ such that setting $\hat{q}_j = d_G(g, j)$ for all $j$ reachable from $g$ and $\hat{q}_j = \alpha - d_G(j, g)$ for all $j$ unreachable from $g$ is consistent with $\succ^k$.
\end{lemma}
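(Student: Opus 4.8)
The plan is to recognize that $\hat q$ being consistent with $\succ^k$ is exactly the statement that $\hat q$ is a feasible vector of potentials for the weight function $w$ on $G \coloneqq G(C, E, \succ^k)$. Indeed, for a fixed ordered pair $(j, j')$ that forms an edge, the consistency constraints $\hat q_{j'} - \hat q_j \le d(e_i, c_{j'}) - d(e_i, c_j)$ ranging over all experts $i$ with $j \succ_i^k j'$ are jointly equivalent to the single inequality $\hat q_{j'} - \hat q_j \le w(j, j')$ by the definition of $w$ as a minimum over such experts; pairs that are not edges impose no constraint. So it suffices to exhibit a \emph{finite} $\hat q$ with $\hat q_{j'} - \hat q_j \le w(j, j')$ for every edge $(j, j')$. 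By Lemma~\ref{lemma:no-negative-cycles}, $G$ has no negative cycles, so the shortest-path distances $d_G(\cdot, \cdot)$ are well defined and finite between any ordered pair joined by a directed path.

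Before defining $\hat q$, the plan is to establish a reachability dichotomy from the hypothesis that some expert $i$ ranks $g$ in her top $k$. Partition the candidates into the set $R$ of those reachable from $g$ in $G$ (which contains $g$) and its complement $U$. Because $g$ lies in expert $i$'s top $k$, for every other candidate $j$ either $g \succ_i^k j$, in which case $j$ is perceived no better than $g$ and $(g, j)$ is an edge, so $j \in R$; or $j \succ_i^k g$, in which case $j$ is perceived strictly better than $g$ and hence also lies in the top $k$, giving the edge $(j, g)$. Consequently every $j \in U$ admits a direct edge to $g$, so $d_G(j, g) \le w(j, g) < +\infty$. This is the crucial point: it guarantees that $\alpha - d_G(j, g)$ will be finite for each $j \in U$.

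Next, define $\hat q_j \coloneqq d_G(g, j)$ for $j \in R$ and $\hat q_j \coloneqq \alpha - d_G(j, g)$ for $j \in U$, and set $\alpha \coloneqq \max_{(j, j') :\, j \in U,\, j' \in R} \big( d_G(g, j') + d_G(j, g) - w(j, j') \big)$, a finite maximum over the finitely many edges of this type (taken to be $0$ if there are none). The plan is then to verify $\hat q_{j'} - \hat q_j \le w(j, j')$ edge by edge. If both endpoints lie in $R$, the condition is the triangle inequality $d_G(g, j') \le d_G(g, j) + w(j, j')$. The case $j \in R,\, j' \in U$ cannot occur, since an edge leaving a reachable vertex lands in $R$. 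If both endpoints lie in $U$, the condition reduces to $d_G(j, g) - d_G(j', g) \le w(j, j')$, which is the triangle inequality applied to the paths \emph{into} $g$, with $d_G(j', g)$ finite by the previous paragraph. Finally, if $j \in U$ and $j' \in R$, the condition rearranges to $d_G(g, j') + d_G(j, g) - w(j, j') \le \alpha$, which holds by the choice of $\alpha$.

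The main obstacle is the treatment of the vertices unreachable from $g$: one must argue that $d_G(j, g)$ is finite for every $j \in U$ — which is precisely where the hypothesis that $g$ appears in some expert's top $k$ is used, via the dichotomy above — and then calibrate $\alpha$ so that the cross edges from $U$ to $R$ do not violate feasibility while the edges internal to $U$ remain controlled by the triangle inequality on distances to $g$. Everything else is a routine shortest-path potentials argument on a graph without negative cycles.
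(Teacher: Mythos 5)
Your proof is correct and follows essentially the same route as the paper's: partition the candidates into those reachable from $g$ and the rest, use the hypothesis that $g$ is in expert $i$'s top $k$ to obtain a direct edge from every unreachable candidate to $g$ (making $d_G(j,g)$ finite), rule out edges from the reachable set into the unreachable set, choose $\alpha$ large enough to handle the cross edges, and verify consistency case by case via triangle inequalities on shortest-path distances (with Lemma~\ref{lemma:no-negative-cycles} guaranteeing these are well defined). The only cosmetic differences are your explicit edge-wise reformulation of consistency as $\hat{q}_{j'} - \hat{q}_j \leq w(j,j')$ and your slightly tighter choice of $\alpha$ (subtracting $w(j,j')$ where the paper subtracts $d_G(j,j')$), neither of which changes the argument.
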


\begin{proof}
Let $C_g$ denote the candidates reachable from $g$ in $G(C, E, \succ^k)$.
Since expert $i$ ranks $g$ among her top $k$ candidates, for all $j \not= g$, either $j \succ_i^k g$ or  $j \prec_i^k g$.
But for $j \not\in C_g$, it must be that $j \succ_i^k g$ since otherwise, there would exist an edge from $g$ to $j$, so $j \in C_g$, a contradiction.
It follows that there is an edge from $j$ to $g$, so $d_G(j, g) < +\infty$ for all $j \not\in C_g$ and we can choose $\alpha < +\infty$ to be sufficiently large so that $\alpha \geq  d_G(j, g) + d_G(g,j') - d_G(j,j')$ for all $i \in [n], j \not\in C_g, j' \in C_g$ such that $j \succ_i^k j'$.
Now, consider $i \in [n], j, j' \in [m]$ such that $j \succ_i^k j'$.
Note that either $j, j' \in C_g$, $j, j' \not\in C_g$, or $j \not\in C_g, j' \in C_g$.
Otherwise, there would exist an edge from $j$ to $j'$ in $G(C, E, \succ^k)$, so $j' \in C_g$, a contradiction.
Thus,
\begin{enumerate}[leftmargin=*]
    \item If $j, j' \in C_g$, then $\hat{q}_{j'} = d_G(g,j') \leq d_G(g, j) + d_G(j, j') \leq d_G(g, j) + w(j, j') \leq d_G(g,j) + d(e_i,c_{j'}) - d(e_i,c_{j}) = \hat{q}_j + d(e_i,c_{j'}) - d(e_i,c_{j})$.
    \item If $j, j' \in C \setminus C_g$, then $\hat{q}_{j'} = \alpha - d_G(j', g) \leq \alpha - d_G(j, g) + d_G(j, j') \leq \alpha - d_G(j, g) + w(j, j') \leq \alpha - d_G(j,g) + d(e_i,c_{j'}) - d(e_i,c_{j}) = \hat{q}_j + d(e_i,c_{j'}) - d(e_i,c_{j})$. 
    \item If $j \in C \setminus C_g, j' \in C_g$, $\hat{q}_{j'} = d_G(g, j') \leq \alpha - d_G(j, g) + d_G(j, j') \leq \alpha - d_G(j, g) + w(j, j') \leq \alpha - d_G(j,g) + d(e_i,c_{j'}) - d(e_i,c_{j}) = \hat{q}_j + d(e_i,c_{j'}) - d(e_i,c_{j})$. 
\end{enumerate}
It follows that $\hat{q}$ is consistent with $\succ^k$.
\end{proof}

\begin{proof}[Proof of Lemma \ref{lemma:reduction}]
We now show that if there is no $g \to h$ path in $G(C, E, \succ^k)$, then the optimal objective value of the corresponding \ref{primal_lp} is $+\infty$.
Let $C_g$ denote the candidates reachable from $g$ in $G(C, E, \succ^k)$.
Since $\succ^k$ is inducible, there exists a feasible primal solution $\hat{q}$.
Let $\mathbbm{1}_{C_g} \in \RR^m$ denote the vector with 1's in the coordinates corresponding to the candidates in $C_g$ and 0's everywhere else.
We claim that $\hat{q} - \alpha \mathbbm{1}_{C_g}$ is primal feasible for all $\alpha \geq 0$.
To see why, note that for all $i \in [n], j, j' \in [m]$ such that $j \succ_i^k j'$, either $j, j' \in C_g$, $j, j' \not\in C_g$, or $j \not\in C_g, j' \in C_g$.
Otherwise, there would exist an edge from $j$ to $j'$ in $G(C, E, \succ^k)$, so $j' \in C_g$, a contradiction.
Now, consider $i \in [n], j, j' \in [m]$ such that $j \succ_i^k j'$.
Since $\hat{q}$ is consistent with $\succ^k$, we know that $\hat{q}_j - d(e_i, c_j) \geq \hat{q}_{j'} - d(e_i, c_{j'})$.
Some casework will show that $\hat{q} - \alpha \mathbbm{1}_{C_g}$ is consistent with $\succ^k$ as well for all $\alpha \geq 0$.
\begin{enumerate}[leftmargin=*]
    \item If $j, j' \in C_g$, then $(\hat{q} - \alpha \mathbbm{1}_{C_g})_j - d(e_i, c_j) = (\hat{q}_j - \alpha) - d(e_i, c_j) \geq (\hat{q}_{j'} - \alpha) - d(e_i, c_{j'}) = (\hat{q} - \alpha \mathbbm{1}_{C_g})_{j'} - d(e_i, c_{j'})$.
    \item If $j, j' \not\in C_g$, then $(\hat{q} - \alpha \mathbbm{1}_{C_g})_j - d(e_i, c_j) = \hat{q}_j - d(e_i, c_j) \geq \hat{q}_{j'} - d(e_i, c_{j'}) = (\hat{q} - \alpha \mathbbm{1}_{C_g})_{j'} - d(e_i, c_{j'})$.
    \item If $j \not\in C_g, j' \in C_g$, then $(\hat{q} - \alpha \mathbbm{1}_{C_g})_j - d(e_i, c_j) = \hat{q}_j - d(e_i, c_j) \geq (\hat{q}_{j'} - \alpha) - d(e_i, c_{j'}) = (\hat{q} - \alpha \mathbbm{1}_{C_g})_{j'} - d(e_i, c_{j'})$.
\end{enumerate}
Since there is no $g \to h$ path, $h \not\in C_g$.
Thus, the optimal objective value of the corresponding \ref{primal_lp} is at least $\sup_{\alpha \geq 0} \hat{q}_h - \hat{q}_g + \alpha = +\infty$.

Now, suppose there is a $g \to h$ path in $G(C, E, \succ^k)$.
Let $g = j_1 \to \dots \to j_T = h$ be a shortest $g \to h$ path in $G$.
For all $t \in [T-1]$, let $i_t \coloneqq \arg\min_{i : j_t \succ_i^k j_{t+1}} d(e_i, c_{j_{t+1}}) - d(e_i, c_{j_t})$
and set $\lambda_{i_t, j_t, j_{t+1}} = 1$.
Set the remaining dual variables to 0.
This assignment is a feasible dual solution with objective value
\[
    \sum_{t \in [T-1]} d(e_{i_t}, c_{j_{t+1}}) - d(e_{i_t}, c_{j_t}) = \sum_{t \in [T-1]} w(j_t, j_{t+1}) = d_G(g,h)
\]
where the first equality follows from our choice of $i_t$ and the second from the fact that $j_1 \to \dots \to j_T$ is a shortest $g \to h$ path in $G(C, E, \succ^k)$.
Thus, by weak duality, the optimal objective value of the corresponding \ref{primal_lp} is at most $d_G(g,h)$.

To obtain the reverse inequality, we construct a feasible solution to the corresponding \ref{primal_lp}.
Since there is a $g \to h$ path, there is an edge leaving $g$.
Thus, there exists an expert $i \in [n]$ who ranks $g$ among her top $k$ candidates.
By Lemma \ref{lemma:primal-feasible-solution}, there exists $\alpha < +\infty$ such that setting $\hat{q}_j = d_G(g, j)$ for all $j \in C_g$ and $\hat{q}_j = \alpha - d_G(j, g)$ for all $j \not\in C_g$ is a feasible solution to the corresponding \ref{primal_lp} with objective value $q_h - q_g = d_G(g,h) - d_G(g,g) = d_G(g,h)$.
\end{proof}

% \begin{proof}[Proof of Corollary \ref{corollary:reduction}]
% Since $q$ is consistent with $\succ^k$ and hence, a feasible solution to \ref{primal_lp}, the inequality follows from weak duality and Lemma \ref{lemma:reduction}.
% \end{proof}

\subsection{Algorithms And Proofs Omitted From Section \ref{section:general-upper-bound}}

\begin{proof}[Proof of Corollary \ref{corollary:new-voting-rule}]
Let $v \coloneqq {\succ_1^k}(1)$.
Note that for all $j \not= v$, $v \succ_1^k j$, so there exists an edge from $v$ to $j$ in $G(C, E^*, \succ^k)$. In particular, all candidates are reachable from $v$.
Thus, by Lemma \ref{lemma:primal-feasible-solution}, setting $\hat{q}_j = d_G(v, j)$ for all $j \in [m]$ is consistent with $\succ^k$.
By Lemma \ref{lemma:E^*-low-regret}, $d_G(j^*, j) \leq \varepsilon$ for all $j^* \in \arg\max_{j \in V(\succ^k)} d_G(v, j)$.
By Corollary \ref{corollary:reduction}, the regret of selecting such a candidate is at most $\varepsilon$.

To demonstrate the running time, note that it takes $O(nmk)$ time to construct $G(C, E^*, \succ^k)$.
Moreover, since the number of edges is at most $\abs{\{(j, j') \in [m]^2 : \exists\: i \text{ s.t. } j \succ_i^k j'\}}$, it takes $O(m \cdot \abs{E(G(C, E^*, \succ^k))})$ time to compute $d_G(v, j)$ for all $j \in [m]$.
\end{proof}

\begin{algorithm}[t]
    \SetAlgoNoLine
    \DontPrintSemicolon
    \KwIn{$C \coloneqq \{c_1, \dots, c_m\}$ is a set of $m$ candidate locations,
        $E^* \coloneqq \{e_1, \dots, e_n\}$ is a set of $n$ expert locations as constructed in Construction \ref{construction:upper-bound}, 
        $\succ^k \coloneqq \{\succ^k_1, \dots, \succ^k_n\}$ is a inducible set of rankings}
    \KwOut{a candidate with regret at most $\varepsilon$} 
    
    $G \leftarrow ([m], \{(j, j') \in [m]^2: j \not= j'\}, w(\cdot))$ where $w(j, j') \coloneqq +\infty$ for all $j \not= j'$\;
    \For{$i = 1, \dots, n$}{
        \For{$r = 1, \dots, k$}{
            \For{$r' = j+1, \dots, k$}{
                \If{$d(e_i, c_{{\succ_i^k}(r')}) - d(e_i, c_{{\succ_i^k}(r)}) < w({\succ_i^k}(r), {\succ_i^k}(r'))$}{
                    $w({\succ_i^k}(r), {\succ_i^k}(r')) \leftarrow d(e_i, c_{{\succ_i^k}(r')}) - d(e_i, c_{{\succ_i^k}(r)})$\;
                }
            }
            \For{$j' \in [m] \setminus \{{\succ_i^k}(h)\}_{h=1}^k$}{
                \If{$d(e_i, c_{j'}) - d(e_i, c_{{\succ_i^k}(r)}) < w({\succ_i^k}(r), j')$}{
                    $w({\succ_i^k}(r), j') \leftarrow d(e_i, c_{j'}) - d(e_i, c_{{\succ_i^k}(r)})$\;
                }
            }
        }
    }
    
    Solve the single-source shortest path problem in $G$ with ${\succ_1^k}(1)$ as the source vertex\;
    \KwRet{$\arg\max_{j \in V(\succ^k)} d_G({\succ_1^k}(1), j)$}
    
    \caption{Alternative Voting Rule}
    \label{algorithm:new-voting-rule}
\end{algorithm}

\subsection{Proofs Omitted From Section \ref{section:general-lower-bound}}

\begin{proof}[Proof of Lemma \ref{lemma:bound-on-W-covers}]
Since $N(B(x, r), \norm{\cdot}, r/4) = N(B(0, 1), \norm{\cdot}, 1/4)$, it suffices to show that 
\[
    N(\partial B_\Theta(x, r)[x, \Theta \setminus B_\Theta(0, 12 \varepsilon)], \norm{\cdot}, r/2) \leq N(B(x, r), \norm{\cdot}, r/4)
\]
Let $X \subseteq B(x, r)$ be an $(r/4)$-cover of $B(x, r)$ of size $N(B(x, r), \norm{\cdot}, r/4)$.
Construct an $(r/2)$-cover $X'$ of $\partial B_\Theta(x, r)[x, \Theta \setminus B_\Theta(0, 12 \varepsilon)]$ as follows: for each $x \in X$ such that $\partial B_\Theta(x, r)[x, \Theta \setminus B_\Theta(0, 12 \varepsilon)] \cap B(x, 1/4) \not= \varnothing$, add any $x' \in \partial B_\Theta(x, r)[x, \Theta \setminus B_\Theta(0, 12 \varepsilon)] \cap B(x, 1/4)$ to $X'$.
Note that $\abs{X'} \leq \abs{X} = N(B(x, r), \norm{\cdot}, r/4)$.
Moreover, for any $y \in \partial B_\Theta(x, r)[x, \Theta \setminus B_\Theta(0, 12 \varepsilon)]$, there exists $x \in X, x' \in X'$ such that $\norm{y - x} \leq r/4$ and $\norm{x - x'} \leq r/4$.
Thus, $\norm{y - x'} \leq \norm{y - x} + \norm{x - x'} \leq r/2$ and $X'$ is a $(r/2)$-cover of $\partial B_\Theta(x, r)[x, \Theta \setminus B_\Theta(0, 12 \varepsilon)]$.
Since $N(\partial B_\Theta(x, r)[x, \Theta \setminus B_\Theta(0, 12 \varepsilon)], \norm{\cdot}, r/2)$ is the size of the smallest $(r/2)$-cover of $\partial B_\Theta(x, r)[x, \Theta \setminus B_\Theta(0, 12 \varepsilon)]$, we have that $N(\partial B_\Theta(x, r)[x, \Theta \setminus B_\Theta(0, 12 \varepsilon)], \norm{\cdot}, r/2) \leq \abs{X'} \leq N(B(x, r), \norm{\cdot}, r/4)$
\end{proof}

\begin{proof}[Proof of Lemma \ref{lemma:candidates-lie-in-Theta}]
Since $w \in W \subseteq \partial B_\Theta(x, 4\varepsilon - \norm{x})[x, \Theta \setminus B_\Theta(0, 12 \varepsilon)]$, there exists $y \in \Theta \setminus B_\Theta(0, 12 \varepsilon)$ such that $w = x + \alpha(y - x)$ for some $\alpha \in [0, 1]$.
To show that $x + \lambda (w - x) = x + \alpha \lambda (y-x) \in \Theta$, it suffices to show that $\alpha \lambda \in [0, 1]$ since $\Theta$ is convex and $x, y \in \Theta$.
\[
    0 \overset{(a)}{<} \frac{4\varepsilon - \norm{x}}{\norm{y - x}} = \left(\frac{\norm{w - x}}{\norm{y - x}}\right)\left(\frac{4\varepsilon - \norm{x}}{\norm{w - x}}\right) \overset{(b)}{\leq} \alpha \lambda \overset{(c)}{\leq} \left(\frac{\norm{w - x}}{\norm{y - x}}\right)\left(\frac{4\varepsilon + \norm{x}}{\norm{w - x}}\right) = \frac{4\varepsilon + \norm{x}}{\norm{y - x}} \overset{(d)}{<} 1
\]
(a) and (d) follow from the fact that $\norm{x} < 4\varepsilon$ and the fact that $\norm{y - x} \geq \norm{y} - \norm{x} > 8 \varepsilon$. 
(b) and (c) follow from the fact that $w = x + \alpha(y - x)$ implies $\alpha = \frac{\norm{w - x}}{\norm{y - x}}$ and the fact that $\lambda$ solves $\norm{x + \lambda (w-x)} = 4\varepsilon$ subject to $\lambda \geq 0$, so $\frac{4\varepsilon - \norm{x}}{\norm{w - x}} \leq \lambda \leq \frac{4\varepsilon + \norm{x}}{\norm{w - x}}$.
\end{proof}

\subsection{Proofs Omitted From Section \ref{section:extensions}}

\begin{algorithm}[t]
    \SetAlgoNoLine
    \DontPrintSemicolon
    \KwIn{$C \coloneqq \{c_1, \dots, c_m\}$ is a set of $m$ candidate locations,
        $E \coloneqq \{e_1, \dots, e_n\}$ is a set of $n$ expert locations,
        $\succ^k \coloneqq \{\succ^k_1, \dots, \succ^k_n\}$ is a inducible set of rankings}
    \KwOut{a set of $\ell$ candidates} 
    
    $G \leftarrow ([m], \{(j, j') \in [m]^2: j \not= j'\}, w(\cdot))$ where $w(j, j') \coloneqq +\infty$ for all $j \not= j'$\;
    \For{$i = 1, \dots, n$}{
        \For{$r = 1, \dots, k$}{
            \For{$r' = r+1, \dots, k$}{
                \If{$d(e_i, c_{{\succ_i^k}(r')}) - d(e_i, c_{{\succ_i^k}(r)}) < w({\succ_i^k}(r), {\succ_i^k}(r'))$}{
                    $w({\succ_i^k}(r), {\succ_i^k}(r')) \leftarrow d(e_i, c_{{\succ_i^k}(r')}) - d(e_i, c_{{\succ_i^k}(r)})$\;
                }
            }
            \For{$j' \in [m] \setminus \{{\succ_i^k}(h)\}_{h=1}^k$}{
                \If{$d(e_i, c_{j'}) - d(e_i, c_{{\succ_i^k}(r)}) < w({\succ_i^k}(r), j')$}{
                    $w({\succ_i^k}(r), j') \leftarrow d(e_i, c_{j'}) - d(e_i, c_{{\succ_i^k}(r)})$\;
                }
            }
        }
    }

    $S \gets \varnothing$\;
    \For{$i = 1, \dots, \ell$}{
        $j^* \gets \arg\min_j \max_{j' \not= j} d_G(j,j')$ \tcp*{graph center}
        $S \gets S \cup \{j^*\}$\;
        $G \gets G - j^*$ \tcp*{vertex deletion}
    }
    \KwRet{$S$}
    
    \caption{Multi-Winner Voting Rule}
    \label{algorithm:multi-winner}
\end{algorithm}

\begin{theorem}
Let $(V, \norm{\cdot})$ be a normed vector space and $\Theta \subseteq V$ be convex.
Let $f$ denote the voting rule defined by Algorithm \ref{algorithm:multi-winner}.
There exists a universal committee $E$ of size at most
\[
    \textstyle \left(\frac{8m}{\ell} \min\left\{\log \left(\frac{k}{k-\ell}\right), 1 + \log k\right\} + 1\right) (N\left(\Theta, \norm{\cdot}, \varepsilon/(2\ell)\right)-1)
\]
such that for all sets $C \subseteq \Theta$ of $m$ candidate locations and all quality vectors $q \in \RR^m$, selecting the set $f(C, E, \succ^k)$ of $\ell$ candidates guarantees a cumulative regret of at most $\varepsilon > 0$.
\end{theorem}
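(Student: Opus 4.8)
The plan is to mirror the single-winner argument of Construction~\ref{construction:upper-bound} and Lemmas~\ref{lemma:main-lemma}--\ref{lemma:bound-on-diameter}, but with a finer cover, a denser placement of experts, and a round-by-round accounting of the error incurred by the greedy center-selection in Algorithm~\ref{algorithm:multi-winner}. First I would build $E$ exactly as in Construction~\ref{construction:upper-bound}, except that $X$ is a minimal $(\varepsilon/(2\ell))$-cover of $\Theta$ and I place $P+1$ experts uniformly on each edge of the spanning tree $T$ of $H(X)$, where $P \coloneqq \frac{8m}{\ell}\,M$ and $M \coloneqq \min\{\log(k/(k-\ell)),\,1+\log k\}$. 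Since the edges of $H(X)$ have length at most $\varepsilon/\ell$, the committee $E$ is an $r$-cover of $\Theta$ with $r=\varepsilon/(2\ell)$ and covers $T$ with spacing $\delta \le \frac{\varepsilon/\ell}{P} = \frac{\varepsilon}{8mM}$. Convexity of $\Theta$ places the experts in $\Theta$, and the committee has size $(P+1)(N(\Theta,\norm{\cdot},\varepsilon/(2\ell))-1)$, as claimed.

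The crux of the reduction is to bound the cumulative regret by the sum of the per-round graph radii. Let $g_1,\dots,g_\ell$ be the candidates chosen by Algorithm~\ref{algorithm:multi-winner} in order, write $S_i \coloneqq \{g_1,\dots,g_i\}$, and let $\rho_i$ denote the radius of $G(C,E,\succ^k)-S_{i-1}$, so that $g_i$ is its center. Order the optimal $\ell$ candidates $j_1^*,\dots,j_\ell^*$ by decreasing quality. At round $i$ at most $i-1$ of $\{j_1^*,\dots,j_i^*\}$ lie in $S_{i-1}$, so some $j_t^*$ with $t\le i$ survives into $R_i \coloneqq [m]\setminus S_{i-1}$; since deleting vertices only lengthens shortest paths, $\rho_i \ge d_{G-S_{i-1}}(g_i,j_t^*) \ge d_{G}(g_i,j_t^*) \ge q_{j_t^*}-q_{g_i}$ by \Cref{corollary:reduction}, whence $q_{g_i}\ge q_{j_t^*}-\rho_i \ge q_{j_i^*}-\rho_i$. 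Summing over $i$ shows the cumulative regret is at most $\sum_{i=1}^{\ell}\rho_i$.

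It remains to bound each $\rho_i$, and here I would prove a deletion-robust version of Lemma~\ref{lemma:main-lemma}. Taking $j^\dagger$ to be the candidate of greatest proxy quality among the experts' favorites within $R_i$, the same telescoping estimate gives $\max_{j\in R_i} d_{G-S_{i-1}}(j^\dagger,j) \le r + 2\delta\, D_i$, where $D_i$ is the diameter of $G(C,E,\succ^k)[\delta]-S_{i-1}$ restricted to these favorites; the non-favorite case survives because each round deletes at most $\ell-1\le k-1$ candidates, so every expert still has a top-$k$ candidate in $R_i$ and the $r$-cover supplies an edge of weight at most $r$. To control $D_i$, I strengthen the Menger argument of Lemma~\ref{lemma:bound-on-diameter}: by Lemma~\ref{lemma:remains-connected} the favorites are $k$-connected in $G[\delta]$, so after removing $i-1$ vertices at least $k-i+1$ vertex-independent paths persist among the $m-i+1$ remaining candidates, giving $D_i \le \frac{m-i-1}{k-i+1}+1$. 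Hence $\rho_i \le \frac{\varepsilon}{2\ell}+2\delta\bigl(\frac{m-i-1}{k-i+1}+1\bigr)$, and summing,
\[
    \sum_{i=1}^{\ell}\rho_i \;\le\; \frac{\varepsilon}{2} + 2\delta\Bigl(m\,(H_k-H_{k-\ell}) + \ell\Bigr) \;\le\; \frac{\varepsilon}{2} + 2\delta\,mM + (\text{l.o.t.}) \;\le\; \varepsilon,
\]
where I use $H_k-H_{k-\ell}\le M$, the identity $2\delta\,mM=\varepsilon/4$ from the choice of $\delta$, and the slack in the constant $8$ (as opposed to the $2$ that the leading term requires) to absorb the lower-order $O(\ell)$ contribution.

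The main obstacle is the diameter bound $D_i$ and its interaction with vertex deletion: I must argue that deleting a single chosen candidate costs at most one unit of connectivity between the experts' surviving favorites, so that the $k$-connectivity guaranteed by Lemma~\ref{lemma:remains-connected} degrades to $(k-i+1)$-connectivity after $i-1$ deletions, and then convert this into the path-length bound through Menger's theorem exactly as in Lemma~\ref{lemma:bound-on-diameter}. The remaining delicate point is arithmetic rather than structural, namely verifying that $\sum_{i=1}^{\ell}\frac{m}{k-i+1}=m(H_k-H_{k-\ell})$ is bounded by $m\min\{\log(k/(k-\ell)),\,1+\log k\}$ in both the $\ell<k$ and $\ell=k$ regimes; this is precisely the estimate that fixes the number $P$ of experts per edge and therefore the stated committee size.
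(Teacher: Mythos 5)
Your proposal is correct and follows essentially the same route as the paper's proof: the same $(\varepsilon/(2\ell))$-cover-plus-spanning-tree construction with $\Theta\bigl(\frac{m}{\ell}(H_k-H_{k-\ell})\bigr)$ experts per edge, the same per-round application of a deletion-robust version of Lemma~\ref{lemma:main-lemma} combined with the Menger argument of Lemmas~\ref{lemma:remains-connected} and~\ref{lemma:bound-on-diameter} (connectivity degrading from $k$ to $k-i+1$ after $i-1$ deletions), and the same harmonic-sum arithmetic. Your only deviations are cosmetic improvements: matching the $i$-th chosen center against a \emph{surviving} optimal candidate $j_t^*$ with $t \le i$ handles the overlap between the chosen and optimal sets more cleanly than the paper's closing remark, and your explicit $+\ell$ lower-order term (absorbed since $M \ge H_k - H_{k-\ell} \ge \ell/k$ and $k \le m$) is accounted for in the paper by the coarser bound $\frac{2(m-i)}{k-i+1} \le \frac{2m}{k-i+1}$.
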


\begin{proof}
Define the universal committee $E^*$ as follows.
Let $X \subseteq \Theta$ be an (internal) $(\varepsilon/(2\ell))$-cover of $(\Theta, \norm{\cdot})$ of minimum size, and let $H(X)$ denote the undirected graph with vertices in $X$ and an edge between two vertices if and only if the distance between them is at most $\varepsilon/\ell$:
\[
    H(X) \coloneqq (X, \{\{x,y\} \in X^2:0<d(x,y)\leq \varepsilon/\ell\}
\]
By Lemma \ref{lemma:H-connected}, $H(X)$ is a connected graph, so there exists a spanning tree $T$ of $H(X)$.
For each edge $\{x,y\} \in T$, place an expert at $x + \frac{\ell i}{8m (H_k - H_{k - \ell})}(y-x) \in \Theta$ for $i = 0, 1, \dots, H_k - H_{k - \ell}$ where $H_n$ is the $n$-th harmonic number.
Note that the distance between two consecutive experts along the line segment between $x$ and $y$ is at most $\frac{\ell}{8m (H_k - H_{k - \ell})} \cdot \frac{\varepsilon}{\ell} = \frac{\varepsilon}{8m (H_k - H_{k - \ell})}$.
Moreover, 
\begin{align*}
    \abs{E^*} 
        &\leq \textstyle \left(\frac{8m}{\ell} (H_k - H_{k - \ell}) + 1\right)(N(\Theta, \norm{\cdot}, \varepsilon / (2\ell)) - 1) \\
        &\leq \begin{cases}
        \textstyle \left(\frac{8m}{\ell} \log \left(\frac{k}{k-\ell}\right) + 1\right) (N\left(\Theta, \varepsilon/(2\ell)\right)-1) & \ell < k \\
        \textstyle \left(\frac{8m}{k} (1 + \log k) + 1\right) (N\left(\Theta, \varepsilon/(2\ell)\right)-1) & \ell = k
    \end{cases}
\end{align*}

Now, let $C = \{c_1, \dots, c_m\}$ denote the set of locations for the $m$ candidates and $\succ^k = (\succ^k_1, \dots, \succ^k_n)$ denote the reported ranking of the experts in $E^*$.
By the same argument as in the proof of Lemma \ref{lemma:remains-connected}, for all $J \subseteq [m]$ such that $\abs{J} \leq k - 1$ and all $j \not= j' \in V(\succ^k) \setminus J$, there exists a $j \to j'$ path in $G(C,E^*,\succ^k)[\frac{\varepsilon}{8m(H_k - H_{k - \ell})}] - J$.
Thus, by the same argument as in the proof of Lemma \ref{lemma:bound-on-diameter}, for all $J \subseteq [m]$ such that $\abs{J} \leq k - 1$,
\[
    \textstyle \diam{G(C,E^*,\succ^k)\left[\frac{\varepsilon}{8m(H_k - H_{k - \ell})}\right] - J, V(\succ^k) \setminus J} \leq \frac{2(m-\abs{J} - 1)}{k - \abs{J}} 
\]

Let $\hat{q} \in \RR^m$ be a quality vector consistent with $\succ^k$.
Let $J_1 \coloneqq \varnothing$, $j_1 \in \arg\max_{j \in V(\succ^k)} \hat{q}_j$, and $j^*_1$ be the graph center of $G(C,E^*,\succ^k)$.
For $i = 2, \dots, \ell$, let $J_i \coloneqq \{j^*_1, \dots, j^*_{i-1}\}$, $j_i \in \arg\max_{j \in V(\succ^k) \setminus J_i} \hat{q}_j$, and $j^*_i$ be the graph center of $G(C,E^*, \succ^{k}) - J_i$.
\begin{align*}
    \max_{j \in [m] \setminus J_i} d_{G(C,E^*, \succ^{k}) - J_i} (j^*_i, j) 
        & \leq \max_{j \in [m] \setminus J_i} d_{G(C,E^*, \succ^{k}) - J_i} (j_i, j) \tag{definition of graph center} \\
        &\leq \textstyle \frac{\varepsilon}{2\ell} + 2 \frac{\varepsilon}{8m(H_k - H_{k - \ell})} \diam{G(C,E^*,\succ^k)\left[\frac{\varepsilon}{8m(H_k - H_{k - \ell})}\right] - J_i, V(\succ^k) \setminus J_i} \tag{Lemma \ref{lemma:main-lemma}} \\
        &\leq  \textstyle \frac{\varepsilon}{2\ell} + 2 \frac{\varepsilon}{8m(H_k - H_{k - \ell})} \frac{2(m - i)}{k - i + 1} 
\end{align*}

For $i = 1, \dots, \ell$, let $j_{(i)} \in [m]$ denote the candidate with the $i$-th highest quality.
If $\{j_{(1)}, \dots, j_{(\ell)}\} \cap J_\ell = \varnothing$, then the regret of selecting $f(C, E^*, \succ^k)$ is
\begin{align*}
    \sum_{i=1}^\ell (q_{j_{(i)}} - q_{j^*_i})
        &\leq \sum_{i=1}^\ell d_{G(C,E^*, \succ^{k})} (j^*_i, j_{(i)}) \tag{weak duality} \\
        &\leq \sum_{i=1}^\ell d_{G(C,E^*, \succ^{k}) - J_i} (j^*_i, j_{(i)}) \\
        &\leq \sum_{i=1}^\ell \max_{j \in [m] \setminus J_i} d_{G(C,E^*, \succ^{k}) - J_i} (j^*_i, j) \\
        &\leq \sum_{i=1}^\ell \left(\frac{\varepsilon}{2\ell} + 2 \frac{\varepsilon}{8m(H_k - H_{k - \ell})} \frac{2(m - i)}{k - i + 1}\right) \\
        &= \frac{\varepsilon}{2} + \frac{\varepsilon}{4m(H_k - H_{k - \ell})} \sum_{i=1}^\ell\frac{2(m - i)}{k - i + 1} \\
        &\leq \varepsilon
\end{align*}
If $\{j_{(1)}, \dots, j_{(\ell)}\} \cap J_\ell \not= \varnothing$, then carry out the above argument for the disjoint union and note that we incur no regret by choosing the candidates in the intersection.
\end{proof}

\begin{theorem}
Let $(V, \norm{\cdot})$ be a normed vector space, $\varepsilon > 0$, and $\Theta \subseteq V$ be convex with $\mathrm{diam}(\Theta) \geq 12\varepsilon$.
Let $\mathcal{F}$ denote the set of deterministic voting rules that take as inputs candidate locations, expert locations, and top $k$ rankings and output a set of $\ell$ candidates.
Any universal committee that guarantees a cumulative regret of at most $\varepsilon$ for any set of $m \geq k(N(B(0,1), \norm{\cdot}, 1/4) + 1)$ candidates in $\Theta$ under some $f \in F$ is of size at least
\[
    \textstyle M(\Theta, \norm{\cdot}, 24\varepsilon/\ell)\left(\frac{m}{k(N(B(0,1), \norm{\cdot}, 1/4) + 1)} - 1\right)
\]
\end{theorem}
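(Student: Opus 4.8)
The plan is to mirror the single-winner construction of Section~\ref{section:general-lower-bound} verbatim, but rescaled by a factor of $\ell$, and then to upgrade its ``one bad candidate'' conclusion into a cumulative-regret statement via a simple conservation identity. Fix an arbitrary deterministic voting rule $f \in \mathcal{F}$ together with a universal committee $E \subseteq \Theta$ that guarantees cumulative regret at most $\varepsilon$. Let $X$ be a maximal $(24\varepsilon/\ell)$-packing of $\Theta$, so $\abs{X} = M(\Theta, \norm{\cdot}, 24\varepsilon/\ell)$ and the balls $\{B(x, 12\varepsilon/\ell)\}_{x \in X}$ are pairwise disjoint. Exactly as in the single-winner case, it suffices to prove that every ball $B_\Theta(x, 12\varepsilon/\ell)$ contains at least $\frac{m}{k(N(B(0,1), \norm{\cdot}, 1/4)+1)} - 1$ experts; summing this bound over the $M(\Theta, \norm{\cdot}, 24\varepsilon/\ell)$ disjoint balls then yields the claimed lower bound on $\abs{E}$.

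For the per-ball claim, suppose toward a contradiction that some ball---which we may assume is centered at $0$---contains fewer than $\frac{m}{k(N(B(0,1),\norm{\cdot},1/4)+1)} - 1$ experts. I would run the construction of Section~\ref{section:general-lower-bound} inside this ball with each occurrence of the radius parameter $\varepsilon$ replaced by $\varepsilon/\ell$: place $k$ candidates at $0$ (with $q^1_j = 2\varepsilon/\ell$, $q^2_j = 0$), $k$ candidates at each interior expert $e_i \in \relint{B_\Theta(0, 4\varepsilon/\ell)}$ (with $q^1_j = 2\varepsilon/\ell - \tfrac12\norm{e_i}$, $q^2_j = \tfrac12\norm{e_i}$), the associated boundary covers at radius $4\varepsilon/\ell$ (with $q^1_j = 0$, $q^2_j = 2\varepsilon/\ell$), and $k$ candidates at each annulus expert (with $q^1_j = 0$, $q^2_j = 2\varepsilon/\ell$). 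The counting estimate preceding Lemma~\ref{lemma:no-outside-expert-votes-for-an-inside-candidate} shows this places fewer than $m$ candidates, and the rescaled Lemma~\ref{lemma:candidates-lie-in-Theta} keeps each one inside $\Theta$; crucially, \emph{all} of them lie inside $B_\Theta(0, 12\varepsilon/\ell)$. The indistinguishability lemmas (Lemmas~\ref{lemma:no-outside-expert-votes-for-an-inside-candidate}--\ref{lemma:in-between-expert-prefers-candidate-at-her-location}) are scale-free, so they apply verbatim under the substitution $\varepsilon \mapsto \varepsilon/\ell$ and certify that every expert returns the \emph{same} top-$k$ ranking under $q^1$ and under $q^2$; hence $f$ outputs the same set $S$ of $\ell$ candidates in both worlds.

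The new ingredient is the observation that \emph{every} candidate $j$ in the construction satisfies $q^1_j + q^2_j = 2\varepsilon/\ell$, which is immediate for each of the four candidate types above. No candidate exceeds quality $2\varepsilon/\ell$ in either world, while $k \ge \ell$ center candidates attain $2\varepsilon/\ell$ under $q^1$ and $k \ge \ell$ boundary/annulus candidates attain $2\varepsilon/\ell$ under $q^2$, so the best set of $\ell$ candidates has total quality exactly $2\varepsilon$ in each world. Summing the identity over the chosen set gives $\sum_{j \in S} q^1_j + \sum_{j \in S} q^2_j = \ell \cdot \tfrac{2\varepsilon}{\ell} = 2\varepsilon$, whence $\min\bigl\{\sum_{j\in S} q^1_j,\ \sum_{j \in S} q^2_j\bigr\} \le \varepsilon$. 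Therefore the cumulative regret of $S$ is at least $2\varepsilon - \varepsilon = \varepsilon$ in at least one of the two worlds, contradicting the assumed guarantee and establishing the per-ball bound.

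The routine verifications---that the construction places at most $m$ candidates and remains within $\Theta$, and that each indistinguishability lemma survives the substitution $\varepsilon \mapsto \varepsilon/\ell$---are inherited directly from Section~\ref{section:general-lower-bound}. I expect the only genuinely delicate point to be the passage from single-winner to cumulative regret: one must confirm that \emph{all} candidates are confined to a single ball, so that the conservation identity $q^1_j + q^2_j = 2\varepsilon/\ell$ applies to every candidate $f$ could possibly select, and that there are at least $\ell$ maximum-quality candidates in each world, so that the optimal $\ell$-set has value $2\varepsilon$ rather than less. Both hold because $k \ge \ell$ and because the construction never places a candidate outside $B_\Theta(0, 12\varepsilon/\ell)$.
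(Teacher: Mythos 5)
Your proposal is correct and takes essentially the same route as the paper: the paper's own proof also runs the single-winner lower-bound construction with $\varepsilon$ replaced by $\varepsilon/\ell$ and notes that the best $\ell$-set has cumulative quality $2\varepsilon$ in each world while the chosen set has cumulative quality $\min\bigl\{2\varepsilon - \frac{1}{2}\sum_{i=1}^{\ell}\norm{x_i},\ \frac{1}{2}\sum_{i=1}^{\ell}\norm{x_i}\bigr\} \leq \varepsilon$ in one of them. Your conservation identity $q^1_j + q^2_j = 2\varepsilon/\ell$ summed over the $\ell$ chosen candidates is just a cleaner packaging of that same observation.
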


\begin{proof}
The proof is identical as that of Theorem \ref{theorem:general-lower-bound}; simply replace $\varepsilon$ with $\varepsilon/\ell$ and note that in each world, the best set of $\ell$ candidates has cumulative quality $2\varepsilon$ but in one of the worlds, the set of $\ell$ candidates we choose has cumulative quality $\min\{2\varepsilon - \frac{1}{2}\sum_{i=1}^\ell \norm{x_i}, \frac{1}{2} \sum_{i=1}^\ell \norm{x_i}\} \leq \varepsilon$ (where each $\norm{x_i} \leq 4 \varepsilon / \ell$).
\end{proof}

\end{document}